\newtheorem{proposition}{Proposition}
\newtheorem{lemma}{Lemma}
\newtheorem{theorem}{Theorem}
\newtheorem{remark}{Remark}
\title{Distributed  exact multi-objective quantum search algorithm}
\begin{document}
\begin{frontmatter}
\author{Hao Li $^{\rm a,b,c,d}$}
\author{Daowen Qiu $^{\rm a,b,c,d*}$}
\author{Le Luo $^{\rm c,d,e}$}
\address{$^{\rm a}$ Institute of Quantum Computing and Software, School of Computer Science and Engineering, \\ Sun Yat-sen University, Guangzhou 510006, China;}
\address{$^{\rm b}$The Guangdong Key Laboratory of Information Security Technology, \\Sun Yat-sen University, Guangzhou, 510006, China;}
\address{$^{\rm c}$Shenzhen Research Institute of Sun Yat-Sen University, Nanshan Shenzhen, 518057, China;}
\address{$^{\rm d}$ QUDOOR Technologies Inc., China;}
\address{$^{\rm e}$ School of Physics and Astronomy, Sun Yat-sen University, Zhuhai 519082, China}
\cortext[mycorrespondingauthor]{Corresponding author.  E-mail address:  issqdw@mail.sysu.edu.cn (D. Qiu).}

\begin{abstract}

 Multi-objective search  means searching for any one of several objectives in an unstructured database.  Grover's algorithm has quadratic acceleration in multi-objection search than classical ones.  Iterated operator in  Grover's algorithm is a key element and plays an important role in amplitude amplification. 
 In this paper, we design two distributed iterated operators and therefore two new distributed Grover's algorithms are obtained with the following advantages:  (1) Compared to   Grover's algorithm and  the modified Grover’s algorithm by Long, our  distributed algorithms  require fewer qubits; (2) Compared to the distributed Grover's algorithm proposed by Qiu et al., one of our distributed  algorithms  is exact.  Of course,   both our distributed algorithms require quite quantum communication and involve a number of more complicated unitary operators as cost,  but  there still may  have  certain advantage of  physical realizability in the  Noisy Intermediate-Scale Quantum (NISQ) era.

\end{abstract}

\begin{keyword}
Distributed quantum algorithms, Grover’s algorithm, Exact quantum query algorithms
\end{keyword}
\end{frontmatter}

\section{Introduction}

\subsection{Background and related work}

Quantum algorithms have shown great power in important research problems such as factorizing large number  \cite{shor_polynomial-time_1997}, unstructured database search \cite{grover_fast_1996} and solving systems of linear equations \cite{HHL_2009}. However, the development of large-scale universal  quantum computers is still slow due to the current physics technologies. In the NISQ era \cite{preskill_quantum_2018}, one can implement quantum algorithms with fewer qubits and shallower circuit depth. Therefore,  it is  fascinating to design new quantum algorithms  that require fewer qubits and even shallower circuit depth compared with previous quantum algorithms for solving the same problems.

Distributed quantum computing is a  novel computing architecture that skillfully combines quantum computing and distributed  computing, and   designing distributed quantum algorithms is an intriguing and core problem in distributed quantum computing \cite{beals_efficient_2013,Qiu2017DQC,
LeGall2018,LeGall2019,avron_quantum_2021,Tan2022DQCSimon,
Qiu24,Xiao2023DQAShor,Xiao2023DQAkShor,Hao2023DDJ,
Hao2023DGSP,Zhou2023DEGrover,Zhou2023DBV}.  
In general, centralized quantum algorithms require a large number of qubits to be implemented in practical application, but distributed quantum algorithms  consists of  multiple computing nodes together with quantum  communication, which likely need smaller number of qubits to collaboratively complete the computational problem. Meanwhile, distributed quantum algorithms may also reduce the circuit depth compared to a single computing node, and is thus conducive to decreasing the impact of noise. Therefore, in the NISQ era, distributed quantum algorithms are more likely to be physically realized and practically applied.

We briefly review related contributions concerning distributed quantum algorithms.
In 2013, Beals et al. \cite{beals_efficient_2013} proposed an interesting parallel addressing algorithm for quantum memories and our algorithms in this paper also employ this algorithm for realizing some complicated query operators.  In 2017, Li and Qiu et al. \cite{Qiu2017DQC} proposed a distributed quantum phase estimation algorithm. In 2018, Le Gall et al. \cite{LeGall2018} investigated the  distributed quantum algorithms  in the quantum CONGEST model.  In 2019, Le Gall et al. \cite{LeGall2019} showed the superiority of  distributed quantum computing in the LOCAL model. In 2021, Avron et al. \cite{avron_quantum_2021} proposed a method 
  to decompose  an oracle into multiple sub-oracles. 
 In 2022, Qiu et al. \cite{Qiu24}  proposed a distributed Grover's algorithm that can reduce the number of qubits compared to Grover's algorithm; Tan, Xiao and Qiu et al. \cite{Tan2022DQCSimon} proposed an interesting and novel distributed quantum algorithm for  Simon's problem.

  These distributed quantum algorithms by Qiu and Tan et al \cite{Qiu24, Tan2022DQCSimon}  also presented a kind of  technical method for designing other distributed quantum algorithms, and  it further promoted and inspired that Li and Qiu et al. \cite{Hao2023DDJ,Hao2023DGSP} designed distributed exact quantum algorithms for solving Deutsch-Jozsa problem and  generalized Simon’s problem.  By  using  a different  technical method of designing distributed quantum algorithms, in 2023, Xiao and Qiu et al. proposed a new distributed Shor's algorithm \cite{Xiao2023DQAShor,Xiao2023DQAkShor}. %Then Li and Qiu et al. \cite{Hao2023DDJ,Hao2023DGSP} proposed a distributed exact quantum algorithms for  Deutsch-Jozsa problem and an exact distributed quantum algorithm for generalized Simon’s problem;  
 Then, motivated by  \cite{Qiu24}, Zhou and Qiu  et al. \cite{Zhou2023DEGrover,Zhou2023DBV} further proposed a distributed exact Grover’s algorithm   and a distributed Bernstein–Vazirani algorithm. By the way,  the distributed exact Grover's algorithm proposed by Zhou and Qiu et al. \cite{Zhou2023DEGrover} can only solve the search problem in the single-objective case, and  the complexity of computing some sub-functions may be high in this algorithm \cite{Zhou2023DEGrover}.

Grover’s algorithm was proposed   in 1996 \cite{grover_fast_1996}, which is one of the most important  algorithms in quantum computing, as it provides a quadratic speedup for the unstructured search problem.  Given a Boolean function $f:\{0,1\}^n \rightarrow \{0,1\}$, where $|\{x\in\{0,1\}^n| f(x)=1\}|=a\geq 1$, $N=2^n$, to search for an $x\in\{0,1\}^n$ with $f(x)=1$, by Grover’s algorithm we can get the objective with $\left\lfloor \pi/\left(4 \arcsin {\sqrt{a/N}}\right)\right\rfloor$ queries, and the success probability is $\sin^2{\left(\left(2\left\lfloor \pi/\left(4 \arcsin {\sqrt{a/N}}\right)\right\rfloor+1\right)\arcsin {\sqrt{a/N}}\right)}$. 
Moreover, the query complexity of Grover’s algorithm is $\Theta(\sqrt{N})$, which  has been proved to be optimal \cite{Bennett1997,Boyer1998,Zalka1999}. In fact, any classical algorithm to solve it requires $\Omega(N)$ queries. In addition, the problem of operator coherence  dynamics in Grover’s algorithm was considered (for example, see \cite{Pan2019} and the references therein).

In 2001, Grover’s algorithm with zero theoretical failure rate was considered by Long \cite{Long2001}. The modification is to replace the phase inversion in Grover's algorithm by phase rotation. The rotation angle is given analytically to be $\varphi=2\arcsin\left(\sin\left(\frac{\pi}{4J+6}\right) / \sin \theta\right)$, where $J=\left\lfloor(\pi/2-\theta)/(2\theta)\right\rfloor$, and $\theta= \arcsin {\sqrt{a/N}}$. After the $(J+1)$-th iteration, the objective state is obtained with certainty by measurement. Long gave a detailed derivation of this result in the $SO(3)$ picture  of the quantum search algorithm, which exploited  the relationship between $SU(2)$ and $SO(3)$   \cite{Long2001SO3}. In addition, there are two other methods  to make Grover's algorithm exact \cite{Brassard2000, Hoyer2000}.

The distributed Grover's algorithm  proposed by Qiu et al. \cite{Qiu24} has a linear advantage in quantum query complexity compared to Grover's algorithm, and particularly, the number of qubits in the algorithm designed by Qiu et al. \cite{Qiu24} has been reduced to $n-t$, versus $n$ in Grover's algorithm.  %Notably, their algorithm can solve the multi-objective search problem.
Also, Qiu et al. \cite{Qiu24} proposed an algorithm for constructing quantum circuits to implement the oracle corresponding to any Boolean function with conjunctive normal form (CNF), providing a physically realizable method for some quantum query algorithms in practice.

 %Zhou et al's distributed exact Grover’s algorithm decomposes the search problem into several parts \cite{Zhou2023DEGrover}. Specifically, their algorithm has the following advantages. It is as exact as the modified Grover’s algorithm by Long. The actual depth of their algorithm's circuits is less than the circuit depth of  Grover's algorithm and the modified Grover's algorithm by Long. Since the shallower circuits of their algorithm, it will be more resistant to the noise.

\subsection{Contributions and paper structure}

The distributed Grover's algorithm designed by Qiu et al. \cite{Qiu24} can solve the search problem in multi-objecitve case, and its success probability is high but not one, so it still is not completely exact.    %Although the distributed exact Grover's algorithm designed by Zhou et al. \cite{Zhou2023DEGrover} is exact, it can only solve the search problem in the case of single objecitve. 
The search problem in the multi-objective case refers to searching for one of multiple objectives in an unstructured database.
 One question now naturally arises as follows.

\textbf{The question:}  Is it possible to design a distributed quantum algorithm that can exactly solve the search problem in the case of multi-objective?

In this paper, we answer the above question affirmatively.
Specifically, we design two distributed quantum algorithms  by means of designing two  distributed iterated operators, one of which
 can exactly solve the search problem in the case of  multi-objective. Both our distributed quantum algorithms have the following advantages and disadvantages:
\begin{itemize}
\item 
Compared to  Grover's algorithm \cite{grover_fast_1996} and  the  modified Grover’s algorithm by Long \cite{Long2001},  both our distributed quantum algorithms require fewer qubits for the largest single computing node and one of our algorithms is exact.
%\item 
%Compared to the  modified Grover’s algorithm by Long \cite{Long2001}, one of our distributed algorithms requires fewer qubits for the largest single computing node.
\item 
 Compared to the distributed Grover's algorithm proposed by Qiu et al. \cite{Qiu24}, one of our distributed quantum algorithms  is exact.
%\item 
%Compared to the distributed exact Grover's algorithm proposed by Zhou et al., our algorithm can solve the search problem in the multi-objective case. 

\item Our distributed quantum algorithms require quite quantum communication and involve a number of some complicated unitary operators as cost, but none of  Grover's algorithm \cite{grover_fast_1996} and  the  modified Grover’s algorithm by Long \cite{Long2001} as well as  the distributed Grover's algorithm proposed by Qiu et al. \cite{Qiu24} need quantum communication.
\end{itemize}

The rest of this paper is organized as follows.  First,
in Section \ref{Sec2}, we recall  Grover's algorithm and the modified Grover’s algorithm by Long. Then,
in Section \ref{Sec3}, we describe  and analyze a distributed  multi-objective quantum search algorithm by designing a distributed iterated operator. After that,
in Section \ref{Sec4}, we describe  and analyze a distributed exact multi-objective quantum search algorithm by designing another distributed iterated operator.  In addition, comparisons to other relevant algorithms are made in Section \ref{Sec5},  and finally, in Section \ref{Sec6}, we conclude with a summary and mention some problems for further consideration.

\section{Preliminaries}\label{Sec2}

In the interest of readability, we briefly review  Grover's algorithm \cite{grover_fast_1996} and the modified Grover’s algorithm by Long \cite{Long2001}.

\subsection{Grover's algorithm \cite{grover_fast_1996}}

 Grover's algorithm aims to search for a goal (solution) in a very
wide range. We formally describe this problem  as follows.  

\begin{mytcbprob*}{the search problem}
\textbf{Input:} A Boolean function $f:\{0,1\}^n \rightarrow\{0,1\}$, $N=2^n$.

\textbf{Promise:} $|\{x\in\{0,1\}^n| f(x)=1\}|=a\geq 1$.

\textbf{Output:} An input $x\in\{0,1\}^n$ such that $f(x) = 1$.
\end{mytcbprob*}

Grover's algorithm performs the search quadratically faster than any classical algorithm.  In fact, any classical algorithm
finding a solution with probability at least $\frac{2}{3}$ must make
$\Omega(N)$ queries in the worst case, but Grover's  algorithm takes only
$O(\sqrt{N})$ queries, more exactly $\left\lfloor \pi/\left(4 \arcsin {\sqrt{a/N}}\right)\right\rfloor$ queries.
%Although this is not the exponential quantum advantage achieved
%as Shor's algorithm for factoring, the wide applicability of searching
%problems makes Grover's algorithm interesting and important.

\begin{algorithm}[H]
	\caption{Grover's algorithm}
	\begin{algorithmic}\label{Grover.alg}
	\STATE \textbf{Input}: A function $f:\{0,1\}^n \rightarrow\{0,1\}$ with $|\{x\in\{0,1\}^n| f(x)=1\}|=a\geq 1$, $N=2^n$.
	\STATE \textbf{Output}: The string 
	$x$ such that $f(x)=1$ with probability $\sin^2{\left(\left(2\left\lfloor \pi/\left(4 \arcsin {\sqrt{a/N}}\right)\right\rfloor+1\right)\right.}$ ${\left.\arcsin {\sqrt{a/N}}\right)}$.
	
\STATE \textbf{Procedure:}

	\STATE \textbf{1.}	 $H^{\otimes n}|0\rangle^{\otimes n}=\frac{1}{\sqrt{2^n}}\sum\limits_{x\in\{0 , 1\}^n}|x\rangle$.
	
	 \STATE \textbf{2.}	$G$ is performed with  $\left\lfloor \pi/\left(4 \arcsin {\sqrt{a/N}}\right)\right\rfloor$ times, where $G=-H^{\otimes n}Z_0H^{\otimes n}Z_f$, 
	 
	 \qquad\qquad$Z_f|x\rangle=(-1)^{f(x)}|x\rangle$, $Z_0|x\rangle=\left\{
\begin{array}{rcl}
-|x\rangle, & &x=0^n ;\\
|x\rangle, & & x\neq 0^n.
\end{array} \right.$
	
	\STATE \textbf{3.}	Measure the resulting state.
	\end{algorithmic}
\end{algorithm}

%We describe the Grover's algorithm in the following Figure \ref{Grover.png}.

%\begin{figure}[H]
%    \centering
%    \includegraphics[scale=0.6]{Grover.png}
%    \caption{Grover's algorithm }\label{Grover.png}
%\end{figure}

%Next, we recall the  analyses of  Grover's algorithm, i.e., Algorithm \ref{Grover.alg}.
\begin{proposition}[See \cite{Qiu24}]\label{pro_Grover}
Let Boolean function $f:\{0,1\}^n \rightarrow\{0,1\}$ with $|\{x\in\{0,1\}^n| f(x)=1\}|=a\geq 1$, $N=2^n$. Then the  string $x$ such that $f(x)=1$ can be  obtained by  Algorithm  \ref{Grover.alg} querying $\left\lfloor \pi/\left(4 \arcsin {\sqrt{a/N}}\right)\right\rfloor$ times  with probability $\sin^2{\left(\left(2\left\lfloor \pi/\left(4 \arcsin {\sqrt{a/N}}\right)\right\rfloor+1\right)\arcsin {\sqrt{a/N}}\right)}$.
\end{proposition}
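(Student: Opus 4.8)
The plan is to reduce the evolution in Algorithm \ref{Grover.alg} to a two-dimensional rotation, following the standard geometric analysis of Grover's algorithm. First I would introduce the normalized ``solution'' and ``non-solution'' states
\[
|\beta\rangle = \frac{1}{\sqrt{a}}\sum_{x:\,f(x)=1}|x\rangle, \qquad
|\alpha\rangle = \frac{1}{\sqrt{N-a}}\sum_{x:\,f(x)=0}|x\rangle,
\]
which are orthonormal, and set $\theta = \arcsin\sqrt{a/N}$, so $\sin\theta = \sqrt{a/N}$ and $\cos\theta = \sqrt{(N-a)/N}$. A direct computation shows that the state produced in Step 1 satisfies $|\psi\rangle := H^{\otimes n}|0\rangle^{\otimes n} = \cos\theta\,|\alpha\rangle + \sin\theta\,|\beta\rangle$, so it lies in the two-dimensional real subspace $\mathcal{H}_2 = \mathrm{span}\{|\alpha\rangle,|\beta\rangle\}$.

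Next I would verify that $G = -H^{\otimes n}Z_0 H^{\otimes n}Z_f$ preserves $\mathcal{H}_2$ and acts there as a rotation. Since $Z_f = I - 2\sum_{x:\,f(x)=1}|x\rangle\langle x|$ fixes $|\alpha\rangle$ and sends $|\beta\rangle \mapsto -|\beta\rangle$, it restricts on $\mathcal{H}_2$ to the reflection $2|\alpha\rangle\langle\alpha| - I$ about the line spanned by $|\alpha\rangle$. Likewise $Z_0 = I - 2|0^n\rangle\langle 0^n|$ gives $-H^{\otimes n}Z_0 H^{\otimes n} = 2|\psi\rangle\langle\psi| - I$, the reflection about $|\psi\rangle$, which also preserves $\mathcal{H}_2$. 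The composition of two reflections of a plane is a rotation by twice the angle between the two mirror lines; as $|\psi\rangle$ makes angle $\theta$ with $|\alpha\rangle$, the restriction of $G$ to $\mathcal{H}_2$ is a rotation by $2\theta$ in the direction from $|\alpha\rangle$ toward $|\beta\rangle$.

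Then a short induction on the number of iterations shows that after $k$ applications of $G$ to $|\psi\rangle$ the state equals $\cos\big((2k+1)\theta\big)|\alpha\rangle + \sin\big((2k+1)\theta\big)|\beta\rangle$. Measuring in the computational basis (Step 3) returns some $x$ with $f(x)=1$ with probability equal to the squared norm of the $|\beta\rangle$-component, namely $\sin^2\big((2k+1)\theta\big)$, since every computational basis state appearing in $|\beta\rangle$ is a solution. Setting $k = \lfloor \pi/(4\theta)\rfloor = \lfloor \pi/(4\arcsin\sqrt{a/N})\rfloor$, which is precisely the number of queries made in Step 2, yields the claimed success probability $\sin^2\big((2\lfloor \pi/(4\arcsin\sqrt{a/N})\rfloor+1)\arcsin\sqrt{a/N}\big)$.

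I do not expect a serious obstacle here: the only point needing care is the bookkeeping that establishes $G$-invariance of $\mathcal{H}_2$ together with the fact that $G$ acts there as a genuine rotation (with the correct orientation), rather than a rotation composed with a reflection. Once the ``two reflections compose to a rotation'' identity is in place, the remainder is a one-line trigonometric induction.
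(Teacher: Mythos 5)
Your proposal is correct and follows essentially the same route as the paper: reduce the dynamics to the two-dimensional subspace spanned by the solution and non-solution superpositions, show $G$ acts there as a rotation by $2\theta$ with $\theta=\arcsin\sqrt{a/N}$, and conclude by induction that $G^{k}|\psi\rangle=\cos((2k+1)\theta)|\alpha\rangle+\sin((2k+1)\theta)|\beta\rangle$, giving the stated success probability at $k=\lfloor\pi/(4\theta)\rfloor$. The only cosmetic difference is that you obtain the rotation via the ``two reflections'' identity while the paper verifies the same action by direct algebraic computation of $G$ on the basis $\{|A\rangle,|B\rangle\}$.
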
 

\iffalse
\begin{proof}
Denote
\begin{align}
&A=\{x\in\{0 , 1\}^n | f(x)=1\} 
,\\
&B=\{x\in\{0 , 1\}^n | f(x)=0\}
,\\
&|A\rangle=\frac{1}{\sqrt{|A|}}\sum\limits_{x\in A}|x\rangle,\\
&|B\rangle=\frac{1}{\sqrt{|B|}}\sum\limits_{x\in B}|x\rangle.
\end{align}

Then after step 1 of  Algorithm \ref{Grover.alg}, the quantum state is
\begin{equation}
H^{\otimes n}|0\rangle^{\otimes n}=\sqrt{\frac{a}{N}}|A\rangle+\sqrt{\frac{b}{N}}|B\rangle\triangleq|h\rangle,
\end{equation}
where $a=|A|$, $b=|B|$, $N=2^n$.

It is easy to know that $Z_0=I-2|0^n\rangle\langle 0^n|$, so
\begin{align}
H^{\otimes n}Z_0H^{\otimes n}&=H^{\otimes n}(I-2|0^n\rangle\langle 0^n|)H^{\otimes n}\\
&=I^{\otimes n}-2|h\rangle\langle h|.
\end{align}

Alternatively, $Z_f$ may be denoted as:
\begin{align}
Z_{f}=I^{\otimes n}-2\Ket{A}\Bra{A}.
\end{align}

 Let $\sin{\theta}=\sqrt{\frac{a}{N}}$, $\cos{\theta}=\sqrt{\frac{b}{N}}$.
Then we have
\begin{equation}
G^k|h\rangle=\cos{((2k+1)\theta)}|B\rangle+\sin{((2k+1)\theta)}|A\rangle.
\end{equation}
Our goal is to make $\sin{((2k+1)\theta)}\approx 1$,  that is $(2k+1)\theta\approx\frac{\pi}{2}$, so $k\approx\frac{\pi}{4\theta}-\frac{1}{2}$.

Since $\theta=\sin^{-1}{\sqrt{\dfrac{a}{N}}}\approx    \sqrt{\dfrac{a}{N}} $,  we obtain $k=\left\lfloor \frac{\pi}{4}\sqrt{\frac{N}{a}} \right\rfloor$, and  the probability of success is:
\begin{equation}
\sin^2{\left(\left(2\lfloor \pi\sqrt{N}/4\sqrt{a} \rfloor+1\right)\sin^{-1}\left(\sqrt{a/N}\right)\right)},
\end{equation}
where the probability is close to $1$.
\end{proof}

\fi

\subsection{The modified Grover’s algorithm by Long \cite{Long2001}}
In 2001, Long improved Grover’s algorithm and proposed a modified version of Grover’s algorithm, which will acquire the goal state with certainty. Its main idea is to extend the iterated operator  $G$ to operator $L$. 
The adjustment is carried out by two steps: (1) substitute phase rotation for phase inversion; (2) iterate $J+1$ times for operator $L$, where $J=\left\lfloor(\pi/2-\theta)/(2\theta)\right\rfloor$ and $\theta= \arcsin {\sqrt{a/N}}$. %Actually, $J$ can be any integer equal to or greater than $\lfloor(\pi/2-\theta)/(2\theta)\rfloor$.
%In the following, we briefly review the modified Grover’s algorithm by Long in Algorithm \ref{Long.alg}. 
%The whole quantum circuit is shown in Fig. 3.
%Obviously, the modified Grover’s algorithm by Long is a Grover’s algorithm when $\phi=\pi$.

\begin{algorithm}[H]
\caption{The modified Grover’s algorithm by Long}
\label{Long.alg}
\begin{algorithmic}
\STATE \textbf{Input}: A function $f:\{0,1\}^n \rightarrow\{0,1\}$ with $|\{x\in\{0,1\}^n| f(x)=1\}|=a\geq 1$, $N=2^n$.
\STATE \textbf{Initialization}: $\theta= \arcsin {\sqrt{a/N}}$, $J=\lfloor(\pi/2-\theta)/(2\theta)\rfloor$, $\varphi=2\arcsin\left(\sin\left(\frac{\pi}{4J+6}\right) / \sin \theta\right)$.
\STATE \textbf{Output}: The string $x$ such that $f(x)=1$ with certainty.
\STATE \textbf{Procedure:}
\STATE \textbf{1.} $H^{\otimes n}|0\rangle^{\otimes n}=\frac{1}{\sqrt{2^n}}\sum\limits_{x\in\{0 , 1\}^n}|x\rangle$.
\STATE \textbf{2.}  $L$ is   performed with  $J+1$ times, where $L= -H^{\otimes n} R_0(\varphi) H^{\otimes n}R_f(\varphi)$, %where $\boldsymbol{r_0} = I^{\otimes n} + (e^{\mathrm{i}\varphi}-1)(\vert 0\rangle\langle0\vert )^{\otimes n}$ and $\boldsymbol{r_{\ket{A'}}} = I^{\otimes n} + (e^{\mathrm{i}\varphi}-1)\vert A\rangle\langle A\vert $, $|A\rangle=\frac{1}{\sqrt{|A|}}\sum\limits_{x\in A}|x\rangle$, $A=\{x\in\{0 , 1\}^n | f(x)=1\}$.

 \qquad\qquad$R_f(\varphi)|x\rangle=e^{\mathrm{i}\varphi\cdot f(x)}|x\rangle$, $R_0(\varphi)|x\rangle=\left\{
\begin{array}{rcl}
e^{\mathrm{i}\varphi}|x\rangle, & &x=0^n ;\\
|x\rangle, & & x\neq 0^n.
\end{array} \right.$

\STATE \textbf{3.} Measure the resulting state.
\end{algorithmic}
\end{algorithm}

\begin{proposition}[See \cite{Long2001}]\label{pro_Long}
Let Boolean function $f:\{0,1\}^n \rightarrow\{0,1\}$ with $|\{x\in\{0,1\}^n| f(x)=1\}|=a\geq 1$, $N=2^n$. Then the string $x$ with $f(x)=1$ can be exactly obtained by  Algorithm \ref{Long.alg} querying $J+1$ times, where  $J=\lfloor(\pi/2-\theta)/(2\theta)\rfloor$, $\theta= \arcsin {\sqrt{a/N}}$.
\end{proposition}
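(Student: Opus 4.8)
The plan is to follow Long's $SO(3)$ picture of the quantum search algorithm \cite{Long2001,Long2001SO3}, which reduces the whole evolution to a single rotation on a Bloch sphere. Assume $1\le a<N$ (the case $a=N$ is trivial, since then every string is a solution). First I would restrict to the two‑dimensional subspace $\mathcal{H}_2=\mathrm{span}\{|A\rangle,|B\rangle\}$, where $|A\rangle=\frac{1}{\sqrt{a}}\sum_{f(x)=1}|x\rangle$ and $|B\rangle=\frac{1}{\sqrt{N-a}}\sum_{f(x)=0}|x\rangle$; writing $\theta=\arcsin\sqrt{a/N}$, the initial state is $|h\rangle=H^{\otimes n}|0\rangle^{\otimes n}=\sin\theta\,|A\rangle+\cos\theta\,|B\rangle\in\mathcal H_2$. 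On $\mathcal H_2$ one has $R_f(\varphi)=I-(1-e^{\mathrm i\varphi})|A\rangle\langle A|$ and $H^{\otimes n}R_0(\varphi)H^{\otimes n}=I-(1-e^{\mathrm i\varphi})|h\rangle\langle h|$, so $\mathcal H_2$ is invariant under $L$ and it suffices to analyze the $2\times2$ unitary $L|_{\mathcal H_2}$ acting on the vectors $|h\rangle$ and $|A\rangle$.

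Next I would rewrite $L|_{\mathcal H_2}$, up to a global phase, as an element of $SU(2)$. A Pauli decomposition gives $R_f(\varphi)=e^{\mathrm i\varphi/2}\bigl(\cos\tfrac{\varphi}{2}\,I+\mathrm i\sin\tfrac{\varphi}{2}\,\hat z\cdot\vec\sigma\bigr)$ and $H^{\otimes n}R_0(\varphi)H^{\otimes n}=e^{\mathrm i\varphi/2}\bigl(\cos\tfrac{\varphi}{2}\,I+\mathrm i\sin\tfrac{\varphi}{2}\,\hat m\cdot\vec\sigma\bigr)$, where $\hat z,\hat m$ are the Bloch vectors of $|A\rangle,|h\rangle$, whose mutual angle is $\pi-2\theta$. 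Hence, up to a global phase, $L$ is a product of two Bloch‑sphere rotations through $\varphi$ about $\hat z$ and $\hat m$, i.e. a rotation through some angle $2\alpha$ about some axis $\hat n$. Reading off the identity component of this $SU(2)$ product yields $\cos\alpha=1-2\sin^2\tfrac{\varphi}{2}\sin^2\theta$, that is $\sin\tfrac{\alpha}{2}=\sin\tfrac{\varphi}{2}\sin\theta$; substituting $\varphi=2\arcsin\bigl(\sin\tfrac{\pi}{4J+6}/\sin\theta\bigr)$ collapses this to $\alpha=\tfrac{\pi}{2J+3}$. (Here I would also note that $\varphi$ is well defined because $J=\lfloor(\pi/2-\theta)/(2\theta)\rfloor$ forces $\theta>\tfrac{\pi}{4J+6}$, hence $\sin\tfrac{\pi}{4J+6}<\sin\theta$.)

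Then comes the geometric heart of the argument. I would compute $\hat n$ explicitly from the $SU(2)$ product and verify two facts: $\hat n\cdot\hat m=\hat n\cdot\hat z$, so $|h\rangle$ and $|A\rangle$ lie on one and the same orbit circle of the rotation; and the angular position of $\hat z$ relative to $\hat m$ along that circle equals $\pi-\alpha$. Since each application of $L$ advances the state by the angle $2\alpha$ along that circle, in the orientation inherited by continuity from the ordinary Grover iterate ($\varphi=\pi$, where $L=G$), after $J+1$ iterations the state has advanced by $(J+1)\cdot 2\alpha=\tfrac{(2J+2)\pi}{2J+3}=\pi-\tfrac{\pi}{2J+3}=\pi-\alpha$, which is exactly the position of $\hat z$; moreover the opposite orientation is ruled out, since it would require $(2J+1)\alpha=\pi$, which $\alpha=\tfrac{\pi}{2J+3}$ does not satisfy. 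Therefore $L^{J+1}|h\rangle=e^{\mathrm i\beta}|A\rangle$ for some phase $\beta$, so the computational‑basis measurement in step 3 returns an $x$ with $f(x)=1$ with certainty; the number of queries to $f$ is the number of applications of $R_f(\varphi)$, namely $J+1$.

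The hard part will be the last step: extracting the axis $\hat n$ cleanly and---above all---pinning down the \emph{orientation} of the rotation while tracking the accumulated global phases, so that the per‑step advance is unambiguously $+2\alpha$ toward $|A\rangle$ rather than a backward or wrapped‑around motion. Once the axis and orientation are settled, the decisive arithmetic identity $(J+1)\cdot\tfrac{2\pi}{2J+3}=\pi-\tfrac{\pi}{2J+3}$ finishes the proof.
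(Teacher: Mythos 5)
Your proposal is correct and follows essentially the same route as the paper: the $SO(3)$/Bloch-sphere picture of Long, in which $L$ restricted to $\mathrm{span}\{\ket{A},\ket{B}\}$ is a rotation by $4\arcsin\left(\sin\left(\frac{\varphi}{2}\right)\sin\theta\right)$ about an axis on which $\ket{h}$ and $\ket{A}$ have equal projections, with the angular separation $\omega=\pi-\frac{\pi}{2J+3}$ matched to $J+1$ steps (this is exactly the argument the paper develops via Lemmas \ref{lem_MQ}--\ref{cos_omega} for the distributed operator $Q$ and invokes from \cite{Long2001,Long2001SO3} here). Your added checks --- that $\varphi$ is well defined because $(2J+3)\theta\ge\pi/2$, and the explicit handling of orientation --- are worthwhile details the paper leaves implicit.
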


\begin{remark}
The number of queries in  Algorithm \ref{Grover.alg} is $\left\lfloor \pi/\left(4 \arcsin {\sqrt{a/N}}\right)\right\rfloor$,  the number of queries in Algorithm \ref{Long.alg} is $%J+1=\lfloor(\pi/2-\theta)/(2\theta)\rfloor+1=
\left\lfloor \pi/\left(4 \arcsin {\sqrt{a/N}}\right)-1/2\right\rfloor+1$. Let $\pi/\left(4 \arcsin {\sqrt{a/N}}\right)=\left[ \pi/\left(4 \arcsin {\sqrt{a/N}}\right)\right]+\left\{\pi/\left(4 \arcsin {\sqrt{a/N}}\right)\right\}$, where $\left\{\pi/\left(4 \arcsin {\sqrt{a/N}}\right)\right\}$ denotes the fractional part of $\pi/\left(4 \arcsin {\sqrt{a/N}}\right)$.  If $0\leq \left\{\pi/\left(4 \arcsin {\sqrt{a/N}}\right)\right\}<1/2$, the number of queries in Algorithm \ref{Long.alg} is  is equal to the number of queries in Algorithm \ref{Grover.alg}. If $1/2\leq \left\{\pi/\left(4 \arcsin {\sqrt{a/N}}\right)\right\}<1$, the number of queries in Algorithm \ref{Long.alg} is  one more time than the number of queries in Algorithm \ref{Grover.alg}.
\end{remark}

\section{Distributed  multi-objective quantum search algorithm}\label{Sec3}

In this section, we first describe  the design of  our distributed  multi-objective quantum search algorithm, and then give its correctness, space complexity and communication complexity analyses.

\subsection{Design of algorithm}

Let Boolean function $f:\{0,1\}^n\rightarrow\{0,1\}$, and suppose  $|\{x\in\{0,1\}^n| f(x)=1\}|=a\geq 1$. First, as the distributed Grover's algorithm designed by Qiu et al. \cite{Qiu24}, 
 $f$ is divided into $2^t$  subfunctions $f_w:\{0,1\}^{n-t}\rightarrow\{0,1\}$, where $f_w(u)=f(uw)$, $u \in \{0,1\}^{n-t}$, $w \in \{0,1\}^t$, $1\leq t<n$. 
The search problem in distributed scenario is to find  an input $x\in\{0,1\}^n$ such that $f(x) = 1$  by querying the $2^t$ subfunctions $f_w$.

In the following, we  describe the  operators and notations used in our algorithm, i.e., Algorithm \ref{DMA.alg}.
%In this paper, we treat bit strings and their corresponding binary integers as the same. 

Define  the operator $Z_{f}': \{0,1\}^{n+2^t+1}\rightarrow  \{0,1\}^{n+2^t+1}$  as:
\begin{equation}\label{Z_{f}'}
Z_{f}'\Ket{u,w,a,b}=(-1)^{b\oplus f_w(u)}\Ket{u,w,a,b},
\end{equation}
where   $u\in\{0,1\}^{n-t}$, $w\in\{0,1\}^{t}$, $a=a_{0^t}a_{0^{t-1}1}\cdots a_{1^t}\in \{0,1\}^{2^t}$ and $b\in \{0,1\}$.

Intuitively, the effect of  $Z_f'$ on the quantum state is similar to  $Z_f$ defined above. 

Denote
\begin{align}
&A=\{x\in\{0 , 1\}^n | f(x)=1\} 
,\\
&B=\{x\in\{0 , 1\}^n | f(x)=0\}
,\\
&|A'\rangle=\frac{1}{\sqrt{|A|}}\sum\limits_{x\in A}|x\rangle\Ket{0}^{\otimes (2^t+1)},\\
&|B'\rangle=\frac{1}{\sqrt{|B|}}\sum\limits_{x\in B}|x\rangle\Ket{0}^{\otimes (2^t+1)}.
\end{align}

%Then after step 1 of  Algorithm \ref{DMA.alg}, the quantum state is
%\begin{equation}
%\begin{split}
%\left(H^{\otimes n}\otimes I^{\otimes \left(2^t+1\right)}\right)|0\rangle^{\otimes (n+2^t+1)}
%=\sqrt{\frac{a}{N}}\Ket{A'}+\sqrt{\frac{b}{N}}\Ket{B'},
%\end{split}
%\end{equation}
%where $a=|A|$, $b=|B|$, $N=2^n$.

Denote $a=|A|$, $b=|B|$, $N=2^n$, $\sin\theta=\sqrt{\frac{a}{N}}$,  $\cos\theta=\sqrt{\frac{b}{N}}$, and  %the quantum state $\Ket{h}$ as: 
\begin{equation}\label{h}
\begin{split}
\Ket{h}=\sin\theta\Ket{A'}+\cos\theta\Ket{B'}.
\end{split}
\end{equation}

Define the function $ \rm OR$ as:
\begin{equation}
\begin{split}
		 {\rm OR}(x)=
		\begin{cases}
		1, & |x|\geq1\text{;}\\
		0,& |x|=0\text{,}
          \end{cases}
\end{split}
\end{equation}
where $|x|$ is the Hamming weight of $x$.

Define the function $g: \{0,1\}^{n+2}\rightarrow \{0,1\}$ as: 
\begin{equation}\label{g(u,w,b,c)}
\begin{split}
g(u,w,b,c)=c\oplus\left(\lnot {\rm OR}(w(b\oplus {\rm OR}(u)))\right),
\end{split}
\end{equation}
where  $u\in\{0,1\}^{n-t}$, $w\in\{0,1\}^{t}$, $b\in \{0,1\}$ and  $c\in \{0,1\}$.

Denote   %the quantum state $\Ket{h^{\perp}}$ as: 
\begin{equation}\label{h^perp}
\begin{split}
\Ket{h^{\perp}}=\sum_{\substack{bcd\neq 0^{2^t+1}\\ g(u,w,b,c)=1 }}H^{\otimes n}\Ket{u,w}\Ket{b,c,d}\Bra{u,w}H^{\otimes n}\Bra{b,c,d},
\end{split}
\end{equation}
where   $u\in\{0,1\}^{n-t}$, $w\in\{0,1\}^{t}$, $b\in \{0,1\}$, $c\in \{0,1\}$  and  $d\in \{0,1\}^{2^t-1}$.

%Define the function ${\rm OR}: \{0,1\}^{t+1}\rightarrow \{0,1\}$ as: 
%\begin{equation}
%\begin{split}
%{\rm OR}(wb)=w_1\lor w_2\lor \cdots \lor w_{t}\lor b,
%\end{split}
%\end{equation}
%where  $w=w_1 w_2\ldots w_{t}\in \{0,1\}^{t}$ and $b\in \{0,1\}$.

Define the operator $Z_{0,H}': \{0,1\}^{n+2^t+1}\rightarrow  \{0,1\}^{n+2^t+1}$  as:
\begin{equation}\label{Z_{0,H}'}
\begin{split}
Z_{0,H}'=
I^{\otimes(n+2^t+1)}-2\Ket{h}\Bra{h}-2\Ket{h^{\perp}}\Bra{h^{\perp}}.
\end{split}
\end{equation}

Define the operator $G': \{0,1\}^{n+2^t+1}\rightarrow  \{0,1\}^{n+2^t+1}$ as:
\begin{equation}\label{G'}
\begin{split}
G'=-Z_{0,H}'Z_{f}'.
\end{split}
\end{equation}

\begin{figure*}[h]
  \centerline{\includegraphics[width=\textwidth]{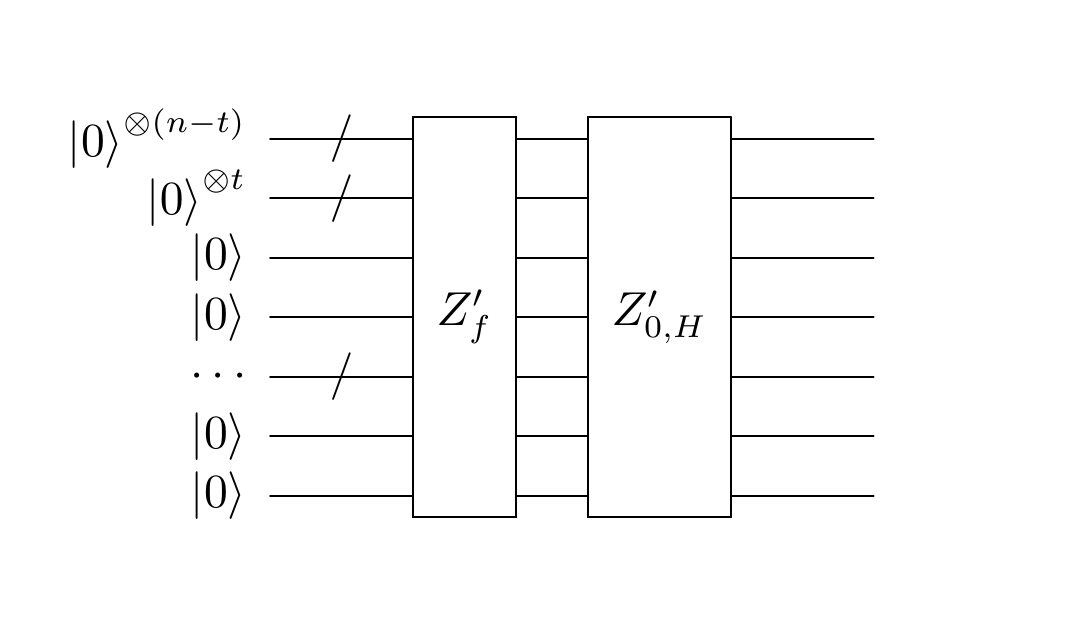}}
  \caption{The circuit for  operator $-G'$.}
\end{figure*}

Indeed, both  $Z_{f}'$ and $Z_{0,H}'$ can be decomposed into a number of smaller and physically realizable operators. 

In the following, we introduce some  notations and operators related to the decomposition representation of  $Z_{f}'$.
Let $[N(t)]$ represent the set of integers $\{0,1,\cdots, 2^t-1\}$, and let ${\rm BI}:\{0,1\}^t \rightarrow [N(t)]$ be the function to convert a binary string of $t$ bits to an equal decimal integer. For any operator $A_w$ with $w\in \{0,1\}^t$, let
%\begin{align}\label{A_w}
$\prod\nolimits_{w\in\{0,1\}^t}A_w= A_{1^t}A_{1^{t-1}0}{A_{1^{t-2}01}}\cdots{A_{0^{t-1}1}} A_{0^t}$.%,\\
%\prod'\nolimits_{w\in\{0,1\}^t}A_w&= A_{0^t}A_{0^{t-1}1}{A_{0^{t-2}10}}\cdots{A_{1^{t-1}0}}A_{1^t}.
%\end{align}

Define the  query operators $O_{f_w}: \{0,1\}^{n+{\rm BI}(w)+1}\rightarrow \{0,1\}^{n+{\rm BI}(w)+1}$
as:
\begin{equation}\label{O_{f_w}}
O_{f_w}\ket{u}\ket{b_w}\ket{c}=\ket{u}\ket{b_w}\ket{c\oplus f_w(u)},
\end{equation}
where $u\in\{0,1\}^{n-t}$, $w\in \{0,1\}^t$, $b_w\in\{0,1\}^{t+{\rm BI}(w)}$ and $c\in \{0,1\}$.

\begin{figure*}[h]
  \centerline{\includegraphics[width=0.5\textwidth]{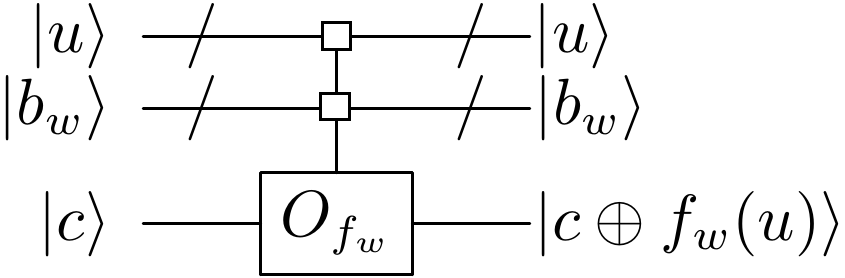}}
  \caption{The circuit for  operator $O_{f_w}$.}
\end{figure*}

Define the   operator $O^*_{f}: \{0,1\}^{n+2^t+1}\rightarrow \{0,1\}^{n+2^t+1}$
as:
\begin{equation}\label{O^*_{f}}
O^*_{f}=\prod\nolimits_{w\in\{0,1\}^t}\left(O_{f_{w}}\otimes I^{\otimes \left(2^t-{\rm BI}(w)\right)}\right).
\end{equation}

From Eq. (\ref{O_{f_w}}) and Eq. (\ref{O^*_{f}}), we have:
\begin{equation}\label{O_{f_w}A_w}
\begin{split}
O^*_{f}\Ket{u,w}\Ket{a}\Ket{b}=&\left(\prod\nolimits_{w\in\{0,1\}^t}\left(O_{f_{w}}\otimes I^{\otimes \left(2^t-{\rm BI}(w)\right)}\right)\right)\Ket{u,w}\Ket{a}\Ket{b}\\
=&\left(O_{f_{1^t}}\otimes I\right)\cdots\left(O_{f_{0^{t-1}1}}\otimes I^{\otimes \left(2^t-1\right)}\right)\left(O_{f_{0^{t}}}\otimes I^{\otimes 2^t}\right)\Ket{u,w}\Ket{a_{0^t}}\Ket{{a_{0^{t-1}1}}}\cdots\Ket{a_{1^t}}\Ket{b}\\
=&\left(O_{f_{1^t}}\otimes I\right)\cdots\left(O_{f_{0^{t-1}1}}\otimes I^{\otimes \left(2^t-1\right)}\right)\Ket{u,w}\ket{a_{0^t}\oplus f_{0^t}(u)}\Ket{a_{0^{t-1}1}}\cdots\Ket{a_{1^t}}\Ket{b}\\
=&\left(O_{f_{1^t}}\otimes I\right)\cdots\left(O_{f_{0^{t-1}1}}\otimes I^{\otimes \left(2^t-1\right)}\right)\Ket{u,w}\ket{a_{0^t}\oplus f_{0^t}(u)}\Ket{a_{0^{t-1}1}}\cdots\Ket{a_{1^t}}\Ket{b}\\
=&\Ket{u,w}\ket{a_{0^t}\oplus f_{0^t}(u)}\Ket{a_{0^{t-1}1}\oplus f_{0^{t-1}1}(u)}\cdots\Ket{a_{1^t}\oplus f_{1^t}(u)}\Ket{b},
\end{split}
\end{equation}
where   $u\in\{0,1\}^{n-t}$, $w\in\{0,1\}^{t}$, $a=a_{0^t}a_{0^{t-1}1}\cdots a_{1^t}\in \{0,1\}^{2^t}$ and $b\in \{0,1\}$.

\begin{figure*}[h]
  \centerline{\includegraphics[width=0.8\textwidth]{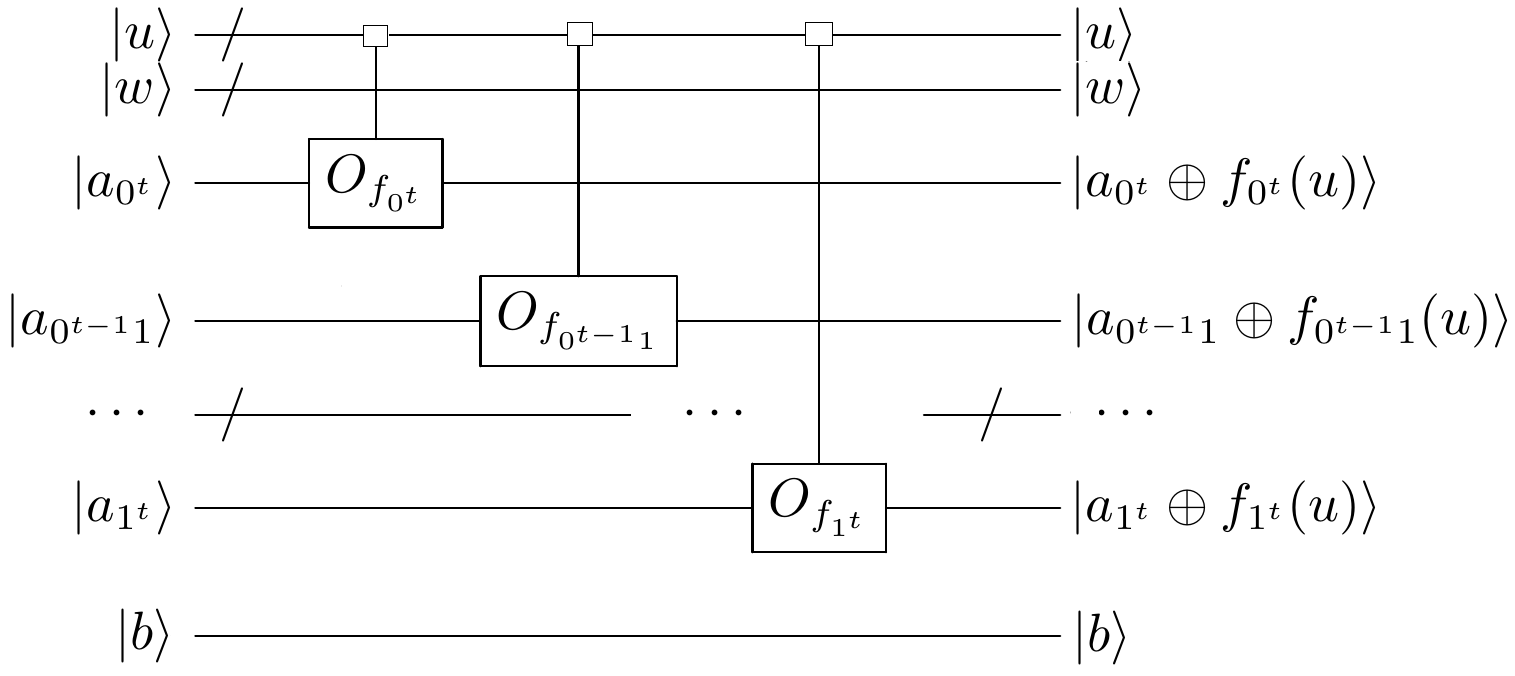}}
  \caption{The circuit for  operator $O^*_{f}$.}
\end{figure*}

Define the operator $U: \{0,1\}^{2^t+t+1}\rightarrow \{0,1\}^{2^t+t+1}$  as:
\begin{equation}\label{U}
\begin{split}
U\ket{i}\left(\bigotimes_{j\in\{0,1\}^{t}}\ket{a_j}\right)|b\rangle
=\ket{i}\left(\bigotimes_{j\in\{0,1\}^{t}}\ket{a_j}\right)\Ket{b\oplus a_i},
\end{split}
\end{equation}
where  $i\in\{0,1\}^{t}$, $a_j\in \{0,1\}$, $b\in \{0,1\}$, and
$\bigotimes\nolimits_{j\in\{0,1\}^{t}}\ket{a_j}= \ket{a_{0^t}}\ket{a_{0^{t-1}1}}\cdots \ket{a_{1^t}}$.

Intuitively, the effect of  $U$ acting on a quantum state is to query $a_i$ according to the index $i$.

\begin{figure*}[h]
  \centerline{\includegraphics[width=0.65\textwidth]{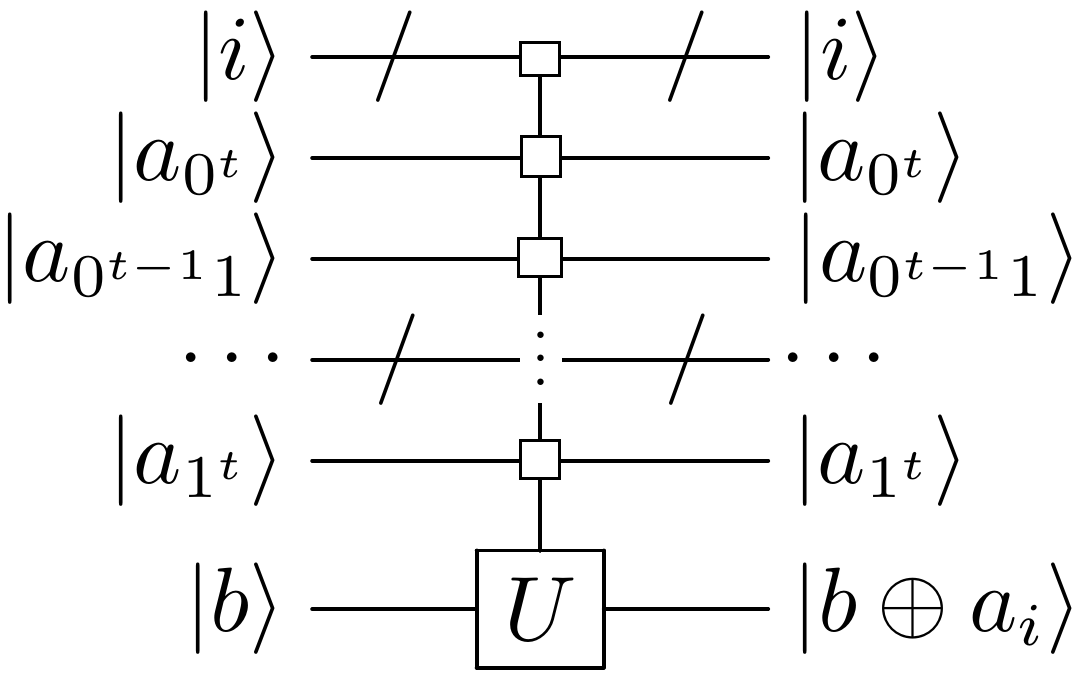}}
  \caption{The circuit for  operator $U$.}
\end{figure*}

Define the operator $F: \{0,1\}^{n+2^t+1}\rightarrow  \{0,1\}^{n+2^t+1}$  as:
\begin{equation}\label{F}
\begin{split}
F=\left(I^{\otimes(n-t)}\otimes U\right)O^*_{f}.
\end{split}
\end{equation}

Intuitively, the effect of  $F$ on a quantum state is to implement a centralized query operator. 

\begin{figure}[H]
  \centerline{\includegraphics[width=0.7\textwidth]{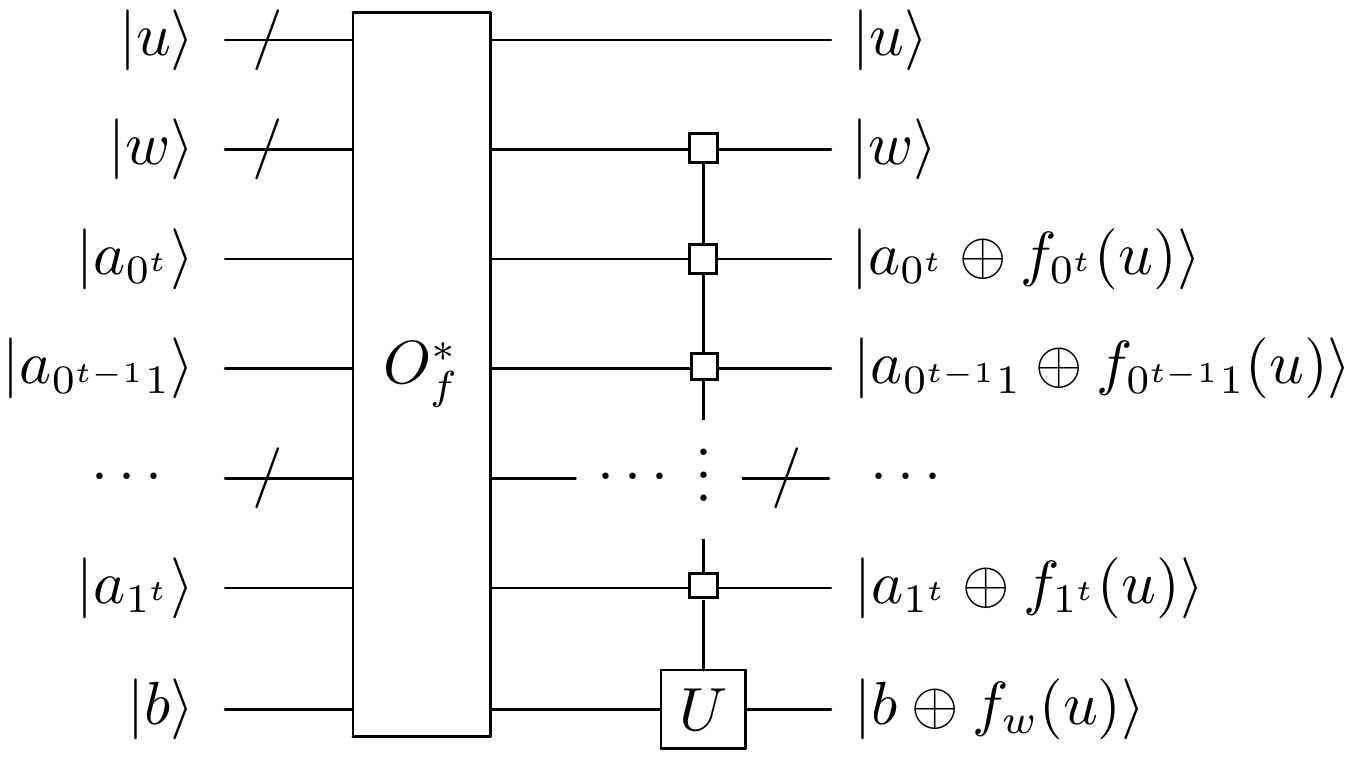}}
  \caption{The circuit for  operator $F$.}
\end{figure}

 $Z_{f}'$ defined in Eq. (\ref{Z_{f}'}) can be decomposed into a number of operators represented by the following equation:
\begin{equation}\label{Z_{f}'DE}
\begin{split}
Z_{f}'=F^{\dagger}\left(I^{\otimes \left(2^t+n\right)}\otimes Z\right)F,
\end{split}
\end{equation}
which will be proved in Proposition \ref{pp_Zf} of Subsection \ref{Correctness analysis of  algorithm}.

\begin{figure}[H]
  \centerline{\includegraphics[width=0.7\textwidth]{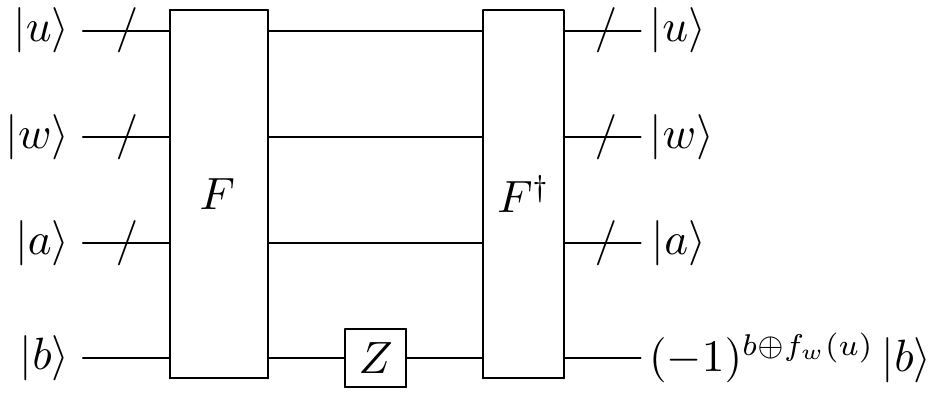}}
  \caption{The circuit for  operator $Z_{f}'$.}
\end{figure}

Next we introduce some operators and notations related to the decomposition representation of  $Z_{0,H}'$.

Define the operator $V_1: \{0,1\}^{n+1}\rightarrow \{0,1\}^{n+1}$   as: 
\begin{equation}\label{V1}
\begin{split}
V_1\ket{u}\ket{w}\ket{b}=\ket{u}\ket{w}\ket{b\oplus {\rm OR}(u)},
\end{split}
\end{equation}
where  $u\in\{0,1\}^{n-t}$, $w\in \{0,1\}^{t}$ and $b\in \{0,1\}$.

\begin{figure*}[h]
  \centerline{\includegraphics[width=0.5\textwidth]{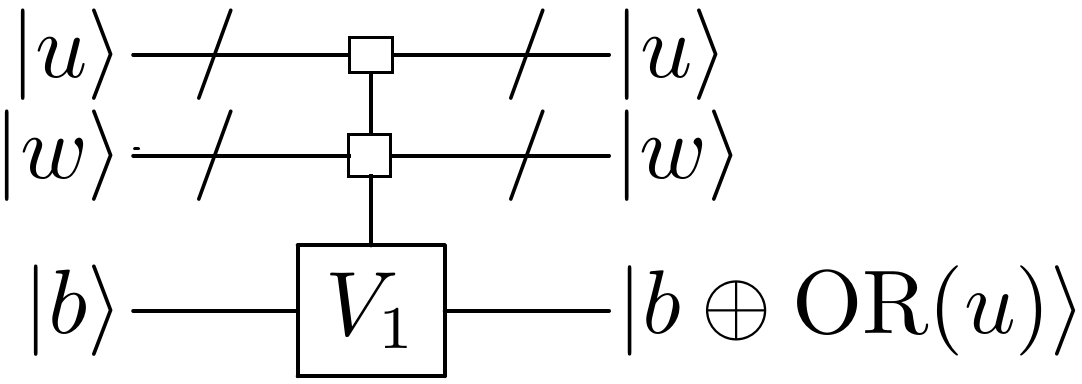}}
  \caption{The circuit for  operator $V_1$.}
\end{figure*}

Define the operator $V_2: \{0,1\}^{t+2}\rightarrow \{0,1\}^{t+2}$   as: 
\begin{equation}\label{V2}
\begin{split}
V_2\ket{w}\ket{b}\ket{c}=\ket{w}\ket{b}\ket{c\oplus \lnot {\rm OR}(wb)},
\end{split}
\end{equation}
where  $w\in\{0,1\}^{t}$, $b\in \{0,1\}$ and $c\in \{0,1\}$.

\begin{figure*}[h]
  \centerline{\includegraphics[width=0.5\textwidth]{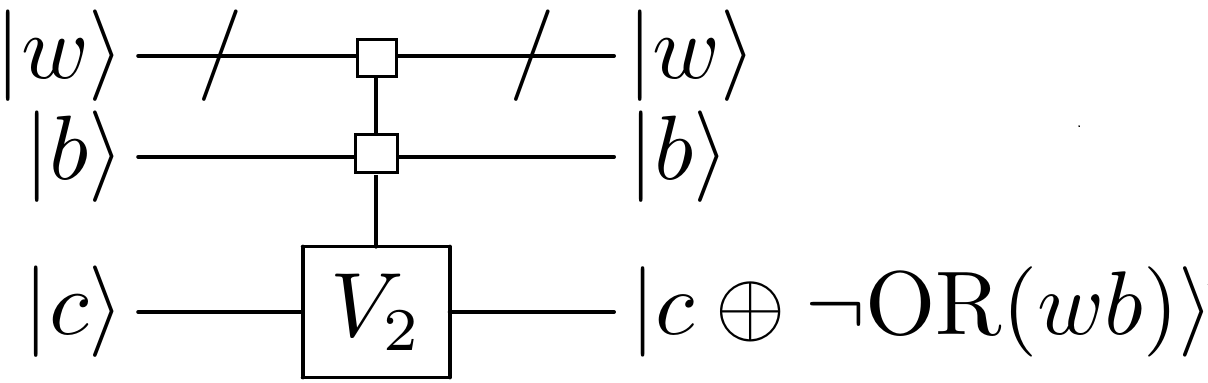}}
  \caption{The circuit for  operator $V_2$.}
\end{figure*}

Define the operator $D: \{0,1\}^{n+2^t+1}\rightarrow  \{0,1\}^{n+2^t+1}$  as:
\begin{equation}\label{D}
\begin{split}
D=\left(I^{\otimes (n-t)}\otimes V_2\otimes I^{\otimes\left(2^t-1\right)}\right)\left(V_1\otimes I^{\otimes 2^t}\right).
\end{split}
\end{equation}

\begin{figure*}[h]
  \centerline{\includegraphics[width=0.7\textwidth]{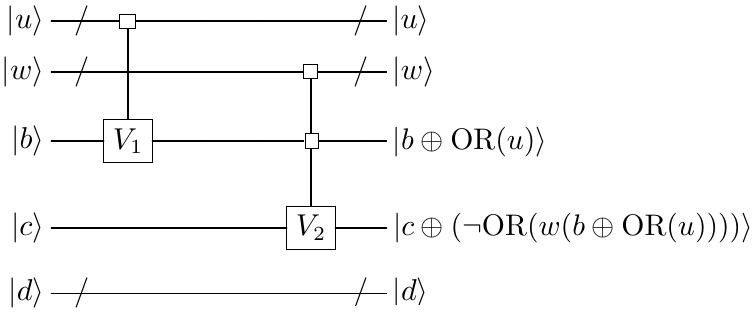}}
  \caption{The circuit for  operator $D$.}
\end{figure*}

Define the  operator $Z_0': \{0,1\}^{n+2^t+1}\rightarrow  \{0,1\}^{n+2^t+1}$  as:
\begin{equation}\label{Z_0'}
\begin{split}
Z_0'=D^{\dagger}\left(I^{\otimes (n+1)}\otimes Z\otimes I^{\otimes\left(2^t-1\right)}\right)D.
\end{split}
\end{equation}

Intuitively, the effect of  $Z_0'$ on a quantum state is similar to  $Z_0$. 

\begin{figure*}[h]
  \centerline{\includegraphics[width=0.7\textwidth]{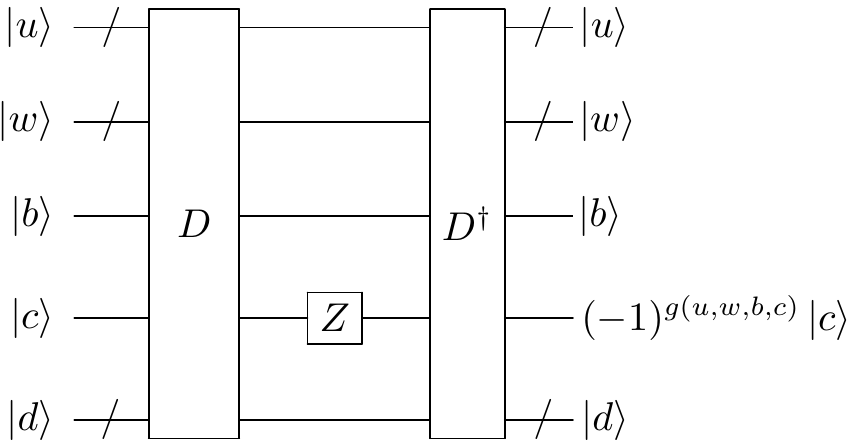}}
  \caption{The circuit for  operator $Z_0'$.}
\end{figure*}

 $Z_{0,H}'$ defined in Eq. (\ref{Z_{0,H}'}) can be decomposed into a number of operators represented by the following equation:
\begin{equation}\label{Z_{0,H}'DE}
\begin{split}
Z_{0,H}'=\left(H^{\otimes n}\otimes I^{\otimes \left(2^t+1\right)}\right)Z_0'\left(H^{\otimes n}\otimes I^{\otimes \left(2^t+1\right)}\right),
\end{split}
\end{equation}
which will be proved in Proposition \ref{pp_Z0H} of Subsection \ref{Correctness analysis of  algorithm}.

\begin{figure*}[h]
  \centerline{\includegraphics[width=0.6\textwidth]{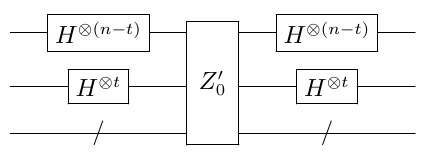}}
  \caption{The circuit for  operator $Z_{0,H}'$.}
\end{figure*}

%The operator $Z_{0,H}': \{0,1\}^{n+2^t+1}\rightarrow  \{0,1\}^{n+2^t+1}$ is defined as:
%\begin{equation}\label{Z_{0,H}'}
%\begin{split}
%Z_{0,H}'=\left(H^{\otimes n}\otimes I^{\otimes \left(2^t+1\right)}\right)Z_0'\left(H^{\otimes n}\otimes I^{\otimes \left(2^t+1\right)}\right).
%\end{split}
%\end{equation}

%Define the function ${\rm OR}_t: \{0,1\}^{t}\rightarrow \{0,1\}$ as: 
%\begin{equation}
%\begin{split}
%{\rm OR}_t(x)=x_1\lor x_2\lor \cdots \lor x_{t},
%\end{split}
%\end{equation}
%where  $x=x_1 x_2\ldots x_{t}\in \{0,1\}^{t}$.

In the following, we give the  decomposition circuit for  -$G'$.

\begin{figure}[H]
  \includegraphics[width=\textwidth]{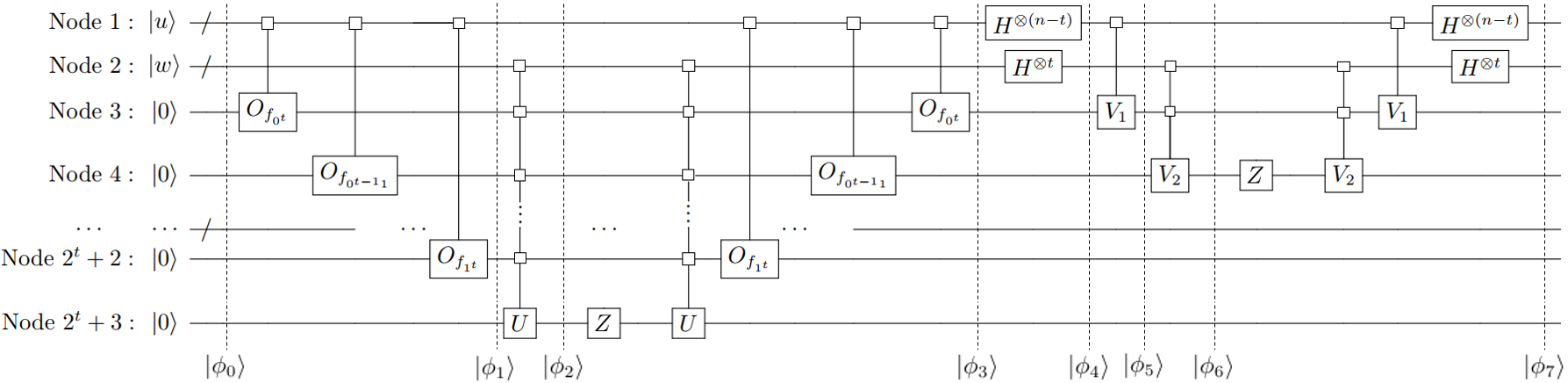}
  \caption{The decomposition circuit for    -$G'$.}
  \label{FD operator -G'}
\end{figure}

\begin{algorithm}
\caption{Distributed multi-objective quantum search algorithm}
\label{DMA.alg}
\begin{algorithmic}
\STATE \textbf{Input}:  $f:\{0,1\}^n \rightarrow\{0,1\}$ with $|\{x\in\{0,1\}^n| f(x)=1\}|=a\geq 1$, $N=2^n$.
\STATE \textbf{Initialization}: $n>4$, $1<t<\log_2n-1$.
\STATE \textbf{Output}: The string $x$ such that $f(x)=1$ with probability $\sin^2{\left(\left(2\left\lfloor \pi/\left(4 \arcsin {\sqrt{a/N}}\right)\right\rfloor+1\right)\right.}$ ${\left.\arcsin {\sqrt{a/N}}\right)}$.

\STATE \textbf{Procedure:}

\STATE \textbf{1.} $\left(H^{\otimes n}\otimes I^{\otimes \left(2^t+1\right)}\right)|0\rangle^{\otimes (n+2^t+1)}=\frac{1}{\sqrt{2^n}}\sum\limits_{x\in\{0 , 1\}^n}|x\rangle|0\rangle^{\otimes (2^t+1)}$.

\STATE \textbf{2.}  $G'$ is   performed with   $\left\lfloor \pi/\left(4 \arcsin {\sqrt{a/N}}\right)\right\rfloor$ times, where $G'=-Z_{0,H}'Z_{f}'$, $Z_f'$ is defined in Eq. ($\ref{Z_{f}'}$), and $Z_{0,H}'$ is defined in Eq. ($\ref{Z_{0,H}'}$).

\STATE \textbf{3.} Measure the first $n$ qubits of the resulting state.

\end{algorithmic}
\end{algorithm}

\begin{figure}[H]
  \includegraphics[width=\textwidth]{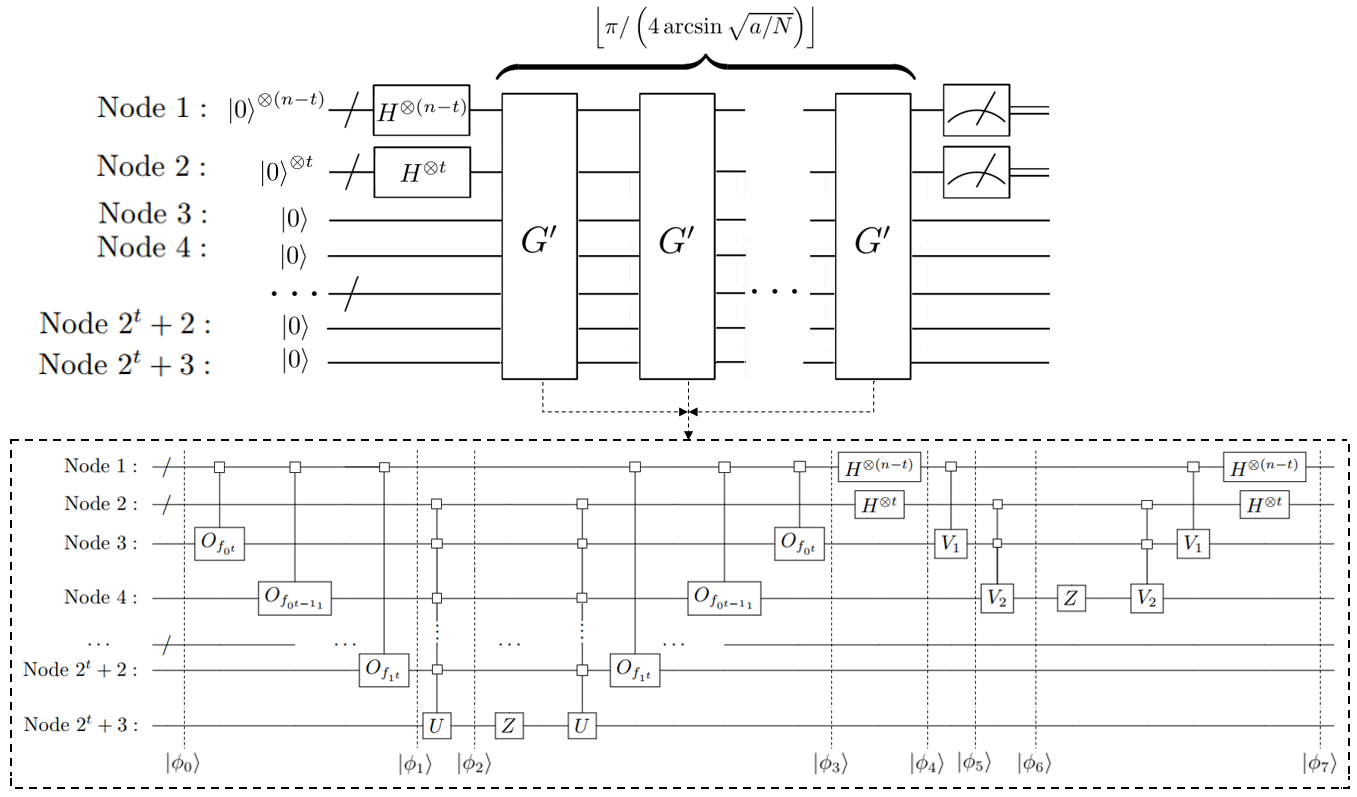}
  \caption{The circuit for  Algorithm \ref{DMA.alg}.}
\end{figure}

%Denote
%\begin{align}
%&C=\{x\in\{0 , 1\}^n | f(x)=1\} 
%,\\
%&D=\{x\in\{0 , 1\}^n | f(x)=0\} 
%,\\
%&|C\rangle=\frac{1}{\sqrt{a}}\sum\limits_{x\in A}|x\rangle\Ket{0}^{\otimes \left(2^t+1\right)},\\
%&|D\rangle=\frac{1}{\sqrt{b}}\sum\limits_{x\in B}|x\rangle\Ket{0}^{\otimes \left(2^t+1\right)}.
%\end{align}

\subsection{Correctness analysis of  algorithm}\label{Correctness analysis of  algorithm}

In this subsection, we  first give a theorem proving the correctness of  Algorithm \ref{DMA.alg}, and then two propositions are given to show  the equivalent representations of $Z_{f}'$ and  $Z_{0,H}'$.

%In the following, we give the theorem proving the correctness of Algorithm \ref{DMA.alg}.
 
\begin{theorem}\label{Thm_DEMA1} 
Let Boolean function $f:\{0,1\}^n \rightarrow\{0,1\}$ with $|\{x\in\{0,1\}^n| f(x)=1\}|=a\geq 1$, $N=2^n$. Then the string $x$ with $f(x)=1$ can be  obtained by  Algorithm \ref{DMA.alg} querying $\left\lfloor \pi/\left(4 \arcsin {\sqrt{a/N}}\right)\right\rfloor$ times  with probability $\sin^2{\left(\left(2\left\lfloor \pi/\left(4 \arcsin {\sqrt{a/N}}\right)\right\rfloor+1\right)\arcsin {\sqrt{a/N}}\right)}$.
\end{theorem}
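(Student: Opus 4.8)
\emph{Proof proposal.} The plan is to reduce the analysis of Algorithm~\ref{DMA.alg} to that of Grover's algorithm (Proposition~\ref{pro_Grover}). Concretely, I would show that the two–dimensional subspace $\mathcal{S}=\mathrm{span}\{\Ket{A'},\Ket{B'}\}$ is invariant under $G'=-Z_{0,H}'Z_{f}'$, and that $G'$ restricted to $\mathcal{S}$, written in the basis $\{\Ket{A'},\Ket{B'}\}$, has exactly the same matrix as Grover's iterate $G=-H^{\otimes n}Z_0H^{\otimes n}Z_f$ written in the basis $\{\Ket{A},\Ket{B}\}$. Since Step~1 of Algorithm~\ref{DMA.alg} prepares $\Ket{h}=\sin\theta\Ket{A'}+\cos\theta\Ket{B'}\in\mathcal{S}$, the image of Grover's post-Hadamard state $H^{\otimes n}\Ket{0^n}=\sin\theta\Ket{A}+\cos\theta\Ket{B}$ under $\Ket{A}\mapsto\Ket{A'}$, $\Ket{B}\mapsto\Ket{B'}$, the whole computation stays inside $\mathcal{S}$ and is unitarily conjugate to a run of Grover's algorithm; measuring the first $n$ qubits of a vector $\alpha\Ket{A'}+\beta\Ket{B'}$ then returns some $x$ with $f(x)=1$ with probability $|\alpha|^2$, exactly as for Grover's state.

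First I would verify the action of $Z_{f}'$ on $\mathcal{S}$. For $x=uw\in A$ one has $f_w(u)=f(uw)=1$, so with the ancilla registers in $\Ket{0}^{\otimes 2^t}\Ket{0}$ the phase $(-1)^{b\oplus f_w(u)}$ of Eq.~(\ref{Z_{f}'}) equals $-1$, while for $x\in B$ it equals $+1$; hence $Z_{f}'\Ket{A'}=-\Ket{A'}$ and $Z_{f}'\Ket{B'}=\Ket{B'}$, i.e. $Z_{f}'|_{\mathcal{S}}=I-2\Ket{A'}\Bra{A'}$. Next I would verify the action of $Z_{0,H}'=I-2\Ket{h}\Bra{h}-2\Ket{h^{\perp}}\Bra{h^{\perp}}$. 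Because $\Ket{h}\in\mathcal{S}$ the term $I-2\Ket{h}\Bra{h}$ already reflects within $\mathcal{S}$ correctly, so the only point to check is that $\Ket{h^{\perp}}\Bra{h^{\perp}}$ annihilates $\mathcal{S}$; this is exactly why Eq.~(\ref{h^perp}) imposes $bcd\neq 0^{2^t+1}$. Every term occurring there carries a non-zero ancilla string $\Ket{b,c,d}$, whereas both $\Ket{A'}$ and $\Ket{B'}$ carry the all-zero ancilla $\Ket{0}^{\otimes(2^t+1)}$ and $H^{\otimes n}$ touches only the first $n$ qubits, so every inner product vanishes. Consequently $Z_{0,H}'|_{\mathcal{S}}=I-2\Ket{h}\Bra{h}$, which is the transport to $\mathcal{S}$ of $H^{\otimes n}Z_0H^{\otimes n}=I-2H^{\otimes n}\Ket{0^n}\Bra{0^n}H^{\otimes n}$ on $\mathrm{span}\{\Ket{A},\Ket{B}\}$.

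Putting the pieces together, $G'|_{\mathcal{S}}=-\left(I-2\Ket{h}\Bra{h}\right)\left(I-2\Ket{A'}\Bra{A'}\right)$ agrees with $G$ on $\mathrm{span}\{\Ket{A},\Ket{B}\}$ under the identification $\Ket{A'}\leftrightarrow\Ket{A}$, $\Ket{B'}\leftrightarrow\Ket{B}$, hence acts as the rotation by $2\theta$ in this plane; so after $k=\left\lfloor \pi/\left(4\arcsin\sqrt{a/N}\right)\right\rfloor$ iterations the state is $\sin((2k+1)\theta)\Ket{A'}+\cos((2k+1)\theta)\Ket{B'}$. Measuring the first $n$ qubits then returns some $x$ with $f(x)=1$ with probability $\sin^2((2k+1)\theta)=\sin^2\!\left(\left(2\left\lfloor \pi/\left(4\arcsin\sqrt{a/N}\right)\right\rfloor+1\right)\arcsin\sqrt{a/N}\right)$, which is the claimed value and coincides with Proposition~\ref{pro_Grover}. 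I expect the main obstacle to be the bookkeeping around $Z_{0,H}'$: one must be confident that the auxiliary vector $\Ket{h^{\perp}}$ and the Boolean function $g$ of Eq.~(\ref{g(u,w,b,c)}) were engineered precisely so that $\Ket{h^{\perp}}\Bra{h^{\perp}}$ is orthogonal to $\mathcal{S}$ \emph{and} so that the physically realizable decomposition~(\ref{Z_{0,H}'DE}) is valid; the latter claim, however, is isolated in Propositions~\ref{pp_Zf} and~\ref{pp_Z0H}, which may be cited, so for the theorem itself only the orthogonality observation above is needed.
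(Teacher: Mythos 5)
Your proposal is correct and follows essentially the same route as the paper's proof: both arguments reduce to showing that $G'$ acts on $\mathrm{span}\{\Ket{A'},\Ket{B'}\}$ exactly as Grover's iterate acts on $\mathrm{span}\{\Ket{A},\Ket{B}\}$ (a rotation by $2\theta$), the paper doing this by computing $G'\Ket{A'}$ and $G'\Ket{B'}$ directly and then inducting on the iteration count. If anything, you are more explicit than the paper about the one point it silently assumes, namely that every term of $\Ket{h^{\perp}}$ carries a nonzero ancilla string $bcd\neq 0^{2^t+1}$ and hence $\Ket{h^{\perp}}\Bra{h^{\perp}}$ annihilates $\Ket{A'}$ and $\Ket{B'}$, whose ancillas are $\Ket{0}^{\otimes(2^t+1)}$.
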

\begin{proof}
%Then $\Ket{h}$ lies in the two-dimensional subspace $\mathcal{P}$ spanned by orthonormal vectors $\Ket{A'}$ and $\Ket{B'}$, i.e.
%\begin{equation}
%\mathcal{P}={\rm span}\{\Ket{A'}, \Ket{B'}\}.
%\end{equation}
By Eq. (\ref{Z_{f}'}), Eq. (\ref{Z_{0,H}'}) and Eq. (\ref{G'}), we have:
\begin{equation}
\begin{split}
G'\Ket{A'}=&-Z_{0,H}'Z_{f}'\Ket{A'}\\
=&-\left(I^{\otimes(n+2^t+1)}-2\Ket{h}\Bra{h}-2\Ket{h^{\perp}}\Bra{h^{\perp}}\right)(-1)\Ket{A'}\\
=&\Ket{A'}-2\left(\sin\theta\Ket{A'}+\cos\theta\Ket{B'}\right)\left(\sin\theta\Bra{A'}+\cos\theta\Bra{B'}\right)\Ket{A'}\\
=&\cos2\theta\Ket{A'}-\sin2\theta\Ket{B'},
\end{split}
\end{equation}
\begin{equation}
\begin{split}
G'\Ket{B'}=&-Z_{0,H}'Z_{f}'\Ket{B'}\\
=&-\left(I^{\otimes(n+2^t+1)}-2\Ket{h}\Bra{h}-2\Ket{h^{\perp}}\Bra{h^{\perp}}\right)\Ket{B'}\\
=&-\Ket{B'}+2\left(\sin\theta\Ket{A'}+\cos\theta\Ket{B'}\right)\left(\sin\theta\Bra{A'}+\cos\theta\Bra{B'}\right)\Ket{B'}\\
=&\sin2\theta\Ket{A'}+\cos2\theta\Ket{B'}.
\end{split}
\end{equation}

%According to Lemma \ref{lem_Z0}, we have:
%\begin{equation}
%\begin{split}
%G'\Ket{A'}=&-\left(H^{\otimes n}\otimes I^{\otimes \left(2^t+1\right)}\right)Z_0'\left(H^{\otimes n}\otimes I^{\otimes \left(2^t+1\right)}\right)(-1)\Ket{A'}\\
%=&-\left(I^{\otimes(n+2^t+1)}-2\Ket{h}\Bra{h}-2\sum_{\substack{bcd\neq 0^{2^t+1}\\ g(u,w,b,c)=1 }}H^{\otimes n}\Ket{u,w}\Ket{b,c,d}\Bra{u,w}H^{\otimes n}\Bra{b,c,d}\right)(-1)\Ket{A'}\\
%=&\cos2\theta\Ket{A'}-\sin2\theta\Ket{B'}.
%\end{split}
%\end{equation}

\iffalse
 Similarly, it can be obtained:
\begin{equation}
\begin{split}
G'\Ket{B'}%=&-\left(H^{\otimes n}\otimes I^{\otimes \left(2^t+1\right)}\right)Z_0'\left(H^{\otimes n}\otimes I^{\otimes \left(2^t+1\right)}\right)\Ket{B'}\\
%=&-(I^{\otimes(n+2^t+1)}-2\left(\sin\theta\Ket{A'}+\cos\theta\Ket{B'}\right)\left(\sin\theta\Bra{A'}+\cos\theta\Bra{B'}\right)
%\\&-2\sum_{\substack{bcd\neq 0^{2^t+1}\\ g(u,w,b,c)=1 }}H^{\otimes n}\Ket{u,w}\Ket{b,c,d}\Bra{u,w}H^{\otimes n}\Bra{b,c,d})\Ket{B'}\\
=\sin2\theta\Ket{A'}+\cos2\theta\Ket{B'}.
\end{split}
\end{equation}
\fi

%It is easy to see that subspace $\mathcal{P}$ is invariant under the operator $G'$,  which has the following matrix representation in the basis $\{\Ket{A'},\Ket{B'}\}$:
%\begin{equation}\label{Q}
%\begin{split}
%G'=&\begin{bmatrix}
%\cos2\theta & -\sin2\theta \\
%\sin2\theta & \cos2\theta
%\end{bmatrix}.
%\end{split}
%\end{equation}

Then we have
\begin{equation}\label{(G')^{l}}
(G')^{l}|h\rangle=\cos{((2l+1)\theta)}|B'\rangle+\sin{((2l+1)\theta)}|A'\rangle.
\end{equation}

In fact, Eq. (\ref{(G')^{l}}) can be proved by induction as follows.
When $l=1$, we obtain
\begin{equation}
\begin{split}
G'|h\rangle&=G'(\sin{\theta}|A'\rangle+\cos{\theta}|B'\rangle)\\
                  &=\sin{\theta}G'|A'\rangle+\cos{\theta}G'|B'\rangle\\
                  &=\sin{\theta}(\cos{2\theta}|A'\rangle-\sin{2\theta}|B'\rangle)+\cos{\theta}(\sin{2\theta}|A'\rangle+\cos{2\theta}|B'\rangle)\\
                  &=(\sin{\theta}\cos{2\theta}+\cos{\theta}\sin{2\theta})|A'\rangle+(\cos{\theta}\cos{2\theta}-\sin{\theta}\sin{2\theta})|B'\rangle\\                
                  &=\sin{3\theta}|A'\rangle+\cos{3\theta}|B'\rangle.
\end{split}
\end{equation}
Suppose that Eq. (\ref{(G')^{l}}) holds for $l=m$, i.e.
\begin{equation}
(G')^m|h\rangle=\sin{((2m+1)\theta)}|A\rangle+\cos{((2m+1)\theta)}|B\rangle.
\end{equation}
Then when $l=m+1$, we can get
\begin{align}
(G')^{m+1}|h\rangle=&(G')(G')^m|h\rangle\notag\\
                             =&(G')\left(\sin{((2m+1)\theta)}|A'\rangle+\cos{((2m+1)\theta)}|B'\rangle\right)\notag\\
                             =&\sin{((2m+1)\theta)}(G')|A'\rangle+\cos{((2m+1)\theta)}(G')|B'\rangle\notag\\
                             =&\sin{((2m+1)\theta)}(\cos{2\theta}|A'\rangle-\sin{2\theta}|B'\rangle)+\cos{((2m+1)\theta)}(\sin{2\theta}|A'\rangle+\cos{2\theta}|B'\rangle)\\
                             =&\left(\sin{((2m+1)\theta)}\cos{2\theta}+\cos{((2m+1)\theta)}\sin{2\theta}\right)|A'\rangle\notag\\&+\left(\cos{((2m+1)\theta)}\cos{2\theta}-\sin{((2m+1)\theta)}\sin{2\theta}\right)|B'\rangle\notag\\
                             =&\sin{((2(m+1)+1)\theta)}|A'\rangle+\cos{((2(m+1)+1)\theta)}|B'\rangle\notag.
\end{align}
Therefore, Eq. (\ref{(G')^{l}}) holds.

Our goal is to make $\sin{((2l+1)\theta)}\approx 1$,  that is $(2l+1)\theta\approx\frac{\pi}{2}$, so $l\approx\frac{\pi}{4\theta}-\frac{1}{2}$.

% when $\dfrac{a}{N}$ is small
Due  to   $\theta=\arcsin {\sqrt{a/N}}$,  we obtain $l=\left\lfloor \pi/\left(4 \arcsin {\sqrt{a/N}}\right)\right\rfloor$, and the success probability of Algorithm \ref{DMA.alg} is:
\begin{equation}
\sin^2{\left(\left(2\left\lfloor \pi/\left(4 \arcsin {\sqrt{a/N}}\right)\right\rfloor+1\right)\arcsin {\sqrt{a/N}}\right)}.
\end{equation}
%where the probability is close to $1$.

\end{proof}

Now, we  give a proposition for the equivalent representation of  $Z_{f}'$ defined in Eq. (\ref{Z_{f}'}).

\begin{proposition}\label{pp_Zf}
 $Z_{f}'$ defined in Eq. (\ref{Z_{f}'})  can  be represented as:
\begin{equation}
\begin{split}
Z_{f}'=F^{\dagger}\left(I^{\otimes \left(2^t+n\right)}\otimes Z\right)F,
\end{split}
\end{equation}
where $F$ is defined in Eq. (\ref{F}).
\end{proposition}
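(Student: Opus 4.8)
The plan is to verify the claimed identity on computational basis vectors. Both $Z_f'$ and $F^{\dagger}\left(I^{\otimes(2^t+n)}\otimes Z\right)F$ are unitaries on the $(n+2^t+1)$-qubit space, and since $O_f^{*}$ and $U$ each permute the computational basis, so does $F$; hence the right-hand side sends every basis vector $\ket{u,w,a,b}$ (with $u,w,a,b$ ranged as in Eq. (\ref{Z_{f}'})) to a scalar multiple of itself, so it is enough to identify that scalar with the phase $(-1)^{b\oplus f_w(u)}$ of Eq. (\ref{Z_{f}'}) --- in particular on the subspace $a=0^{2^t}$ on which the ancilla block is initialized and which Algorithm \ref{DMA.alg} never leaves.

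First I would compute $F\ket{u,w,a,b}=\left(I^{\otimes(n-t)}\otimes U\right)O_f^{*}\ket{u,w,a,b}$ by composing the two factors. By Eq. (\ref{O_{f_w}A_w}), $O_f^{*}$ writes $f_j(u)$ into the $j$-th ancilla qubit simultaneously for all $j\in\{0,1\}^t$, turning the ancilla block into the one whose $j$-th entry is $a_j\oplus f_j(u)$; in particular its $w$-th entry becomes $a_w\oplus f_w(u)$. Then $I^{\otimes(n-t)}\otimes U$ applies the operator $U$ of Eq. (\ref{U}) with the subfunction-label register $w$ playing the role of the index $i$, which by definition XORs that $w$-th ancilla entry into the last qubit $b$. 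Thus $F$ carries $\ket{u,w,a,b}$ to another basis vector whose final qubit holds $b\oplus a_w\oplus f_w(u)$.

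Because $F$ permutes the computational basis, $\left(I^{\otimes(2^t+n)}\otimes Z\right)$ applied to the basis vector $F\ket{u,w,a,b}$ just multiplies it by $(-1)^{b\oplus a_w\oplus f_w(u)}$ (using $Z\ket{x}=(-1)^{x}\ket{x}$ on the final qubit), after which $F^{\dagger}$ returns it unchanged to $\ket{u,w,a,b}$; hence $F^{\dagger}\left(I^{\otimes(2^t+n)}\otimes Z\right)F\ket{u,w,a,b}=(-1)^{b\oplus a_w\oplus f_w(u)}\ket{u,w,a,b}$, which on the subspace $a=0^{2^t}$ is exactly $(-1)^{b\oplus f_w(u)}\ket{u,w,a,b}=Z_f'\ket{u,w,a,b}$. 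I do not expect a serious obstacle: the argument is essentially register bookkeeping, and the only delicate points are that the index consumed by $U$ is precisely the label $w$ --- so that after $O_f^{*}$ loads all the $f_j(u)$ in parallel, $U$ extracts exactly the bit $f_w(u)$ that the phase requires --- and the observation that $F$ being a basis permutation lets a single $Z$ reproduce the whole effect of $Z_f'$, with no separate uncomputation argument needed.
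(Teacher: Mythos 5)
Your proposal is correct and follows essentially the same route as the paper: decompose $F$ into $O^{*}_{f}$ followed by $I^{\otimes(n-t)}\otimes U$, track a computational basis vector through, and conjugate a single $Z$ on the last qubit by $F$. There is, however, one place where your bookkeeping is more careful than the paper's, and it is worth flagging. You correctly record that after $O^{*}_{f}$ the $w$-th ancilla slot holds $a_w\oplus f_w(u)$, so that $U$ (Eq. (\ref{U})) sends the last qubit to $b\oplus a_w\oplus f_w(u)$ and the conjugated $Z$ yields the phase $(-1)^{b\oplus a_w\oplus f_w(u)}$. The paper's Eq. (\ref{F_uwab}) instead writes the last qubit as $\ket{b\oplus f_w(u)}$, silently dropping the $a_w$ contribution, and on that basis asserts the identity as an equation between operators on the full $(n+2^t+1)$-qubit space. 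As your computation shows, the two sides actually agree only on the subspace where $a_w=0$ (in particular $a=0^{2^t}$); your explicit restriction to that subspace --- which is legitimate, since the ancilla block is initialized to $\ket{0}^{\otimes 2^t}$ and $F^{\dagger}$ restores it after each use, so Algorithm \ref{DMA.alg} never leaves it --- is the honest version of the proposition and is all that the correctness argument of Theorem \ref{Thm_DEMA1} requires. So your proof is not merely adequate: it exposes and repairs a small inaccuracy in the paper's own derivation.
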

\begin{proof}

By means of Eq. (\ref{O_{f_w}A_w}), Eq. (\ref{U}) and Eq. (\ref{F}), we can obtain
\begin{equation}\label{F_uwab}
\begin{split}
F\Ket{u,w}\Ket{a}\Ket{b}
=&\left(I^{\otimes(n-t)}\otimes U\right)\left(\prod\nolimits_{w\in\{0,1\}^t}\left(O_{f_{w}}\otimes I^{\otimes \left(2^t-w\right)}\right)\right)\Ket{u,w}\Ket{a}\Ket{b}\\
=&\left(I^{\otimes(n-t)}\otimes U\right)\Ket{u,w}\ket{a_{0^t}\oplus f_{0^t}(u)}\Ket{a_{0^{t-1}1}\oplus f_{0^{t-1}1}(u)}\cdots\Ket{a_{1^t}\oplus f_{1^t}(u)}\Ket{b}\\
=&\Ket{u,w}\ket{a_{0^t}\oplus f_{0^t}(u)}\Ket{a_{0^{t-1}1}\oplus f_{0^{t-1}1}(u)}\cdots\Ket{a_{1^t}\oplus f_{1^t}(u)}\Ket{b\oplus f_w(u)},
\end{split}
\end{equation}
where   $u\in\{0,1\}^{n-t}$, $w\in\{0,1\}^{t}$, $a=a_{0^t}a_{0^{t-1}1}\cdots a_{1^t}\in \{0,1\}^{2^t}$ and $b\in \{0,1\}$.

\iffalse
\begin{equation}
\begin{split}
&F^{\dagger}\Ket{u,w}\ket{a_{0^t}\oplus f_{0^t}(u)}\Ket{a_{0^{t-1}1}\oplus f_{0^{t-1}1}(u)}\cdots\Ket{a_{1^t}\oplus f_{1^t}(u)}\Ket{b\oplus f_w(u)}\\
=&\left(\prod'\nolimits_{w\in\{0,1\}^t}\left(O^{\dagger}_{f_{w}}\otimes I^{\otimes \left(2^t-w\right)}\right)\right)\left(I^{\otimes(n-t)}\otimes U^{\dagger}\right)\Ket{u,w}\ket{a_{0^t}\oplus f_{0^t}(u)}\Ket{a_{0^{t-1}1}\oplus f_{0^{t-1}1}(u)}\cdots\Ket{a_{1^t}\oplus f_{1^t}(u)}\\&\Ket{b\oplus f_w(u)}\\
=&\left(\prod'\nolimits_{w\in\{0,1\}^t}\left(O^{\dagger}_{f_{w}}\otimes I^{\otimes \left(2^t-w\right)}\right)\right)\Ket{u,w}\ket{a_{0^t}\oplus f_{0^t}(u)}\Ket{a_{0^{t-1}1}\oplus f_{0^{t-1}1}(u)}\cdots\Ket{a_{1^t}\oplus f_{1^t}(u)}\Ket{b}\\
=&\Ket{u,w,a,b}.
\end{split}
\end{equation}
\fi

With Eq. (\ref{F_uwab}), we have
\begin{equation}\label{FZF}
\begin{split}
&F^{\dagger}\left(I^{\otimes \left(2^t+n\right)}\otimes Z\right)F\Ket{u,w}\Ket{a}\Ket{b}\\
=&F^{\dagger}\left(I^{\otimes \left(2^t+n\right)}\otimes Z\right)\Ket{u,w}\ket{a_{0^t}\oplus f_{0^t}(u)}\Ket{a_{0^{t-1}1}\oplus f_{0^{t-1}1}(u)}\cdots\Ket{a_{1^t}\oplus f_{1^t}(u)}\Ket{b\oplus f_w(u)}\\
=&(-1)^{b\oplus f_w(u)}F^{\dagger}\Ket{u,w}\ket{a_{0^t}\oplus f_{0^t}(u)}\ket{a_{0^{t-1}1}\oplus f_{0^{t-1}1}(u)}\cdots\ket{a_{1^t}\oplus f_{1^t}(u)}\ket{b\oplus f_w(u)}\\
=&(-1)^{b\oplus f_w(u)}\Ket{u,w,a,b},
\end{split}
\end{equation}
where   $u\in\{0,1\}^{n-t}$, $w\in\{0,1\}^{t}$, $a=a_{0^t}a_{0^{t-1}1}\cdots a_{1^t}\in \{0,1\}^{2^t}$ and $b\in \{0,1\}$.

From Eq. (\ref{FZF}) it follows that  $Z_{f}'$ defined in Eq. (\ref{Z_{f}'})  can be represented as $F^{\dagger}\left(I^{\otimes \left(2^t+n\right)}\otimes Z\right)F$.
\end{proof}

%Then, before we give the lemma about the equivalent representation of the operator $Z_{0,H}'$ defined in Eq. (\ref{Z_{0,H}'}), the relevant notation and function  to be used are given below.

Next, we  give the  proposition for the equivalent representation of  $Z_{0,H}'$ defined in Eq. (\ref{Z_{0,H}'}).

\begin{proposition}\label{pp_Z0H}
 $Z_{0,H}'$  defined in Eq. (\ref{Z_{0,H}'})  
can  be represented as:
\begin{equation}
\begin{split}
Z_{0,H}'=\left(H^{\otimes n}\otimes I^{\otimes \left(2^t+1\right)}\right)Z_0'\left(H^{\otimes n}\otimes I^{\otimes \left(2^t+1\right)}\right),
\end{split}
\end{equation}
where    $Z_0'$ is defined in Eq. (\ref{Z_0'}).
\end{proposition}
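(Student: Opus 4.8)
The plan is to verify that both sides act identically on every computational basis state, which reduces the statement to understanding the diagonal action of $Z_0'$. First I would compute $D\Ket{u,w,b,c,d}$ for $u\in\{0,1\}^{n-t}$, $w\in\{0,1\}^{t}$, $b,c\in\{0,1\}$, $d\in\{0,1\}^{2^t-1}$: applying $V_1\otimes I^{\otimes 2^t}$ replaces $b$ by $b\oplus\mathrm{OR}(u)$, and then $I^{\otimes(n-t)}\otimes V_2\otimes I^{\otimes(2^t-1)}$ replaces $c$ by $c\oplus\lnot\,\mathrm{OR}\!\left(w(b\oplus\mathrm{OR}(u))\right)$; comparing with Eq. (\ref{g(u,w,b,c)}), the $c$-register now holds exactly $g(u,w,b,c)$. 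Hence the middle factor $I^{\otimes(n+1)}\otimes Z\otimes I^{\otimes(2^t-1)}$ contributes the phase $(-1)^{g(u,w,b,c)}$, and since $V_1$ and $V_2$ are self-inverse, $D^{\dagger}$ restores the original basis state. Thus
\[
Z_0'\Ket{u,w,b,c,d}=(-1)^{g(u,w,b,c)}\Ket{u,w,b,c,d},\qquad\text{so}\qquad Z_0'=I^{\otimes(n+2^t+1)}-2\!\!\sum_{g(u,w,b,c)=1}\!\!\Ket{u,w,b,c,d}\Bra{u,w,b,c,d}.
\]

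Next I would conjugate by $H^{\otimes n}\otimes I^{\otimes(2^t+1)}$: the identity term is unchanged and each $\Ket{u,w}\Bra{u,w}$ becomes $H^{\otimes n}\Ket{u,w}\Bra{u,w}H^{\otimes n}$, so
\[
\left(H^{\otimes n}\otimes I^{\otimes(2^t+1)}\right)Z_0'\left(H^{\otimes n}\otimes I^{\otimes(2^t+1)}\right)=I^{\otimes(n+2^t+1)}-2\!\!\sum_{g(u,w,b,c)=1}\!\!H^{\otimes n}\Ket{u,w}\Ket{b,c,d}\Bra{u,w}H^{\otimes n}\Bra{b,c,d}.
\]
I would then split the sum according to whether $bcd=0^{2^t+1}$. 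The terms with $bcd\neq 0^{2^t+1}$ reproduce exactly the operator $\Ket{h^{\perp}}\Bra{h^{\perp}}$ of Eq. (\ref{h^perp}). For the $bcd=0^{2^t+1}$ terms, Eq. (\ref{g(u,w,b,c)}) gives $g(u,w,0,0)=\lnot\,\mathrm{OR}\!\left(w\,\mathrm{OR}(u)\right)$, which equals $1$ only when $w=0^{t}$ and $u=0^{n-t}$, i.e. only the single term $H^{\otimes n}\Ket{0^n}\Ket{0^{2^t+1}}\Bra{0^n}H^{\otimes n}\Bra{0^{2^t+1}}$ survives; by Eq. (\ref{h}) and the identities $\sin\theta\Ket{A'}+\cos\theta\Ket{B'}=\tfrac{1}{\sqrt N}\sum_{x}\Ket{x}\Ket{0}^{\otimes(2^t+1)}=\big(H^{\otimes n}\Ket{0^n}\big)\otimes\Ket{0^{2^t+1}}$ this term is exactly $\Ket{h}\Bra{h}$. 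Combining the two pieces yields $I^{\otimes(n+2^t+1)}-2\Ket{h}\Bra{h}-2\Ket{h^{\perp}}\Bra{h^{\perp}}=Z_{0,H}'$ by Eq. (\ref{Z_{0,H}'}).

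The only real obstacle is careful bookkeeping: tracking which tensor slots $V_1$ and $V_2$ act on, confirming that after $D$ the $c$-register literally carries $g(u,w,b,c)$ (so that applying $Z$ gives the right sign and $D^{\dagger}$ cleans up), and separately checking that the $bcd=0^{2^t+1}$ contribution collapses to the single basis vector $\Ket{0^n}\Ket{0^{2^t+1}}$ and matches $\Ket{h}\Bra{h}$. Everything else is routine conjugation and reindexing of the sum.
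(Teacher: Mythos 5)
Your proposal is correct and follows essentially the same route as the paper: compute the diagonal action of $Z_0'$ through $D$, observe that $Z_0'\Ket{u,w,b,c,d}=(-1)^{g(u,w,b,c)}\Ket{u,w,b,c,d}$, conjugate by $H^{\otimes n}\otimes I^{\otimes(2^t+1)}$, and split the resulting sum into the $\Ket{h}\Bra{h}$ and $\Ket{h^{\perp}}\Bra{h^{\perp}}$ pieces. Your explicit check that the $bcd=0^{2^t+1}$ terms collapse to the single projector onto $\Ket{0^{n+2^t+1}}$ (because $g(u,w,0,0)=1$ forces $u=0^{n-t}$, $w=0^t$) is exactly the content of the paper's index-set equivalence, just stated more transparently.
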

\begin{proof}
First we have
\begin{equation}\label{D_uwbcd}
\begin{split}
D\Ket{u,w}\Ket{b,c}\Ket{d}
=&\left(I^{\otimes (n-t)}\otimes V_2\otimes I^{\otimes\left(2^t-1\right)}\right)\left(V_1\otimes I^{\otimes 2^t}\right)\Ket{u,w}\Ket{b,c}\Ket{d}\\
=&\left(I^{\otimes (n-t)}\otimes V_2\otimes I^{\otimes\left(2^t-1\right)}\right)\Ket{u,w}\Ket{b\oplus {\rm OR}(u),c}\Ket{d}\\
=&\Ket{u,w}\Ket{b\oplus {\rm OR}(u),c\oplus\left(\lnot {\rm OR}(w(b\oplus {\rm OR}(u)))\right)}\Ket{d},
\end{split}
\end{equation}
where $D$ is defined in Eq. (\ref{D}), $u\in\{0,1\}^{n-t}$, $w\in\{0,1\}^{t}$, $b\in \{0,1\}$, $c\in \{0,1\}$  and  $d\in \{0,1\}^{2^t-1}$.

\iffalse
\begin{equation}
\begin{split}
&D^{\dagger}\Ket{u,w,b\oplus {\rm OR}(u),c\oplus\left(\lnot {\rm OR}(w(b\oplus {\rm OR}(u)))\right),d}\\
=&\left(V_1^{\dagger}\otimes I^{\otimes 2^t}\right)\left(I^{\otimes (n-t)}\otimes V_2^{\dagger} \otimes I^{\otimes\left(2^t-1\right)}\right)\Ket{u,w,b\oplus {\rm OR}(u),c\oplus\left(\lnot {\rm OR}(w(b\oplus {\rm OR}(u)))\right),d}\\
=&\left(V_1^{\dagger}\otimes I^{\otimes 2^t}\right)\Ket{u,w,b\oplus {\rm OR}(u),c,d}\\
=&\Ket{u,w,b,c,d}.
\end{split}
\end{equation}
\fi
Therefore,
\begin{equation}
\begin{split}
Z_{0}'\Ket{u,w}\Ket{b,c}\Ket{d}
=&D^{\dagger}\left(I^{\otimes (n+1)}\otimes Z\otimes I^{\otimes\left(2^t-1\right)}\right)D\Ket{u,w}\Ket{b,c}\Ket{d}\\
=&D^{\dagger}\left(I^{\otimes (n+1)}\otimes Z\otimes I^{\otimes\left(2^t-1\right)}\right)\Ket{u,w}\Ket{b\oplus {\rm OR}(u),c\oplus\left(\lnot {\rm OR}(w(b\oplus {\rm OR}(u)))\right)}\Ket{d}\\
=&(-1)^{\left(c\oplus\left(\lnot {\rm OR}(w(b\oplus {\rm OR}(u)))\right)\right)}D^{\dagger}\Ket{u,w}\Ket{b\oplus {\rm OR}(u),c\oplus\left(\lnot {\rm OR}(w(b\oplus {\rm OR}(u)))\right)}\Ket{d}\\
=&(-1)^{ g(u,w,b,c)}\Ket{u,w,b,c,d},
\end{split}
\end{equation}
where the function $g$ is defined in Eq. (\ref{g(u,w,b,c)}),  $u\in\{0,1\}^{n-t}$, $w\in\{0,1\}^{t}$, $b\in \{0,1\}$, $c\in \{0,1\}$  and  $d\in \{0,1\}^{2^t-1}$.
Equivalently, it is
\begin{equation}\label{Z_0'2}
\begin{split}
Z_0'=&I^{\otimes(n+2^t+1)}-2\sum_{g(u,w,b,c)=1}\Ket{u,w,b,c,d}\Bra{u,w,b,c,d}\\
=&I^{\otimes(n+2^t+1)}-2\Ket{0^{n+2^t+1}}\Bra{0^{n+2^t+1}}-2\sum_{\substack{uwbcd\neq 0^{n+2^t+1}\\g(u,w,b,c)=1}}\Ket{u,w,b,c,d}\Bra{u,w,b,c,d},%\\
%=&I^{\otimes(n+2^t+1)}-2\Ket{0^{n+2^t+1}}\Bra{0^{n+2^t+1}}-2\sum_{\substack{bcd\neq 0^{2^t+1}\\g(u,w,b,c)=1}}\Ket{u,w,b,c,d}\Bra{u,w,b,c,d},
\end{split}
\end{equation}
where   $u\in\{0,1\}^{n-t}$, $w\in\{0,1\}^{t}$, $b\in \{0,1\}$, $c\in \{0,1\}$  and  $d\in \{0,1\}^{2^t-1}$.

According to Eq. (\ref{g(u,w,b,c)}), when $uwbcd\neq 0^{n+2^t+1}$ and $g(u,w,b,c)=1$, we have $bcd\neq 0^{n+2^t+1}$ and $g(u,w,b,c)=1$. On the other hand, from $bcd\neq 0^{n+2^t+1}$,  we get $uwbcd\neq 0^{n+2^t+1}$. Furthermore, Eq. (\ref{Z_0'2}) can be equivalently expressed as follows:
\begin{equation}
Z_0'=I^{\otimes(n+2^t+1)}-2\Ket{0^{n+2^t+1}}\Bra{0^{n+2^t+1}}-2\sum_{\substack{bcd\neq 0^{2^t+1}\\g(u,w,b,c)=1}}\Ket{u,w,b,c,d}\Bra{u,w,b,c,d},
\end{equation}
where   $u\in\{0,1\}^{n-t}$, $w\in\{0,1\}^{t}$, $b\in \{0,1\}$, $c\in \{0,1\}$  and  $d\in \{0,1\}^{2^t-1}$.

%Since $uwbcd\neq 0^{n+2^t+1}$ as well as $g(u,w,b,c)=1$, $bcd\neq 0^{2^t+1}$ can be deduced. Therefore, there is:
%\begin{equation}
%\begin{split}
%Z_0'
%=&I^{\otimes(n+2^t+1)}-2\Ket{0^{n+2^t+1}}\Bra{0^{n+2^t+1}}-2\sum_{\substack{bcd\neq 0^{2^t+1}\\g(u,w,b,c)=1}}\Ket{u,w,b,c,d}\Bra{u,w,b,c,d},
%\end{split}
%\end{equation}
%where   $u\in\{0,1\}^{n-t}$, $w\in\{0,1\}^{t}$, $b\in \{0,1\}$, $c\in \{0,1\}$  and  $d\in \{0,1\}^{2^t-1}$.
As a result, we can obtain
\begin{align}
&\left(H^{\otimes n}\otimes I^{\otimes \left(2^t+1\right)}\right)Z_0'\left(H^{\otimes n}\otimes I^{\otimes \left(2^t+1\right)}\right)\\
=&
\left(H^{\otimes n}\otimes I^{\otimes \left(2^t+1\right)}\right)\left(I^{\otimes(n+2^t+1)}-2\Ket{0^{n+2^t+1}}\Bra{0^{n+2^t+1}}
-2\sum_{\substack{bcd\neq 0^{2^t+1}\\ g(u,w,b,c)=1 }}\Ket{u,w,b,c,d}\Bra{u,w,b,c,d}\right)\\&\left(H^{\otimes n}\otimes I^{\otimes \left(2^t+1\right)}\right)\\
=&
I^{\otimes(n+2^t+1)}-2\left(H^{\otimes n}\otimes I^{\otimes \left(2^t+1\right)}\right)\Ket{0^{n+2^t+1}}\Bra{0^{n+2^t+1}}\left(H^{\otimes n}\otimes I^{\otimes \left(2^t+1\right)}\right)
\\&-2\sum_{\substack{bcd\neq 0^{2^t+1}\\ g(u,w,b,c)=1 }}\left(H^{\otimes n}\otimes I^{\otimes \left(2^t+1\right)}\right)\Ket{u,w,b,c,d}\Bra{u,w,b,c,d}\left(H^{\otimes n}\otimes I^{\otimes \left(2^t+1\right)}\right)\\
=&I^{\otimes(n+2^t+1)}-2\Ket{h}\Bra{h}-2\sum_{\substack{bcd\neq 0^{2^t+1}\\ g(u,w,b,c)=1 }}H^{\otimes n}\Ket{u,w}\Ket{b,c,d}\Bra{u,w}H^{\otimes n}\Bra{b,c,d}\\
=&I^{\otimes(n+2^t+1)}-2\Ket{h}\Bra{h}-2\Ket{h^{\perp}}\Bra{h^{\perp}}\\
=&Z_{0,H}'.
%=&I^{\otimes(n+2^t+1)}-2\left(\sin\theta\Ket{A'}+\cos\theta\Ket{B'}\right)\left(\sin\theta\Bra{A'}+\cos\theta\Bra{B'}\right)\\
%&-2\sum_{\substack{bcd\neq 0^{2^t+1}\\ g(u,w,b,c)=1 }}H^{\otimes n}\Ket{u,w}\Ket{b,c,d}\Bra{u,w}H^{\otimes n}\Bra{b,c,d}.
\end{align}
So, $Z_{0,H}'$   
can  be represented as  $\left(H^{\otimes n}\otimes I^{\otimes \left(2^t+1\right)}\right)Z_0'\left(H^{\otimes n}\otimes I^{\otimes \left(2^t+1\right)}\right)$.
\end{proof}

\subsection{Space complexity analysis of the iterated operator in our algorithm}

In this subsection, we analyze the space complexity of  the largest single computing node in the iterated operator of
 Algorithm \ref{DMA.alg}. 

\begin{proposition}
%The number of qubits required for the largest single computing node in  the iterated operator of Algorithm \ref{DMA.alg} is $\max\{n-t+1, 2^t+t+1\}$.
 The number of qubits required for the largest single computing node in   Algorithm \ref{DMA.alg} is $\max\{n-t+1, 2^t+t+1\}$.
\end{proposition}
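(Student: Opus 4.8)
The plan is to identify, for each operator appearing in the decomposition circuit of $-G'$ (Figure~\ref{FD operator -G'}), the set of qubits it acts on nontrivially, since each such operator is to be executed on a single computing node. The global register has $n+2^t+1$ qubits partitioned as $u$ ($n-t$ qubits), $w$ ($t$ qubits), $a$ (the $2^t$-qubit block $a_{0^t}\cdots a_{1^t}$), and $b$ (one qubit). I would first recall from Eq.~(\ref{G'}) and Propositions~\ref{pp_Zf} and \ref{pp_Z0H} that $-G' = Z_{0,H}'Z_f'$ with $Z_f' = F^{\dagger}(I^{\otimes(2^t+n)}\otimes Z)F$ and $Z_{0,H}' = (H^{\otimes n}\otimes I^{\otimes(2^t+1)})Z_0'(H^{\otimes n}\otimes I^{\otimes(2^t+1)})$, and $Z_0' = D^{\dagger}(I^{\otimes(n+1)}\otimes Z\otimes I^{\otimes(2^t-1)})D$. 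So the ``atomic'' building blocks are: $H^{\otimes n}$ (acting on the $n$-qubit $uw$ register), $Z$ on a single qubit, and the pieces $O_{f_w}$, $U$, $V_1$, $V_2$ that constitute $F$ and $D$.

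Next I would go through these atomic blocks one at a time. The single-qubit $Z$ gates and the $H^{\otimes n}$ blocks touch at most $n$ qubits. From Eq.~(\ref{O_{f_w}}), $O_{f_w}$ acts on the $u$ register ($n-t$ qubits) plus the target qubit $c$ plus the $t+\mathrm{BI}(w)$ ``$b_w$'' qubits; the worst case $w=1^t$ gives $\mathrm{BI}(w)=2^t-1$, so $O_{f_{1^t}}$ together with its target sits on $(n-t)+1+(t+2^t-1)=n+2^t$ qubits — but here I must be careful: the relevant count is the number of qubits the node needs, and since the parallel-addressing realization of Beals et al.\ is used, I should check whether the paper intends $O_{f_w}$ as one node or as itself distributed; the statement suggests the query operators $O_{f_w}$ are realized across nodes, so the single-node cost comes from the genuinely local pieces. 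The operator $V_1$ (Eq.~(\ref{V1})) acts on the full $n$-qubit $uw$ register plus one target, i.e.\ $n+1$ qubits, giving the $n-t+1$ contribution after accounting that $V_1$ only reads $u$ and writes one ancilla — actually $V_1$ needs $(n-t)+1$ qubits since it only touches $u$ and the $b$-target. Then $V_2$ (Eq.~(\ref{V2})) needs $t+2$ qubits (the $w$ register plus two more). The operator $U$ (Eq.~(\ref{U})) needs $t+2^t+1$ qubits: the $t$-qubit index $i$ (which is $w$), the $2^t$-qubit $a$-block, and the one-qubit target $b$.

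Finally I would take the maximum over all these local blocks. The two candidates that dominate are $U$, requiring $2^t+t+1$ qubits, and the $u$-plus-ancilla pieces such as $V_1$ (and the $H^{\otimes n}$-flanked structure restricted to a node querying one subfunction), requiring $n-t+1$ qubits. Every other block ($V_2$ with $t+2$, single $Z$ gates with $1$, etc.) is no larger than one of these two under the hypothesis $1<t<\log_2 n - 1$, which forces $2^t < n/2$ and $t < \log_2 n$, so in particular $t+2 \le n-t+1$ and $2^t+t+1$ are the genuine extremes. Hence the largest single computing node uses $\max\{n-t+1,\ 2^t+t+1\}$ qubits. The main obstacle I anticipate is bookkeeping: correctly reading off from Eqs.~(\ref{O_{f_w}})–(\ref{F}) and (\ref{V1})–(\ref{D}) exactly which qubits are ``owned'' by a node versus which are merely communicated in from other nodes via the Beals et al.\ parallel-addressing scheme, so that the $O^*_f$ block does not spuriously inflate the count to $n+2^t$; I would resolve this by appealing to the decomposition circuit in Figure~\ref{FD operator -G'} and the convention, inherited from \cite{Qiu24,beals_efficient_2013}, that each $O_{f_w}$ is itself implemented distributively and only the assembled local gates $U$, $V_1$, $V_2$, $H^{\otimes n}$ (restricted appropriately) determine the per-node width.
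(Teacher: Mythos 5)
Your overall strategy---reading the per-node register widths off the decomposition circuit of $-G'$ in Fig.~\ref{FD operator -G'} and taking the maximum---is the same as the paper's, and your final inventory ($U$ at $2^t+t+1$, $V_1$ at $n-t+1$, $V_2$ at $t+2$, single-qubit $Z$ gates) matches. However, your handling of the sub-oracles $O_{f_w}$ contains a genuine misstep. You compute a worst-case width of $n+2^t$ for $O_{f_{1^t}}$ and then escape it by positing that each $O_{f_w}$ is itself implemented distributively across several nodes, so that it does not enter the per-node count. That is not the right resolution, and it is not what the paper does: each $O_{f_w}$ \emph{is} a single computing node, and it contributes $n-t+1$ qubits. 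The point you are missing is that in Eq.~(\ref{O_{f_w}}) the register $\ket{b_w}$ of $t+{\rm BI}(w)$ qubits is acted on trivially---it is tensor-product padding whose only role is to place the target qubit at position ${\rm BI}(w)+1$ of the $a$-block in the product defining $O^*_{f}$ in Eq.~(\ref{O^*_{f}}). The nontrivial support of $O_{f_w}$ is the $u$ register plus its one target qubit, i.e.\ $n-t+1$ qubits, which is precisely why the paper lists the sub-oracle nodes alongside $V_1$ as a source of the $n-t+1$ term. Had the $n-t+1$ term not also arisen from $V_1$, your incorrect exclusion of the $O_{f_w}$ nodes would have changed the answer.

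A second, related omission: you assert at the outset that ``the $H^{\otimes n}$ blocks touch at most $n$ qubits.'' If $H^{\otimes n}$ were executed on one node, that node would have width $n$, which strictly exceeds $\max\{n-t+1,\,2^t+t+1\}$ under the initialization $1<t<\log_2 n-1$ (since $t>1$ gives $n>n-t+1$, and $2^t<n/2$ gives $2^t+t+1<n$ for the relevant $n$), so the claimed bound would fail. The paper closes this explicitly by observing that $H^{\otimes n}=H^{\otimes(n-t)}\otimes H^{\otimes t}$ factorizes and is implemented on two separate nodes of widths $n-t$ and $t$. Your parenthetical ``(restricted appropriately)'' gestures at this, but the factorization must be stated, since otherwise the maximum you compute is not taken over all nodes actually appearing in the circuit. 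With these two repairs your count is correct and coincides with the paper's.
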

\begin{proof}
%Since in Algorithm \ref{DMA.alg}, its dominant operator is the iterative operator $G'$, it is sufficient to analyse the number of quabits required for the largest single computing node in the decomposition circuit that implements the operator $G'$.

In Fig. \ref{FD operator -G'},  i.e.,  the further decomposition of the circuit diagram for the itertated operator $-G'$, the number of qubits required to realize a single computing node corresponding to each sub-oracle $O_{f_w}$ $(w\in\{0,1\}^t)$ is $n-t+1$, and the number of qubits required to realize a single computing node corresponding to the operator $U$ is $2^t+t+1$.  The numbers of qubits required to implement a single computing node corresponding to the operators $V_1$ and $V_2$ are $n-t+1$ and $t+2$, respectively. Thus, the number of qubits required to implement a largest single computing node corresponding to  $G'$ is $\max\{n-t+1, 2^t+t+1\}$.

%Since in Algorithm \ref{DMA.alg}, its first applied operator 

We note that $H^{\otimes n}$ can be decomposed into the tensor product of $H^{\otimes (n-t)}$ and $H^{\otimes t}$ and can be implemented on two different computing nodes.  $G'$ can be decomposed into circuits  as Fig. \ref{FD operator -G'}. As a result, the number of qubits required to implement a largest single computing node in Algorithm \ref{DMA.alg} is $\max\{n-t+1, 2^t+t+1\}$.
\end{proof}

\subsection{Communication complexity analysis of algorithm}

The quantum communication complexity of Algorithm \ref{DMA.alg} is analyzed in the following. 

\begin{proposition}
Let Boolean function $f:\{0,1\}^n\rightarrow\{0,1\}$, $f$ is divided into $2^t$ subfunctions $f_w:\{0,1\}^{n-t}\rightarrow\{0,1\}$, where $f_w(u)=f(uw)$, $u \in \{0,1\}^{n-t}$, $w \in \{0,1\}^t$, $1\leq t<n$.  The quantum communication complexity of Algorithm \ref{DMA.alg} is $O\left(\sqrt{2^n}\left(2^t(n-t+1)+t\right)\right)$.
\end{proposition}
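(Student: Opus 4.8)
The plan is to count the number of qubits that must be transmitted between computing nodes during a single execution of Algorithm \ref{DMA.alg}, and then multiply by the number of iterations of $G'$, namely $\left\lfloor \pi/\left(4 \arcsin {\sqrt{a/N}}\right)\right\rfloor = O(\sqrt{N}) = O(\sqrt{2^n})$. First I would recall the decomposition of $-G'$ given in Fig. \ref{FD operator -G'}: each application of $G'$ consists of the centralized query operator $F = (I^{\otimes(n-t)}\otimes U)O_f^*$, the operator $Z$, then $F^\dagger$, followed by the Hadamard layer, $Z_0' = D^\dagger(I^{\otimes(n+1)}\otimes Z\otimes I^{\otimes(2^t-1)})D$, and another Hadamard layer, where $D = (I^{\otimes(n-t)}\otimes V_2\otimes I^{\otimes(2^t-1)})(V_1\otimes I^{\otimes 2^t})$. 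The Hadamard layers are local (they split as $H^{\otimes(n-t)}\otimes H^{\otimes t}$, applied within nodes), so they contribute no communication; the only communication comes from $O_f^*$, $U$, $V_1$, $V_2$, $D$, and their daggers.

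The key step is to bound the communication of $O_f^* = \prod_{w\in\{0,1\}^t}(O_{f_w}\otimes I^{\otimes(2^t-{\rm BI}(w))})$. Each sub-oracle $O_{f_w}$ acts on $n-t$ data qubits $\ket{u}$ plus one target qubit. To evaluate $O_{f_w}$ at the node holding $f_w$, one must route the $n-t$ qubits $\ket{u}$ (shared, living on the "first" node) together with one answer qubit to node $w$ and back; invoking the parallel-addressing scheme of Beals et al. \cite{beals_efficient_2013} as the paper does for realizing these query operators, transmitting $\ket{u}$ and one target qubit costs $O(n-t+1)$ qubits of communication per sub-oracle call (there and back). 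Since there are $2^t$ sub-oracles, $O_f^*$ (and $(O_f^*)^\dagger$) costs $O\!\left(2^t(n-t+1)\right)$ qubits of communication. The operator $U$, acting on the $t$-qubit index register, the $2^t$ answer qubits and one output qubit, and the operators $V_1$ (which needs $\ket{u}$, i.e.\ $n-t$ qubits, plus $\ket{w}$ and one target) and $V_2$ (on $t+2$ qubits) each require moving $\ket{u}$ or $\ket{w}$ between the node holding those registers and the node performing the OR / addressing; this contributes an additional $O(n-t+t) = O(n)$, which is dominated by $O\!\left(2^t(n-t+1)\right)$ except for the separate $+t$ term coming from shuttling the $w$-register (of size $t$) where it is needed. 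Hence one iteration of $G'$ costs $O\!\left(2^t(n-t+1)+t\right)$ qubits of communication.

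Finally I would assemble the pieces: $O(\sqrt{2^n})$ iterations, each costing $O\!\left(2^t(n-t+1)+t\right)$ communicated qubits, yields total quantum communication complexity
\begin{equation}
O\!\left(\sqrt{2^n}\left(2^t(n-t+1)+t\right)\right),
\end{equation}
as claimed. The main obstacle is pinning down the precise communication cost of a single sub-oracle call under the Beals et al. parallel addressing model — in particular justifying that routing the shared $(n-t)$-qubit input $\ket{u}$ to a remote node and back (rather than, say, keeping copies, which quantum mechanics forbids) can be done with $O(n-t+1)$ transmitted qubits per call, and accounting honestly for the uncomputation ($F^\dagger$, $D^\dagger$) which doubles but does not asymptotically change the count. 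The remaining bookkeeping — verifying that the Hadamard layers and the local $Z$ operations incur zero communication, and that the $2^t$ factor correctly absorbs the per-call cost — is routine.
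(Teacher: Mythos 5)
Your proposal is correct and follows essentially the same route as the paper: both count the qubits transmitted per application of $G'$ by walking through the decomposition in Fig.~\ref{FD operator -G'} (the $2^t$ sub-oracles each receiving the $(n-t)$-qubit register, the index and target qubits routed to $U$, the $O(n)$ cost of $V_1$ and $V_2$, and the uncomputation), arriving at $O\left(2^t(n-t+1)+t\right)$ per iteration, and then multiply by the $O(\sqrt{2^n})$ iterations. The only cosmetic difference is that the paper routes $\ket{u}$ sequentially from node to node along the chain of sub-oracles rather than back and forth to a central node, which does not change the asymptotic count.
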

\begin{proof}
First,  in Fig.  \ref{FD operator -G'}, the $n-t$  qubits of  Node 1 are transmitted to   $O_{f_{0^t}}$ of Node 3 as its control qubits, then the $n-t$ control qubits of $O_{f_{0^t}}$  are transmitted to  $O_{f_{0^{t-1}1}}$ of Node 4 as its control qubits,  similarly  the $n-t$ control qubits of  $O_{f_{0^{t-1}1}}$ of Node 4 are transmitted to  $O_{f_{0^{t-2}10}}$ of Node 5 as its control qubits, and so on. Finally, the $n-t$ control qubits of  $O_{f_{1^{t-1}0}}$ of Node $2^t+1$ are transmitted to  $O_{f_{1^ t}}$ of Node $2^t+2$ as its control qubits. The quantum communication complexity of the process is $O\left(2^t(n-t)\right)$.

Next, after obtaining the state $\ket{\phi_1}$ in Algorithm \ref{DMA.alg},
the $t$ qubits of  Node 2 are transmitted to   $U$ of  Node $2^t+3$,  the  target qubit of   $O_{f_{0^{t}}}$ of Node 3 is transmitted to   $U$ of  Node $2^t+3$, the  target qubit of   $O_{f_{0^{t-1}1}}$ of Node 4 is transmitted to   $U$ of  Node $2^t+3$, and so on. Finally,  the  target qubit  of  $O_{f_{1^ t}}$ of Node $2^t+2$ is transmitted to  $U$ of Node $2^t+3$. The quantum communication complexity of the process is $O \left(2^t+t\right)$. 
Similarly, the above analysis shows that in Algorithm \ref{DMA.alg}, the quantum communication complexity of the process from state $\ket{\phi_2}$ to state $\ket{\phi_3}$ is $O\left(2^t(n-t+1)+t\right)$.

Then, in Algorithm \ref{DMA.alg}, from state $\ket{\phi_4}$ to state $\ket{\phi_5}$, $n-t$ qubits of Node 1 are transmitted to $V_1$ of Node 3. From state $\ket{\phi_5}$ to state $\ket{\phi_6}$, $t+1$ qubits of Node 2 and Node 3 are transmitted to $V_2$ of Node 4. The complexity of quantum communication from state $\ket{\phi_4}$ to state $\ket{\phi_6}$ in Algorithm \ref{DMA.alg}  is $O \left(n\right)$. Similarly, the quantum communication complexity of the process from state $\ket{\phi_6}$ to state $\ket{\phi_7}$ is $O \left(n\right)$.

Finally, based on the above analysis, the quantum communication complexity of one iteration run of algorithm \ref{DMA.alg} is $O\left(2^t(n-t+1)+n+t\right)$. Since Algorithm \ref{DMA.alg} needs to be run $O\left(\sqrt{2^n}\right)$ times iteratively, the quantum communication complexity of Algorithm \ref{DMA.alg} is $O\left(\sqrt{2^n}\left(2^t(n-t+1)+n+t\right)\right)$. %Specifically, since $1<t<\log_2n-1$ is initialized in Algorithm \ref{DMA.alg}, 
So, the quantum communication complexity of Algorithm \ref{DMA.alg} is $O\left(n^2\sqrt{2^n}\right)$.
\end{proof}

\section{Distributed exact multi-objective quantum search algorithm}\label{Sec4}

In this section, we first describe  the design of  a distributed exact multi-objective quantum search algorithm, and then give its correctness, space  complexity and communication complexity analyses.

\subsection{Design of algorithm}

In the following, we  describe the operators used in  the distributed exact multi-objective quantum search algorithm, i.e., Algorithm \ref{DEMA.alg}.

Define the operator $S_{f}(\phi): \{0,1\}^{n+2^t+1}\rightarrow  \{0,1\}^{n+2^t+1}$ as:
\begin{equation}\label{S_{f}}
\begin{split}
S_{f}(\phi)\Ket{u,w,a,b}=e^{\mathrm{i}\phi\cdot (b\oplus f_w(u))}\Ket{u,w,a,b},
\end{split}
\end{equation}
where   $u\in\{0,1\}^{n-t}$, $w\in\{0,1\}^{t}$, $a=a_{0^t}a_{0^{t-1}1}\cdots a_{1^t}\in \{0,1\}^{2^t}$ and $b\in \{0,1\}$.

Define the operator $S_{0,H}(\phi): \{0,1\}^{n+2^t+1}\rightarrow  \{0,1\}^{n+2^t+1}$ as:
\begin{equation}\label{S_{0,H}}
\begin{split}
S_{0,H}(\phi)
=&I^{\otimes(n+2^t+1)}+(e^{\mathrm{i}\phi}-1)\Ket{h}\Bra{h}+(e^{\mathrm{i}\phi}-1)\Ket{h^{\perp}}\Bra{h^{\perp}},
\end{split}
\end{equation}
where $\Ket{h}$ is denoted in Eq. (\ref{h}), $\Ket{h^{\perp}}$ is denoted in Eq. (\ref{h^perp}).

Define the operator $Q: \{0,1\}^{n+2^t+1}\rightarrow  \{0,1\}^{n+2^t+1}$ as:
\begin{equation}\label{Q}
\begin{split}
Q=-S_{0,H}(\phi)S_{f}(\phi).
\end{split}
\end{equation}

\begin{figure}[H]
  \includegraphics[width=\textwidth]{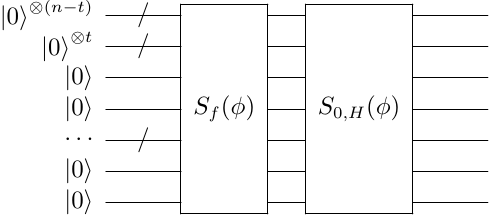}
  \caption{The circuit for  operator -$Q$.}
  \label{operator -Q)}
\end{figure}

In fact,  both $S_{f}(\phi)$ and  $S_{0,H}(\phi)$ can be decomposed into a number of smaller and  physically realizable operators. 
In the following, we introduce some operators related to the decomposition representation of  $S_{f}(\phi)$.

Define the operator $E(\phi): \{0,1\}\rightarrow \{0,1\}$   as:
\begin{equation}
\begin{split}
		\label{E_phi}
		E(\phi)\Ket{x}=
		\begin{cases}
		\Ket{x}\text{, } & x=0\text{;}\\
		e^{\mathrm{i}\phi}\Ket{x}\text{, } & x=1\text{.}
          \end{cases}
\end{split}
\end{equation}
In fact, $S_{f}(\phi)$ defined in Eq. (\ref{S_{f}}) can be decomposed  by the following form:
\begin{equation}\label{S_{f}DE}
\begin{split}
S_{f}(\phi)=&F^{\dagger}\left(I^{\otimes \left(2^t+n\right)}\otimes E(\phi)\right)F,
\end{split}
\end{equation}
where  $F$ is defined in Eq. (\ref{F}). This result will be proved in Proposition \ref{pp_Sf} of Subsection \ref{Correctness analysis of  algorithm exact}.

\begin{figure}[H]
  \centerline{\includegraphics[width=0.7\textwidth]{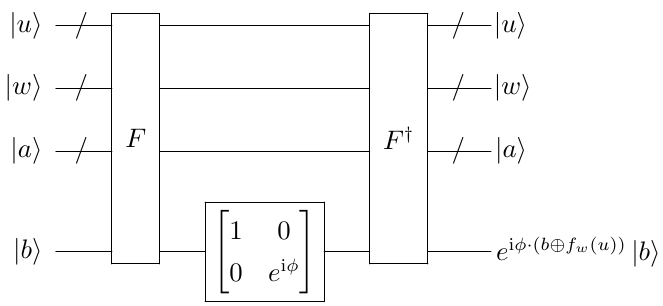}}
  \caption{The circuit for  operator $S_f(\phi)$.}
\end{figure}

In the following, we introduce some operators related to the decomposition of $S_{0,H}(\phi)$.

Define the operator $S_0(\phi): \{0,1\}^{n+2^t+1}\rightarrow  \{0,1\}^{n+2^t+1}$  as:
\begin{equation}\label{S_0}
\begin{split}
S_0(\phi)=D^{\dagger}\left(I^{\otimes (n+1)}\otimes E(\phi)\otimes I^{\otimes\left(2^t-1\right)}\right)D,
\end{split}
\end{equation}
where $D$ is defined in Eq. (\ref{D}).

\begin{figure}[H]
  \centerline{\includegraphics[width=0.7\textwidth]{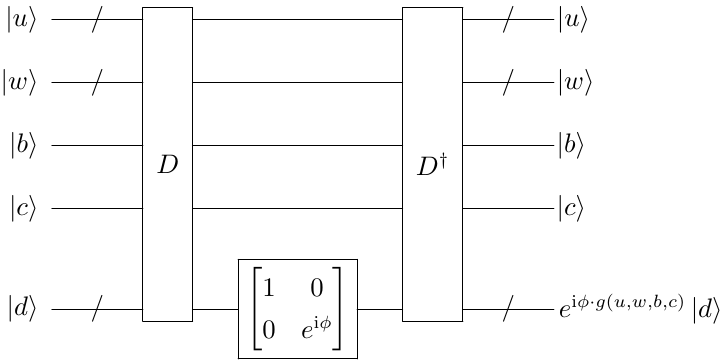}}
  \caption{The circuit for  operator $S_0(\phi)$.}
\end{figure}

 $S_{0,H}(\phi)$ defined in Eq. (\ref{S_{0,H}}) can be decomposed as follows:
\begin{equation}\label{S_{0,H}DE}
\begin{split}
S_{0,H}(\phi)=\left(H^{\otimes n}\otimes I^{\otimes \left(2^t+1\right)}\right)S_0(\phi)\left(H^{\otimes n}\otimes I^{\otimes \left(2^t+1\right)}\right).
\end{split}
\end{equation}
The proof will be deferred to Proposition \ref{pp_S0H} of Subsection \ref{Correctness analysis of  algorithm exact}.

\begin{figure}[H]
  \centerline{\includegraphics[width=0.7\textwidth]{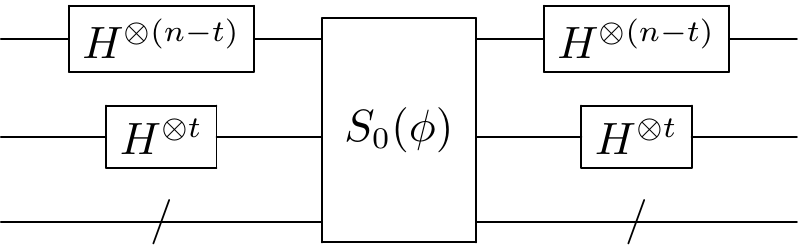}}
  \caption{The circuit for  operator $S_0H(\phi)$.}
\end{figure}

Now, we give the further decomposition circuit for  -$Q$.

\begin{figure}[H]
  \includegraphics[width=\textwidth]{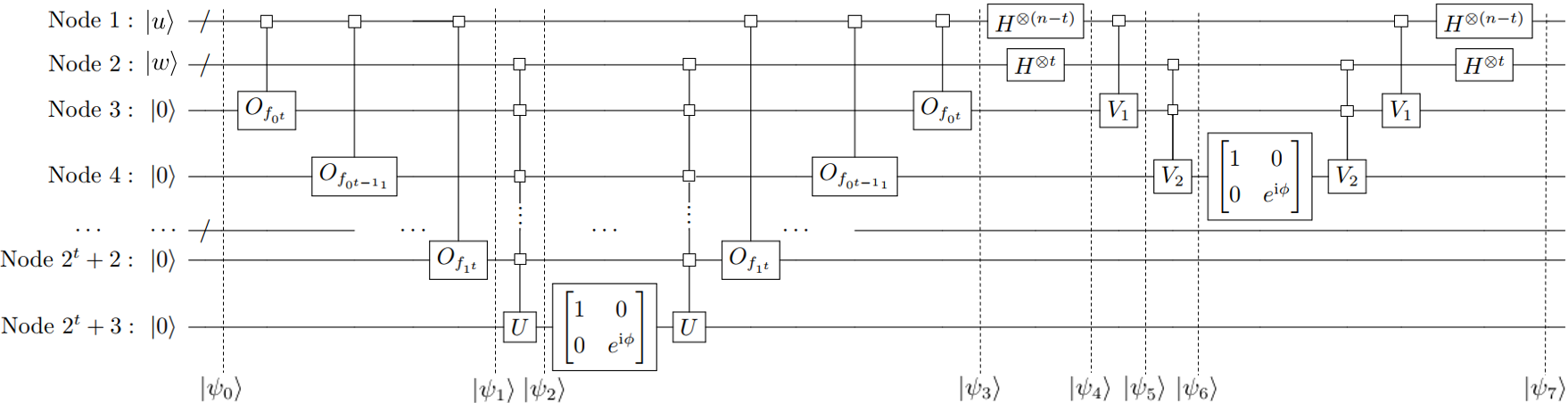}
  \caption{The further decomposition circuit for    -$Q$.}
  \label{FD operator -Q}
\end{figure}

\begin{algorithm}
\caption{Distributed exact multi-objective quantum search algorithm}
\label{DEMA.alg}
\begin{algorithmic}
\STATE \textbf{Input}: $f:\{0,1\}^n \rightarrow\{0,1\}$ with $|\{x\in\{0,1\}^n| f(x)=1\}|=a\geq 1$, $N=2^n$.
\STATE \textbf{Initialization}: $n>4$, $1<t<\log_2n-1$, $\theta= \arcsin {\sqrt{a/N}}$, $K=\lfloor(\pi/2-\theta)/(2\theta)\rfloor$, $\phi=2\arcsin\left(\sin\left(\frac{\pi}{4K+6}\right) / \sin \theta\right)$.
\STATE \textbf{Output}: The string $x$ such that $f(x)=1$ with certainty.
\STATE \textbf{Procedure:}

\STATE \textbf{1.} $\left(H^{\otimes (n-t)}\otimes H^{\otimes t} \otimes I^{\otimes 2^t}\otimes I\right)|0\rangle^{\otimes (n-t)}|0\rangle^{\otimes t}|0\rangle^{\otimes 2^t}|0\rangle=\frac{1}{\sqrt{2^n}}\sum\limits_{x\in\{0 , 1\}^n}|x\rangle|0\rangle^{\otimes (2^t+1)}$.

\STATE \textbf{2.} $Q$ is performed with $K+1$ times, where $Q=-S_{0,H}(\phi)S_{f}(\phi)$, $S_f(\phi)$ is defined in Eq. ($\ref{S_{f}}$), and $S_{0,H}(\phi)$ is defined in Eq. ($\ref{S_{0,H}}$).

\STATE \textbf{3.} Measure the first $n$ qubits of the resulting state.
\end{algorithmic}
\end{algorithm}

%\begin{algorithm}
%\caption{Distributed exact multi-objective quantum search algorithm}
%\label{DEMA.alg}
%\begin{algorithmic}
%\STATE \textbf{Input}: $f:\{0,1\}^n \rightarrow\{0,1\}$ with $|\{x\in\{0,1\}^n| f(x)=1\}|=a\geq 1$, $N=2^n$.
%\STATE \textbf{Initialization}: $n>4$, $1<t<\log_2n-1$, $\theta= \arcsin {\sqrt{a/N}}$, $K=\lfloor(\pi/2-\theta)/(2\theta)\rfloor$, $\phi=2\arcsin\left(\sin\left(\frac{\pi}{4K+6}\right) / \sin \theta\right)$.
%\STATE \textbf{Output}: The string $x$ such that $f(x)=1$ with certainty.
%\STATE \textbf{Procedure:}

%\STATE \textbf{1.} $\left(H^{\otimes n}\otimes I^{\otimes \left(2^t+1\right)}\right)|0\rangle^{\otimes (n+2^t+1)}=\frac{1}{\sqrt{2^n}}\sum\limits_{x\in\{0 , 1\}^n}|x\rangle|0\rangle^{\otimes (2^t+1)}$.

%\STATE \textbf{2.}  $Q$ is   performed with  $K+1$ times, where $Q=-S_{0,H}(\phi)S_{f}(\phi)$, $S_f(\phi)$ is defined in Eq. ($\ref{S_{f}}$), and $S_{0,H}(\phi)$ is defined in Eq. ($\ref{S_{0,H}}$).

%\STATE \textbf{3.} Measure the first $n$ qubits of the resulting state.

%\end{algorithmic}
%\end{algorithm}

\begin{figure*}[h]
  \includegraphics[width=\textwidth]{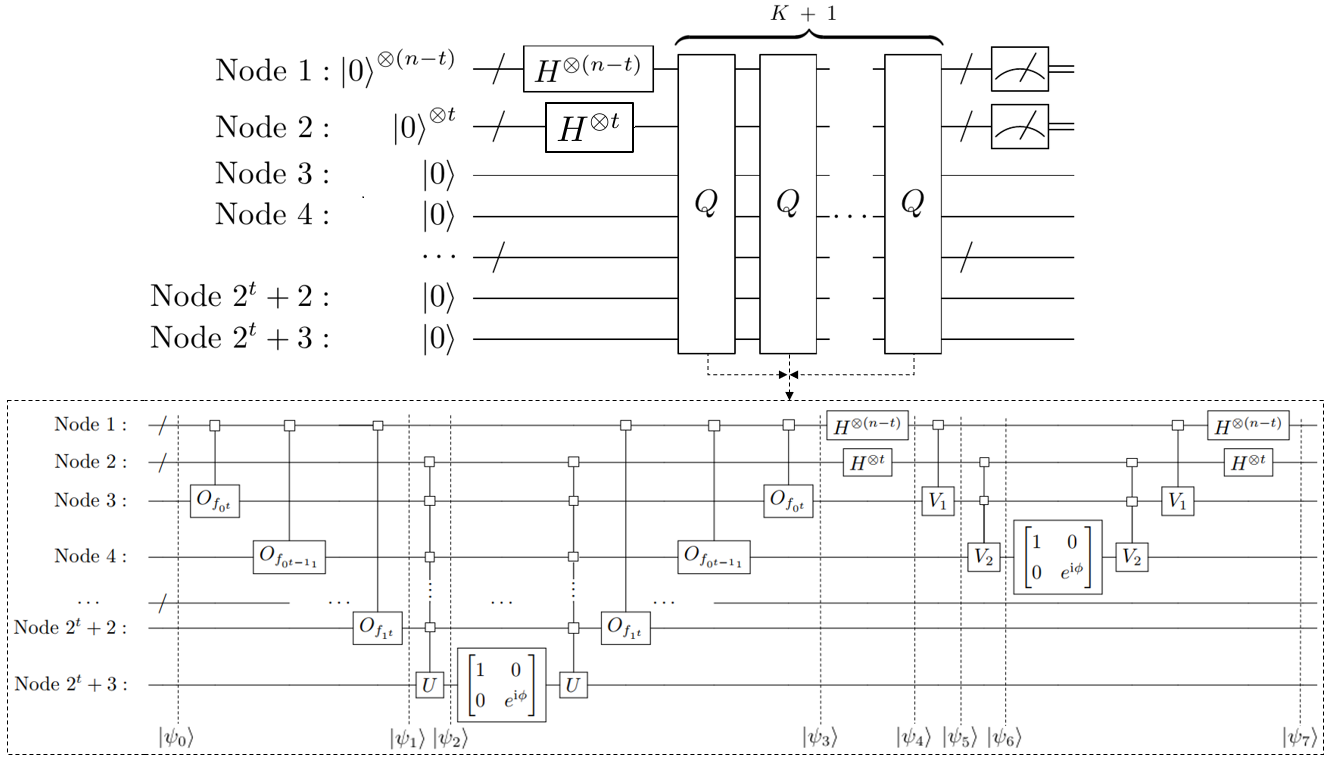}
  \caption{The circuit for  Algorithm \ref{DEMA.alg}.}
\end{figure*}

%Denote
%\begin{align}
%&C=\{x\in\{0 , 1\}^n | f(x)=1\} 
%,\\
%&D=\{x\in\{0 , 1\}^n | f(x)=0\} 
%,\\
%&|C\rangle=\frac{1}{\sqrt{a}}\sum\limits_{x\in A}|x\rangle\Ket{0}^{\otimes \left(2^t+1\right)},\\
%&|D\rangle=\frac{1}{\sqrt{b}}\sum\limits_{x\in B}|x\rangle\Ket{0}^{\otimes \left(2^t+1\right)}.
%\end{align}

\subsection{Correctness analysis of  algorithm} \label{Correctness analysis of  algorithm exact}

In this subsection, we first give the relevant lemmas and a theorem proving the correctness of Algorithm \ref{DEMA.alg}, and then two propositions are given for the equivalent representations of  $S_{f}(\phi)$ and $S_{0,H}(\phi)$. First, we give five lemmas related to proving the correctness of Algorithm \ref{DEMA.alg}.

\begin{lemma}\label{lem_MQ}
Given a matrix $M_Q=-e^{\mathrm{i}\phi}\begin{bmatrix}
1+(e^{\mathrm{i}\phi}-1)\sin^2\theta & (1-e^{-\mathrm{i}\phi})\sin\theta\cos\theta \\
(e^{\mathrm{i}\phi}-1)\sin\theta\cos\theta &1-(1-e^{-\mathrm{i}\phi})\sin^2\theta
\end{bmatrix}$, $M_Q$ can be equivalently represented as
\begin{equation}
 M_Q=-e^{\mathrm{i}\phi}\left[\cos\left(\frac{\alpha}{2}\right)I+\mathrm{i}\sin\left(\frac{\alpha}{2}\right)\left(n_xX+n_yY+n_zZ\right)\right],
 \end{equation}
where $\alpha=4\beta$, $\sin\beta=\sin\left(\frac{\phi}{2}\right)\sin\theta$, $\theta=\arcsin(\sqrt{a/N})$, 
$n_x=\frac{\cos\theta}{\cos\beta}\cos\left(\frac{\phi}{2}\right), n_y=\frac{\cos\theta}{\cos\beta}\sin\left(\frac{\phi}{2}\right), n_z=\frac{\cos\theta}{\cos\beta}\cos\left(\frac{\phi}{2}\right)\tan\theta,
$
and
$
I=\begin{bmatrix}
1 & 0 \\
0 & 1\end{bmatrix},
X=\begin{bmatrix}
0 & 1\\
1 & 0\end{bmatrix},
Y=\begin{bmatrix}
0 & -\mathrm{i}\\
\mathrm{i} & 0\end{bmatrix},
Z=\begin{bmatrix}
1 & 0\\
0 & -1\end{bmatrix}
$.
\end{lemma}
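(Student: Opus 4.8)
\textbf{Proof proposal for Lemma \ref{lem_MQ}.}

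The plan is to verify the claimed identity by direct computation, exploiting the fact that any $2\times 2$ unitary (up to a global phase) can be written in the form $\cos(\alpha/2)\,I + \mathrm{i}\sin(\alpha/2)\,(n_xX + n_yY + n_zZ)$ with $n_x^2+n_y^2+n_z^2=1$. First I would factor out the scalar $-e^{\mathrm{i}\phi}$ from $M_Q$, so that it suffices to show
\begin{equation}
\begin{bmatrix}
1+(e^{\mathrm{i}\phi}-1)\sin^2\theta & (1-e^{-\mathrm{i}\phi})\sin\theta\cos\theta \\
(e^{\mathrm{i}\phi}-1)\sin\theta\cos\theta & 1-(1-e^{-\mathrm{i}\phi})\sin^2\theta
\end{bmatrix}
= \cos\left(\tfrac{\alpha}{2}\right)I + \mathrm{i}\sin\left(\tfrac{\alpha}{2}\right)\left(n_xX + n_yY + n_zZ\right).
\end{equation}
I would then expand the right-hand side entrywise using the explicit Pauli matrices given in the statement, obtaining a matrix whose entries are linear combinations of $\cos(\alpha/2)$, $\sin(\alpha/2)\,n_z$ (diagonal) and $\sin(\alpha/2)(n_x \mp \mathrm{i} n_y)$ (off-diagonal).

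The core of the argument is matching the two sides entry by entry. For the off-diagonal entries, I would substitute $\alpha = 4\beta$, use the half-angle relation $\sin(\alpha/2) = \sin(2\beta) = 2\sin\beta\cos\beta$, and plug in $\sin\beta = \sin(\phi/2)\sin\theta$ together with $n_x \pm \mathrm{i} n_y = \frac{\cos\theta}{\cos\beta} e^{\pm \mathrm{i}\phi/2}$; the $\cos\beta$ factors cancel, and after writing $e^{\mathrm{i}\phi/2} - e^{-\mathrm{i}\phi/2} = 2\mathrm{i}\sin(\phi/2)$ one recovers $(e^{\mathrm{i}\phi}-1)\sin\theta\cos\theta$ and $(1-e^{-\mathrm{i}\phi})\sin\theta\cos\theta$ up to the expected phase. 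For the diagonal, I would compute $\cos(\alpha/2) \pm \mathrm{i}\sin(\alpha/2)n_z = \cos(2\beta) \pm \mathrm{i}\,2\sin\beta\cos\beta \cdot \frac{\cos\theta}{\cos\beta}\cos(\phi/2)\tan\theta$, simplify the second term to $\pm\,2\mathrm{i}\sin(\phi/2)\sin^2\theta$ (using $\sin\theta\tan\theta \cdot \cos\theta = \sin^2\theta$ after the $\cos\beta$ cancellation), and check that $\cos(2\beta) = 1 - 2\sin^2\beta = 1 - 2\sin^2(\phi/2)\sin^2\theta$ combines with that term to give exactly $1 + (e^{\mathrm{i}\phi}-1)\sin^2\theta$ and $1 - (1-e^{-\mathrm{i}\phi})\sin^2\theta$ respectively, again via $e^{\pm\mathrm{i}\phi} - 1 = \pm 2\mathrm{i}\sin(\phi/2)e^{\pm\mathrm{i}\phi/2}$-type manipulations.

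Before doing any of this I would record the consistency check $n_x^2 + n_y^2 + n_z^2 = \frac{\cos^2\theta}{\cos^2\beta}\big(\cos^2(\phi/2) + \sin^2(\phi/2) + \cos^2(\phi/2)\tan^2\theta\big) = \frac{\cos^2\theta}{\cos^2\beta}\big(1 + \cos^2(\phi/2)\tan^2\theta\big)$, which equals $1$ precisely because $\cos^2\beta = 1 - \sin^2(\phi/2)\sin^2\theta = \cos^2\theta + \sin^2\theta\cos^2(\phi/2) = \cos^2\theta(1 + \cos^2(\phi/2)\tan^2\theta)$; this both validates that $\hat n$ is a genuine unit vector and supplies the key identity $\cos^2\beta = \cos^2\theta(1 + \cos^2(\phi/2)\tan^2\theta)$ that drives the cancellations above. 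The main obstacle is purely bookkeeping: keeping the complex phases $e^{\pm\mathrm{i}\phi/2}$ straight and not dropping a sign when passing between $X,Y$ contributions and the $1-e^{\pm\mathrm{i}\phi}$ forms — there is no conceptual difficulty, since this is essentially the $SU(2)$ rotation decomposition already used by Long \cite{Long2001, Long2001SO3} for the operator $L$, now transcribed for $Q$.
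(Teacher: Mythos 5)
Your proposal is correct and follows essentially the same route as the paper's proof: factor out the global phase $-e^{\mathrm{i}\phi}$, expand $\cos(\alpha/2)I+\mathrm{i}\sin(\alpha/2)(n_xX+n_yY+n_zZ)$ entrywise, substitute $\alpha=4\beta$ and $\sin\beta=\sin(\phi/2)\sin\theta$, and match each entry against $M_Q$ via double-angle and $e^{\pm\mathrm{i}\phi}-1$ identities (the paper carries out $q_{11}$ in full and treats the remaining entries "similarly"). The only blemish is a transcription slip in your diagonal step, where the second term should read $\pm\,2\mathrm{i}\sin(\phi/2)\cos(\phi/2)\sin^2\theta=\pm\,\mathrm{i}\sin\phi\,\sin^2\theta$ rather than $\pm\,2\mathrm{i}\sin(\phi/2)\sin^2\theta$; your surrounding manipulations and the final matching are consistent with the corrected expression.
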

\begin{proof}
Let
\begin{equation}
\begin{split}
\begin{bmatrix}
q_{11} &q_{12}  \\
q_{21}  &q_{22} 
\end{bmatrix}=&-e^{\mathrm{i}\phi}\left[\cos\left(\frac{\alpha}{2}\right)I+\mathrm{i}\sin\left(\frac{\alpha}{2}\right)\left(n_xX+n_yY+n_zZ\right)\right]\\
=&-e^{\mathrm{i}\phi}\begin{bmatrix}
\cos\left(\frac{\alpha}{2}\right)+\mathrm{i}\sin\left(\frac{\alpha}{2}\right)n_z &\mathrm{i}\sin\left(\frac{\alpha}{2}\right)n_x+\sin\left(\frac{\alpha}{2}\right)n_y \\
\mathrm{i}\sin\left(\frac{\alpha}{2}\right)n_x-\sin\left(\frac{\alpha}{2}\right)n_y  &\cos\left(\frac{\alpha}{2}\right)-\mathrm{i}\sin\left(\frac{\alpha}{2}\right)n_z 
\end{bmatrix},
\end{split}
\end{equation}
where $\alpha=4\beta$, $\sin\beta=\sin\left(\frac{\phi}{2}\right)\sin\theta$, $\theta=\arcsin(\sqrt{a/N})$ and
\begin{equation}
\begin{split}
n_x=\frac{\cos\theta}{\cos\beta}\cos\left(\frac{\phi}{2}\right), n_y=\frac{\cos\theta}{\cos\beta}\sin\left(\frac{\phi}{2}\right), n_z=\frac{\cos\theta}{\cos\beta}\cos\left(\frac{\phi}{2}\right)\tan\theta.
\end{split}
\end{equation}

In order to prove that $M_Q$ can be equivalently represented as 
\begin{equation}\label{M_QERP}
M_Q=-e^{\mathrm{i}\phi}\left[\cos\left(\frac{\alpha}{2}\right)I+\mathrm{i}\sin\left(\frac{\alpha}{2}\right)\left(n_xX+n_yY+n_zZ\right)\right], 
\end{equation}
it is equivalent to proving
\begin{equation}
\begin{bmatrix}
q_{11} &q_{12}  \\
q_{21}  &q_{22} 
\end{bmatrix}=\begin{bmatrix}
-e^{\mathrm{i}\phi}(1+(e^{\mathrm{i}\phi}-1)\sin^2\theta) & -e^{\mathrm{i}\phi}(1-e^{-\mathrm{i}\phi})\sin\theta\cos\theta \\
-e^{\mathrm{i}\phi}(e^{\mathrm{i}\phi}-1)\sin\theta\cos\theta & -e^{\mathrm{i}\phi}\left(1-(1-e^{-\mathrm{i}\phi})\sin^2\theta\right)
\end{bmatrix}.
\end{equation}

In fact, we have
\begin{equation}\label{Q11}
\begin{split}
q_{11}=&-e^{\mathrm{i}\phi}\left(\cos\left(\frac{\alpha}{2}\right)+\mathrm{i}\sin\left(\frac{\alpha}{2}\right)n_z\right)\\
=&-e^{\mathrm{i}\phi}\left(\cos\left(\frac{\alpha}{2}\right)+\mathrm{i}\sin\left(\frac{\alpha}{2}\right)\frac{\cos\theta}{\cos\beta}\cos\left(\frac{\phi}{2}\right)\tan\theta\right)\\
=&-e^{\mathrm{i}\phi}\left(\cos\left(2\beta\right)+\mathrm{i}\sin\left(2\beta\right)\frac{\sin\theta}{\cos\beta}\cos\left(\frac{\phi}{2}\right)\right)\\
=&-e^{\mathrm{i}\phi}\left(1-2\left(\sin\left(\frac{\phi}{2}\right)\sin\theta\right)^2+2\mathrm{i}\left(\sin\left(\frac{\phi}{2}\right)\sin\theta\right)\cos\left(\frac{\phi}{2}\right)\sin\theta\right)\\
=&-e^{\mathrm{i}\phi}\left(1-2\sin^2\left(\frac{\phi}{2}\right)\sin^2\theta+\mathrm{i}\sin\phi\sin^2\theta\right)\\
=&-e^{\mathrm{i}\phi}(1+(e^{\mathrm{i}\phi}-1)\sin^2\theta).
\end{split}
\end{equation}

Similarly, we have
\begin{align}
q_{12}=& -e^{\mathrm{i}\phi}(1-e^{-\mathrm{i}\phi})\sin\theta\cos\theta,\\
q_{21}=& -e^{\mathrm{i}\phi}(e^{\mathrm{i}\phi}-1)\sin\theta\cos\theta,\\
q_{22}=&-e^{\mathrm{i}\phi}\left(1-(1-e^{-\mathrm{i}\phi})\sin^2\theta\right).\label{Q22}
\end{align}

In conclusion, from Eq. (\ref{Q11}) to Eq.  (\ref{Q22}), it follows that Eq. (\ref{M_QERP}) holds.

\end{proof}

Suppose $\boldsymbol{r}=[x,y,z]^{\mathrm{T}}$ is a vector on $\mathbb{R}^3$. Then we  define
\begin{equation}
\begin{split}
\boldsymbol{r}*\boldsymbol{\sigma}=&x\sigma_1+y\sigma_2+z\sigma_3\\
=&\begin{bmatrix}
z & x-\mathrm{i}y  \\
x+\mathrm{i}y   & -z
\end{bmatrix},
\end{split}
\end{equation}
where $\sigma_1=\begin{bmatrix}
0 & 1  \\
1   & 0
\end{bmatrix}$, $\sigma_2=\begin{bmatrix}
0 & -\mathrm{i}  \\
\mathrm{i}   & 0
\end{bmatrix}$ and  $\sigma_3=\begin{bmatrix}
1 & 0  \\
0   & -1
\end{bmatrix}$ are Pauli matrices.

\begin{lemma}\label{lem_QRQHOMP}
Given  a vector $\boldsymbol{r}=[x,y,z]^{\mathrm{T}}$ on $\mathbb{R}^3$, a matrix 
$
H_{\boldsymbol{r}}
=\boldsymbol{r}*\boldsymbol{\sigma}
=\begin{bmatrix}
z & x-\mathrm{i}y  \\
x+\mathrm{i}y   & -z
\end{bmatrix}$,
 %where $\boldsymbol{\sigma}=[X,Y,Z]^{\mathrm{T}}$, $X$, $Y$ and $Z$ are Pauli matrices, 
 a matrix $M_Q=e^{\mathrm{i}\phi}
\begin{bmatrix}
u & v  \\
-\overline{v}  & \overline{u} 
\end{bmatrix}$, then the following matrix 
\begin{equation}
R_{M_Q}=
\begin{bmatrix}
\frac{1}{2}(u^2+\overline{u}^{2}-v^2-\overline{v}^{2})&\frac{\mathrm{i}}{2}(\overline{u}^{2}-u^2+\overline{v}^{2}-v^2)& -(uv+\overline{u}\overline{v})  \\
\frac{\mathrm{i}}{2}(u^2-\overline{u}^{2}+\overline{v}^{2}-v^{2}) & \frac{1}{2}(u^2+\overline{u}^{2}+v^{2}+\overline{v}^{2})&\mathrm{i}(\overline{u}\overline{v}-uv) \\
(\overline{u}v+u\overline{v})&\mathrm{i}(\overline{u}v-u\overline{v})&(u\overline{u}-v\overline{v})
\end{bmatrix}
\end{equation}
satisfies:
\begin{equation}
M_QH_{\boldsymbol{r}}{M_Q}^{-1}=(R_{M_Q}\boldsymbol{r})* \boldsymbol{\sigma},
\end{equation}
where $|u|^2+|v|^2=1$.
\end{lemma}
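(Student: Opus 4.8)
The plan is to invoke the standard $SU(2)$–$SO(3)$ correspondence: conjugation by a determinant-one unitary $2\times 2$ matrix carries the three-dimensional real space of traceless Hermitian matrices to itself and acts there as a real linear (in fact orthogonal) map, and one then reads that map off by conjugating the three Pauli matrices one at a time. First I would remove the global phase. Write $M_Q=e^{\mathrm{i}\phi}W$ with $W=\begin{bmatrix} u & v\\ -\overline{v} & \overline{u}\end{bmatrix}$. The hypothesis $|u|^2+|v|^2=1$ gives $\det W=u\overline{u}+v\overline{v}=1$ and $W^{\dagger}W=I$, so $W$ is unitary with $W^{-1}=W^{\dagger}=\begin{bmatrix}\overline{u} & -v\\ \overline{v} & u\end{bmatrix}$; hence $M_Q^{-1}=e^{-\mathrm{i}\phi}W^{-1}$ and the phase cancels in the conjugation, $M_QH_{\boldsymbol{r}}M_Q^{-1}=WH_{\boldsymbol{r}}W^{\dagger}$. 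Since $H_{\boldsymbol{r}}=\boldsymbol{r}*\boldsymbol{\sigma}=x\sigma_1+y\sigma_2+z\sigma_3$ is linear in $\boldsymbol{r}=[x,y,z]^{\mathrm{T}}$, it then suffices to compute $W\sigma_jW^{\dagger}$ for $j=1,2,3$ and to assemble the result by linearity.

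Each matrix $W\sigma_jW^{\dagger}$ is Hermitian and traceless (conjugation preserves the trace and $\operatorname{tr}\sigma_j=0$), so it equals $\boldsymbol{c}_j*\boldsymbol{\sigma}$ for a unique real vector $\boldsymbol{c}_j=[a_j,b_j,c_j]^{\mathrm{T}}$; comparing with $\boldsymbol{c}_j*\boldsymbol{\sigma}=\begin{bmatrix}c_j & a_j-\mathrm{i}b_j\\ a_j+\mathrm{i}b_j & -c_j\end{bmatrix}$ one extracts
\begin{equation}
c_j=(W\sigma_jW^{\dagger})_{11},\qquad a_j=\tfrac12\big[(W\sigma_jW^{\dagger})_{12}+(W\sigma_jW^{\dagger})_{21}\big],\qquad b_j=\tfrac{\mathrm{i}}2\big[(W\sigma_jW^{\dagger})_{12}-(W\sigma_jW^{\dagger})_{21}\big].
\end{equation}
Carrying out the three $2\times2$ products — for instance $W\sigma_1W^{\dagger}=\begin{bmatrix}\overline{u}v+u\overline{v} & u^2-v^2\\ \overline{u}^{2}-\overline{v}^{2} & -(\overline{u}v+u\overline{v})\end{bmatrix}$, and similarly for $\sigma_2$ and $\sigma_3$ — and substituting into these formulas yields $\boldsymbol{c}_1,\boldsymbol{c}_2,\boldsymbol{c}_3$, which are precisely the first, second and third columns of the matrix $R_{M_Q}$ in the statement. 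Then by linearity $M_QH_{\boldsymbol{r}}M_Q^{-1}=x(\boldsymbol{c}_1*\boldsymbol{\sigma})+y(\boldsymbol{c}_2*\boldsymbol{\sigma})+z(\boldsymbol{c}_3*\boldsymbol{\sigma})=(x\boldsymbol{c}_1+y\boldsymbol{c}_2+z\boldsymbol{c}_3)*\boldsymbol{\sigma}=(R_{M_Q}\boldsymbol{r})*\boldsymbol{\sigma}$, which is exactly the claim.

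I do not expect a conceptual obstacle here; the only delicate part is the bookkeeping in the three conjugations — keeping the complex conjugates straight and correctly picking up the $\pm\mathrm{i}/2$ factors when reading off the $b_j$-components — together with verifying at the outset that $W$ is genuinely unitary with unit determinant, which is precisely the role of the normalization $|u|^2+|v|^2=1$. (In the application $M_Q$ is the matrix of Lemma \ref{lem_MQ}, for which $|u|^2+|v|^2=\cos^2(\alpha/2)+\sin^2(\alpha/2)(n_x^2+n_y^2+n_z^2)=1$, since $[n_x,n_y,n_z]^{\mathrm{T}}$ is a unit vector; this is why the hypothesis is met there.)
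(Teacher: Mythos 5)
Your proposal is correct and takes essentially the same route as the paper: both proofs reduce to computing the conjugation $M_QH_{\boldsymbol{r}}M_Q^{-1}$ explicitly (with the global phase $e^{\mathrm{i}\phi}$ cancelling against $e^{-\mathrm{i}\phi}$ in $M_Q^{-1}$) and reading off the induced real linear map on $[x,y,z]^{\mathrm{T}}$ via the entries $x'=\tfrac12(h'_{21}+h'_{12})$, $y'=\tfrac1{2\mathrm{i}}(h'_{21}-h'_{12})$, $z'=h'_{11}$. The only difference is organizational — you conjugate the Pauli basis matrices one at a time to obtain the columns of $R_{M_Q}$ and use linearity, whereas the paper conjugates $H_{\boldsymbol{r}}$ wholesale; you also explicitly verify that $W$ is unitary with unit determinant and that the conjugates are traceless Hermitian, points the paper leaves implicit.
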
\label{lem_QRQHOMP}
\begin{proof}
%For a given vector $\boldsymbol{r}=[x,y,z]^{\mathrm{T}}$ on $\mathbb{R}^3$, let
%\begin{equation}
%H_{\boldsymbol{r}}
%=xX+yY+zZ
%=
%\begin{bmatrix}
%z & x-\mathrm{i}y  \\
%x+\mathrm{i}y   & -z
%\end{bmatrix},
%\end{equation}
% where $X$, $Y$ and $Z$ are Pauli operators.

Let 
\begin{align}
M_QH_{\boldsymbol{r}}{M_Q}^{-1}&=
e^{\mathrm{i}\phi}
\begin{bmatrix}
u & v  \\
-\overline{v}  & \overline{u} 
\end{bmatrix}
\begin{bmatrix}
z & x-\mathrm{i}y  \\
x+\mathrm{i}y   & -z
\end{bmatrix}
e^{-\mathrm{i}\phi}
\begin{bmatrix}
\overline{u} & -v  \\
\overline{v}  & u 
\end{bmatrix}\label{QH1}
\\&=
\begin{bmatrix}
z' & x'-\mathrm{i}y'  \\
x'+\mathrm{i}y'   & -z'
\end{bmatrix}\label{QH2}
\\&=
\begin{bmatrix}
h'_{11} & h'_{12}  \\
h'_{21} & h'_{22} 
\end{bmatrix}\label{QH3}
\\&=H_{\boldsymbol{r'}}
\\&=\boldsymbol{r'}*\boldsymbol{\sigma}\label{QH4},
\end{align}
where $\boldsymbol{r'}=[x',y',z']^{\mathrm{T}}$ is a vector on $\mathbb{R}^3$.%, $\boldsymbol{\sigma}=[X,Y,Z]^{\mathrm{T}}$, $X$, $Y$ and $Z$ are Pauli matrices.

By Eq. (\ref{QH1}), Eq. (\ref{QH2}) and Eq. (\ref{QH3}), we can obtain
\begin{equation}\label{RQ1}
\begin{cases}
x'=\dfrac{h'_{21}+h'_{12} }{2}=\frac{1}{2}(u^2+\overline{u}^{2}-v^2-\overline{v}^{2})x+\frac{\mathrm{i}}{2}(\overline{u}^{2}-u^2+\overline{v}^{2}-v^2)y-(uv+\overline{u}\overline{v})z,\\
y'=\dfrac{h'_{21}-h'_{12} }{2\mathrm{i}}=\frac{\mathrm{i}}{2}(u^2-\overline{u}^{2}+\overline{v}^{2}-v^{2})x+\frac{1}{2}(u^2+\overline{u}^{2}+v^{2}+\overline{v}^{2})y+\mathrm{i}(\overline{u}\overline{v}-uv)z,\\
z'=h'_{11}=-h'_{22}=(\overline{u}v+u\overline{v})x+\mathrm{i}(\overline{u}v-u\overline{v})y+(u\overline{u}-v\overline{v})z.
\end{cases}
\end{equation}

Thus, if the matrix $M_Q$ satisfies $M_QH_{\boldsymbol{r}}{M_Q}^{-1}=H_{\boldsymbol{r'}}$, then $R_{M_Q}$ also satisfies
\begin{equation}\label{RQr}
R_{M_Q}\boldsymbol{r}=\boldsymbol{r'}.
\end{equation}

With Eq. (\ref{QH1}), Eq. (\ref{QH4}) and Eq. (\ref{RQr}), we can get
\begin{equation}
M_QH_{\boldsymbol{r}}{M_Q}^{-1}=(R_{M_Q}\boldsymbol{r})* \boldsymbol{\sigma}.
\end{equation}

According to Eq. (\ref{RQ1}),  $R_{M_Q}$ is given as:
\begin{equation}
\begin{split}
R_{M_Q}=&
\begin{bmatrix}
r_{11}&r_{12}&r_{13}  \\
r_{21} & r_{22} &r_{23}\\
r_{31}&r_{32}&r_{33}
\end{bmatrix}\\
=&
\begin{bmatrix}
\frac{1}{2}(u^2+\overline{u}^{2}-v^2-\overline{v}^{2})&\frac{\mathrm{i}}{2}(\overline{u}^{2}-u^2+\overline{v}^{2}-v^2)& -(uv+\overline{u}\overline{v})  \\
\frac{\mathrm{i}}{2}(u^2-\overline{u}^{2}+\overline{v}^{2}-v^{2}) & \frac{1}{2}(u^2+\overline{u}^{2}+v^{2}+\overline{v}^{2})&\mathrm{i}(\overline{u}\overline{v}-uv) \\
(\overline{u}v+u\overline{v})&\mathrm{i}(\overline{u}v-u\overline{v})&(u\overline{u}-v\overline{v})
\end{bmatrix}.\label{RQ2}
\end{split}
\end{equation}
In particular, if $u = 1+(e^{\mathrm{i}\phi}-1)\sin^2\theta$ and $v = (1-e^{-\mathrm{i}\phi})\sin\theta\cos\theta$, i.e., 
\begin{equation}
M_Q =-e^{\mathrm{i}\phi} \begin{bmatrix}
1+(e^{\mathrm{i}\phi}-1)\sin^2\theta &(1-e^{-\mathrm{i}\phi})\sin\theta\cos\theta \\
(e^{\mathrm{i}\phi}-1)\sin\theta\cos\theta &1-(1-e^{-\mathrm{i}\phi})\sin^2\theta
\end{bmatrix},
\end{equation}
then we can derive the following equations:
\begin{align}
r_{11}=&\cos\phi\left[\cos^2(2\theta)\cos\phi+\sin^2(2\theta)\right]+\cos(2\theta)\sin^2\phi,\notag\\
r_{12}=&-\cos(2\theta)\cos\phi\sin\phi+\left[\cos^2\left(\frac{\phi}{2}\right)-\cos(4\theta)\sin^2\left(\frac{\phi}{2}\right)\right]\sin\phi,\notag\\
r_{13}=&-\sin(4\theta)\sin^2\left(\frac{\phi}{2}\right),\notag\\
r_{21}=&\cos\phi\sin\phi\left[\cos(2\theta)-1\right],\notag\\
r_{22}=&\cos^2\phi+\cos(2\theta)\sin^2\phi,\\
r_{23}=&\sin(2\theta)\sin\phi,\notag\\
r_{31}=&-\cos\phi\sin(4\theta)\sin^2\left(\frac{\phi}{2}\right)+\sin(2\theta)\sin^2\phi,\notag\\
r_{32}=&-\cos\phi\sin(2\theta)\sin\phi-\sin(4\theta)\sin^2\left(\frac{\phi}{2}\right)\sin\phi,\notag\\
r_{33}=&\cos^2(2\theta)+\cos\phi\sin^2(2\theta).\notag
\end{align}

\end{proof}

Suppose $\boldsymbol{\hat{n}}=\left[n_x,n_y,n_z\right]^{\mathrm{T}}=\left[\sin\theta'\cos\varphi',\sin\theta'\sin\varphi',\cos\theta'\right]^{\mathrm{T}}
$ is a vector on $\mathbb{R}^3$.
Then we define
\begin{equation}\label{RnDE}
R_{\hat{n}}(-\alpha)=R_{z}(\varphi')R_{y}(\theta')R_{z}(-\alpha)R_{y}(-\theta')R_{z}(-\varphi'),
\end{equation}
where
\begin{align}
R_{y}(\theta')=&\cos\left(\frac{\theta'}{2}\right)I-\mathrm{i}\sin\left(\frac{\theta'}{2}\right)Y=\begin{bmatrix}\cos\frac{\theta'}{2} &-\sin\frac{\theta'}{2} \label{R_{y}}\\
-\sin\frac{\theta'}{2} &\cos\frac{\theta'}{2} \end{bmatrix},\\
R_{z}(\varphi')=&\cos\left(\frac{\varphi'}{2}\right)I-\mathrm{i}\sin\left(\frac{\varphi'}{2}\right)Z=\begin{bmatrix}e^{-\mathrm{i}\varphi'/2} &0\\
0 &e^{\mathrm{i}\varphi'/2} \end{bmatrix}.\label{R_{z}}
\end{align}
In light of Eq. (\ref{RQ2}), we know that the matrix $R_{y}(\theta')$ in Eq. (\ref{R_{y}}) corresponds to the following matrix:
\begin{equation}
\begin{bmatrix}
\cos\theta'&0&\sin\theta'  \\
0 & 1 &0\\
-\sin\theta&0&\cos\theta'
\end{bmatrix},
\end{equation}
%which is a three-dimensional rotation matrix with angle $\theta'$ around the y-axis.
and the matrix $R_{z}(\varphi')$ in Eq. (\ref{R_{z}}) corresponds to the following matrix:
\begin{equation}
\begin{bmatrix}
\cos\varphi'&-\sin\varphi'&0  \\
\sin\varphi' & \cos\varphi' &0\\
0&0&1
\end{bmatrix}.
\end{equation}
%which is a three-dimensional rotation matrix with angle $\varphi'$ around the z-axis.

%The matrix $R_{\hat{n}}(-\alpha)$ in Eq. (\ref{RnDE}) corresponds to a three-dimensional rotation matrix rotating about the axis $\boldsymbol{\hat{n}}$, and the rotation angel is $-\alpha$ \cite{NC2000}.

\begin{lemma}\label{lem_Q3DR}
Given a matrix
%\begin{equation}
$M_Q=-e^{\mathrm{i}\phi}\left[\cos\left(\frac{\alpha}{2}\right)I+\mathrm{i}\sin\left(\frac{\alpha}{2}\right)\left(n_xX+n_yY+n_zZ\right)\right]$, 
%\end{equation}
$M_Q$ can be equivalently represented as
\begin{equation}
M_Q
=-e^{\mathrm{i}\phi}R_{\hat{n}}(-\alpha),
\end{equation}
where $\boldsymbol{\hat{n}}=\left[n_x,n_y,n_z\right]^{\mathrm{T}}$,  $R_{\hat{n}}(-\alpha)$ is defined in Eq. (\ref{RnDE}).
%then  $M_Q$ corresponds to the three-dimensional rotation matrix around the rotational axis $\boldsymbol{\hat{n}}=\left[n_x,n_y,n_z\right]^{\mathrm{T}}$ in the Bloch sphere, %spanned by  the basis $\{\Ket{A'}, \Ket{B'}\}$, 
 %and the rotation angel is $-\alpha$.%, where
%$\boldsymbol{\hat{n}}=\left[n_x,n_y,n_z\right]^{\mathrm{T}}
%$.
\end{lemma}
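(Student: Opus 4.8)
The plan is to recognize $R_{\hat n}(-\alpha)$ as the Euler-angle form of the one-qubit rotation generated by the Bloch-sphere axis $\hat n$, and then expand both sides against each other. First I would rewrite the middle factor of Eq. (\ref{RnDE}) via $R_z(-\alpha)=\cos(\alpha/2)I+\mathrm{i}\sin(\alpha/2)Z$, which is just Eq. (\ref{R_{z}}) with $\varphi'$ replaced by $-\alpha$. Substituting this into $R_{\hat n}(-\alpha)=R_z(\varphi')R_y(\theta')R_z(-\alpha)R_y(-\theta')R_z(-\varphi')$ and distributing, the $\cos(\alpha/2)I$ piece passes through the conjugating sandwich unchanged, because $R_y(\theta')R_y(-\theta')=R_z(\varphi')R_z(-\varphi')=I$, while the $\mathrm{i}\sin(\alpha/2)Z$ piece becomes $\mathrm{i}\sin(\alpha/2)\,R_z(\varphi')R_y(\theta')\,Z\,R_y(-\theta')R_z(-\varphi')$. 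So everything reduces to one conjugation identity.

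The heart of the argument is to show $R_z(\varphi')R_y(\theta')\,Z\,R_y(-\theta')R_z(-\varphi')=n_xX+n_yY+n_zZ$. I would obtain this from the $SU(2)\!\to\!SO(3)$ correspondence already established in Lemma \ref{lem_QRQHOMP}: write $Z=H_{\hat z}$ with $\hat z=[0,0,1]^{\mathrm T}$, and take $M=R_z(\varphi')R_y(\theta')$, which is a product of unitaries so $M^{-1}=R_y(-\theta')R_z(-\varphi')$. That lemma yields $M H_{\hat z}M^{-1}=(R_M\hat z)\ast\boldsymbol\sigma$ for the associated real $3\times 3$ rotation $R_M$. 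Since the correspondence is a homomorphism, $R_M$ is the product of the two explicit $SO(3)$ matrices displayed right after Eq. (\ref{R_{z}}) — the $y$-rotation block acting first, the $z$-rotation block second — and multiplying these on $[0,0,1]^{\mathrm T}$ gives $[\sin\theta'\cos\varphi',\sin\theta'\sin\varphi',\cos\theta']^{\mathrm T}$, which is exactly $\hat n=[n_x,n_y,n_z]^{\mathrm T}$ by the parametrization fixed just before Eq. (\ref{RnDE}). Hence $(R_M\hat z)\ast\boldsymbol\sigma=n_x\sigma_1+n_y\sigma_2+n_z\sigma_3=n_xX+n_yY+n_zZ$, which is the desired identity.

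Combining the two observations gives $R_{\hat n}(-\alpha)=\cos(\alpha/2)I+\mathrm{i}\sin(\alpha/2)(n_xX+n_yY+n_zZ)$, and multiplying by $-e^{\mathrm{i}\phi}$ reproduces exactly the form of $M_Q$ in the hypothesis, proving the lemma. I do not expect a genuine obstacle: the only delicate point is bookkeeping in the $SU(2)$/$SO(3)$ dictionary — checking that the $+\mathrm{i}\sin(\alpha/2)$ appearing in $M_Q$ corresponds to $R_z(-\alpha)$ rather than $R_z(+\alpha)$, and that the two $3\times 3$ blocks are composed in the order for which $R_y$ acts before $R_z$ on $\hat z$. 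If one prefers not to invoke Lemma \ref{lem_QRQHOMP}, the same conjugation identity can be verified by a direct $2\times 2$ multiplication using Eqs. (\ref{R_{y}})--(\ref{R_{z}}); this is elementary but more tedious.
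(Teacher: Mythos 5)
Your proof is correct and follows essentially the same route as the paper: both expand $R_z(-\alpha)=\cos(\alpha/2)I+\mathrm{i}\sin(\alpha/2)Z$, pass the identity term through the conjugating factors, and reduce everything to evaluating $R_{z}(\varphi')R_{y}(\theta')ZR_{y}(-\theta')R_{z}(-\varphi')$. The only difference is in that last sub-step — the paper uses the direct $2\times 2$ conjugation identities $R_{y}(\theta')ZR_{y}(-\theta')=\cos\theta'Z+\sin\theta'X$, $R_{z}(\varphi')XR_{z}(-\varphi')=\cos\varphi'X+\sin\varphi'Y$ and $R_{z}(\varphi')ZR_{z}(-\varphi')=Z$, whereas you route the same computation through the $SO(3)$ correspondence of Lemma \ref{lem_QRQHOMP} applied to $\hat z=[0,0,1]^{\mathrm{T}}$; both yield $n_xX+n_yY+n_zZ$ and the argument is sound.
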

\begin{proof}

%Let
 %\begin{equation}
%\begin{split}
%\boldsymbol{\hat{n}}%=&\left[n_x,n_y,n_z\right]^{\mathrm{T}}\\
%=&\left[\sin\theta'\cos\varphi',\sin\theta'\sin\varphi',\cos\theta'\right]^{\mathrm{T}}.
%\end{split}
%\end{equation}

By using the following equations
\begin{align}
R_{y}(\theta')ZR_{y}(-\theta')=&\cos\theta'Z+\sin\theta'X,\\
R_{z}(\varphi')XR_{z}(-\varphi')=&\cos\varphi'X+\sin\varphi'Y,\\
R_{z}(\varphi')ZR_{z}(-\varphi')=&Z,
\end{align}
where $R_{y}(\theta')$ is defined in Eq. (\ref{R_{y}}), $R_{z}(\varphi')$ is defined in Eq. (\ref{R_{z}}),
we can obtain
\begin{equation}\label{Rn}
\begin{split}
R_{\hat{n}}(-\alpha)=&R_{z}(\varphi')R_{y}(\theta')R_{z}(-\alpha)R_{y}(-\theta')R_{z}(-\varphi')\\
=&R_{z}(\varphi')R_{y}(\theta')\left[\cos\left(\frac{\alpha}{2}\right)I+\mathrm{i}\sin\left(\frac{\alpha}{2}\right)Z\right]R_{y}(-\theta')R_{z}(-\varphi')\\=&\cos\left(\frac{\alpha}{2}\right)I+\mathrm{i}\sin\left(\frac{\alpha}{2}\right)R_{z}(\varphi')R_{y}(\theta')ZR_{y}(-\theta')R_{z}(-\varphi')\\
=&\cos\left(\frac{\alpha}{2}\right)I+\mathrm{i}\sin\left(\frac{\alpha}{2}\right)\left(\cos\theta'Z+\sin\theta'\cos\varphi'X+\sin\theta'\sin\varphi'Y\right)\\
=&\cos\left(\frac{\alpha}{2}\right)I+\mathrm{i}\sin\left(\frac{\alpha}{2}\right)\left(n_xX+n_yY+n_zZ\right).
\end{split}
\end{equation}

With Eq. (\ref{Rn}), we have
\begin{align}
M_Q=&-e^{\mathrm{i}\phi}\left[\cos\left(\frac{\alpha}{2}\right)I+\mathrm{i}\sin\left(\frac{\alpha}{2}\right)\left(n_xX+n_yY+n_zZ\right)\right]\\
=&-e^{\mathrm{i}\phi}R_{\hat{n}}(-\alpha).
\end{align}
%thus $M_Q$ corresponds to the three-dimensional rotation matrix around the rotational axis $\boldsymbol{\hat{n}}=\left[n_x,n_y,n_z\right]^{\mathrm{T}}$, % in the Bloch sphere, %spanned by  the basis $\{\Ket{A'}, \Ket{B'}\}$, and the rotation angel is $-\alpha$ \cite{NC2000}.
\end{proof}

\begin{lemma}\label{lem_rpsi}
Given  a quantum state $\Ket{\psi}=(a+b\mathrm{i})\Ket{A'}+(c+d\mathrm{i})\Ket{B'}$,  a matrix $
M_Q=e^{\mathrm{i}\phi}
\begin{bmatrix}
u & v  \\
-\overline{v}  & \overline{u} 
\end{bmatrix}$,
then there exists a corresponding vector 
$
\boldsymbol{r_{\Ket{\psi}}}=\left[\Bra{\psi}X\Ket{\psi},\Bra{\psi}Y\Ket{\psi},\Bra{\psi}Z\Ket{\psi}\right]^{\mathrm{T}}
%=&\left[2(ac+bd),2(ad-bc),a^2+b^2-c^2-d^2\right]^{\mathrm{T}},
$,
together with a  corresponding matrix 
\begin{equation}\label{RQ2LEM4}
R_{M_Q}=\begin{bmatrix}
r_{11}&r_{12}&r_{13}  \\
r_{21} & r_{22} &r_{23}\\
r_{31}&r_{32}&r_{33}
\end{bmatrix}=
\begin{bmatrix}
\frac{1}{2}(u^2+\overline{u}^{2}-v^2-\overline{v}^{2})&\frac{\mathrm{i}}{2}(\overline{u}^{2}-u^2+\overline{v}^{2}-v^2)& -(uv+\overline{u}\overline{v})  \\
\frac{\mathrm{i}}{2}(u^2-\overline{u}^{2}+\overline{v}^{2}-v^{2}) & \frac{1}{2}(u^2+\overline{u}^{2}+v^{2}+\overline{v}^{2})&\mathrm{i}(\overline{u}\overline{v}-uv) \\
(\overline{u}v+u\overline{v})&\mathrm{i}(\overline{u}v-u\overline{v})&(u\overline{u}-v\overline{v})
\end{bmatrix} 
\end{equation}
 satisfying
\begin{equation}
\boldsymbol{r_{M_Q\Ket{\psi}}}=R_{M_Q}\boldsymbol{r_{\Ket{\psi}}},
\end{equation}
where $|u|^2+|v|^2=1$.
\end{lemma}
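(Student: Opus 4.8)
The plan is to reduce the statement to the adjoint ($SO(3)$) action already computed in Lemma~\ref{lem_QRQHOMP}, via the standard projector--Bloch-vector dictionary. Regarding $\Ket{\psi}$ as a one-qubit state with $\Ket{A'},\Ket{B'}$ the two basis vectors and $X,Y,Z$ acting as the Pauli matrices $\sigma_1,\sigma_2,\sigma_3$, I would first record the elementary identity that any unit vector's projector in this space decomposes as
\begin{equation}
\Ket{\psi}\Bra{\psi}=\frac{1}{2}\left(I+\boldsymbol{r_{\Ket{\psi}}}*\boldsymbol{\sigma}\right),
\end{equation}
where $\boldsymbol{r_{\Ket{\psi}}}=\left[\Bra{\psi}X\Ket{\psi},\Bra{\psi}Y\Ket{\psi},\Bra{\psi}Z\Ket{\psi}\right]^{\mathrm{T}}$. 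This is just the expansion of the trace-one Hermitian $2\times2$ matrix $\Ket{\psi}\Bra{\psi}$ in the basis $\{I,\sigma_1,\sigma_2,\sigma_3\}$ together with $\mathrm{tr}(\sigma_i\Ket{\psi}\Bra{\psi})=\Bra{\psi}\sigma_i\Ket{\psi}$; for $\Ket{\psi}=(a+b\mathrm{i})\Ket{A'}+(c+d\mathrm{i})\Ket{B'}$ the components of $\boldsymbol{r_{\Ket{\psi}}}$ can be read off directly, but this is not needed below.

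Next I would check that $M_Q$ is unitary: the core matrix $\begin{bmatrix}u&v\\-\overline{v}&\overline{u}\end{bmatrix}$ times its conjugate transpose equals $I$ because $|u|^2+|v|^2=1$, and the prefactor $e^{\mathrm{i}\phi}$ is merely a global phase. Hence $M_Q\Ket{\psi}$ is again a unit vector, so the dictionary of the first step also applies to it, and conjugation by $M_Q$ preserves trace and Hermiticity with the phase cancelling. Therefore
\begin{align}
\frac{1}{2}\left(I+\boldsymbol{r_{M_Q\Ket{\psi}}}*\boldsymbol{\sigma}\right)
&=\left(M_Q\Ket{\psi}\right)\left(M_Q\Ket{\psi}\right)^{\dagger}
=M_Q\left(\frac{1}{2}\left(I+\boldsymbol{r_{\Ket{\psi}}}*\boldsymbol{\sigma}\right)\right)M_Q^{-1}\notag\\
&=\frac{1}{2}\left(I+M_Q\left(\boldsymbol{r_{\Ket{\psi}}}*\boldsymbol{\sigma}\right)M_Q^{-1}\right)
=\frac{1}{2}\left(I+\left(R_{M_Q}\boldsymbol{r_{\Ket{\psi}}}\right)*\boldsymbol{\sigma}\right),
\end{align}
where the last equality is Lemma~\ref{lem_QRQHOMP} applied with $H_{\boldsymbol r}=\boldsymbol{r_{\Ket{\psi}}}*\boldsymbol{\sigma}$, whose associated $R_{M_Q}$ is exactly the matrix displayed in Eq.~(\ref{RQ2LEM4}).

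Finally, since the linear map $\boldsymbol{v}\mapsto\boldsymbol{v}*\boldsymbol{\sigma}$ from $\mathbb{R}^3$ onto the traceless Hermitian $2\times2$ matrices is injective, comparing the first and last members of the chain yields $\boldsymbol{r_{M_Q\Ket{\psi}}}=R_{M_Q}\boldsymbol{r_{\Ket{\psi}}}$, which is the claim. The argument is essentially bookkeeping once Lemma~\ref{lem_QRQHOMP} is in hand; the only point demanding a little care --- and the closest thing to an obstacle --- is the normalization, namely verifying that $M_Q$ is unitary so that $M_Q\Ket{\psi}$ remains a unit vector and the Bloch-vector decomposition still applies, and confirming that the global phase $e^{\mathrm{i}\phi}$ drops out under conjugation. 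I do not anticipate any genuine difficulty beyond that.
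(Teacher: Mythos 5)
Your proposal is correct. It differs from the paper's own proof in presentation, though both rest on the same underlying fact that the adjoint action of $M_Q$ on the span of the Pauli matrices is the rotation $R_{M_Q}$. The paper works componentwise with expectation values: it writes $\boldsymbol{r_{M_Q\Ket{\psi}}}=\left[\Bra{\psi}M_Q^{\dagger}XM_Q\Ket{\psi},\dots\right]^{\mathrm{T}}$ and $\left(R_{M_Q}\boldsymbol{r_{\Ket{\psi}}}\right)_i=\Bra{\psi}\left(\sum_j r_{ij}\sigma_j\right)\Ket{\psi}$, and then asserts by direct calculation the operator identities $M_Q^{\dagger}\sigma_iM_Q=\sum_j r_{ij}\sigma_j$ --- which is the inverse/transposed form of Lemma~\ref{lem_QRQHOMP}, valid because $R_{M_Q}$ is orthogonal, but is stated there as a separate computation rather than derived from that lemma. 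You instead package the Bloch vector into the projector $\Ket{\psi}\Bra{\psi}=\frac{1}{2}\left(I+\boldsymbol{r_{\Ket{\psi}}}*\boldsymbol{\sigma}\right)$, conjugate by $M_Q$, and invoke Lemma~\ref{lem_QRQHOMP} exactly in the direction in which it was proved, finishing with linear independence of the Pauli matrices. This buys you a cleaner logical dependence (no second conjugation computation to verify, and no implicit appeal to orthogonality of $R_{M_Q}$); the small price is that your projector decomposition uses the normalization of $\Ket{\psi}$, which the paper's expectation-value route does not need but which is harmless here since the lemma concerns a quantum state. Your checks that $M_Q$ is unitary and that the global phase $e^{\mathrm{i}\phi}$ cancels under conjugation are exactly the points that need care, and you handle them correctly.
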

\begin{proof}

Since
\begin{equation}\label{rQ_psi}
\boldsymbol{r_{M_Q\Ket{\psi}}}=\left[\Bra{\psi}{M_Q}^{\dagger}X{M_Q}\Ket{\psi},\Bra{\psi}{M_Q}^{\dagger}Y{M_Q}\Ket{\psi},\Bra{\psi}{M_Q}^{\dagger}ZM_Q\Ket{\psi}\right]^{\mathrm{T}},
\end{equation}
and
\begin{equation}\label{Quv}
M_Q=e^{\mathrm{i}\phi}
\begin{bmatrix}
u & v  \\
-\overline{v}  & \overline{u} 
\end{bmatrix},
\end{equation}
we have
\begin{equation}\label{RQr_psi}
\begin{split}
R_{M_Q}\boldsymbol{r_{\Ket{\psi}}}=&
\begin{bmatrix}
r_{11}&r_{12}&r_{13}  \\
r_{21} & r_{22} &r_{23}\\
r_{31}&r_{32}&r_{33}
\end{bmatrix}\left[\Bra{\psi}X\Ket{\psi},\Bra{\psi}Y\Ket{\psi},\Bra{\psi}Z\Ket{\psi}\right]^{\mathrm{T}}\\
=&
\begin{bmatrix}
\Bra{\psi}\left(r_{11}X+r_{12}Y+r_{13}Z\right)\Ket{\psi}  \\
\Bra{\psi}\left(r_{21}X + r_{22}Y +r_{23}Z\right)\Ket{\psi}\\
\Bra{\psi}\left(r_{31}X+r_{32}Y+r_{33}Z\right)\Ket{\psi}
\end{bmatrix}.
\end{split}
\end{equation}

According to Eq. (\ref{RQ2LEM4}) and Eq. (\ref{Quv}), it can be calculated that
\begin{equation}\label{QuvRQ2}
\begin{cases}
{M_Q}^{\dagger}X{M_Q}=r_{11}X+r_{12}Y+r_{13}Z,\\
{M_Q}^{\dagger}Y{M_Q}=r_{21}X+r_{22}Y+r_{23}Z,\\
{M_Q}^{\dagger}Z{M_Q}=r_{31}X+r_{32}Y+r_{33}Z.
\end{cases}
\end{equation}
By using Eq.(\ref{rQ_psi}), Eq. (\ref{RQr_psi}) and  Eq. (\ref{QuvRQ2}), it can be obtained that
\begin{equation}
\boldsymbol{r_{M_Q\Ket{\psi}}}=R_{M_Q}\boldsymbol{r_{\Ket{\psi}}}.
\end{equation}
%therefore, the correspondence between  the quantum state $\Ket{\psi}$ and  the vector $\boldsymbol{r_{\Ket{\psi}}}$ is reasonable.
\end{proof}

Suppose $\boldsymbol{r_{\ket{x}}}=\left[x_1,x_2,x_3\right]^{\mathrm{T}}$ and $\boldsymbol{r_{\ket{y}}}=\left[y_1,y_2,y_3\right]^{\mathrm{T}}$ are two vectors on $\mathbb{R}^3$.  Then we can define:
\begin{equation}
\boldsymbol{r_{\ket{x}}}\cdot \boldsymbol{r_{\ket{y}}}=x_1y_1+x_2y_2+x_3y_3,
\end{equation}
\begin{equation}
\left|\left|\boldsymbol{r_{\ket{x}}}\right|\right|=x_1^2+x_2^2+x_3^2,
\end{equation}
\begin{equation}
\left|\left|\boldsymbol{r_{\ket{y}}}\right|\right|=y_1^2+y_2^2+y_3^2.
\end{equation}

\begin{lemma}\label{cos_omega}
Given vectors $\boldsymbol{\hat{n}}=\frac{\cos\theta}{\cos\beta}\left[\cos\left(\frac{\phi}{2}\right),\sin\left(\frac{\phi}{2}\right),\cos\left(\frac{\phi}{2}\right)\tan\theta\right]^{\mathrm{T}}$, $\boldsymbol{r_{\ket{h}}}=\left[\sin(2\theta),0,-\cos(2\theta)\right]^{\mathrm{T}}$, $\boldsymbol{r_{\ket{A'}}}=\left[0,0,1\right]^{\mathrm{T}}$, and %the projection of $\boldsymbol{r_{\ket{h}}}$ and $\boldsymbol{r_{\ket{A'}}}$ on $\boldsymbol{\hat{n}}$:
\begin{equation}\label{r_ketp}
\boldsymbol{r_{\ket{p}}}=\frac{1}{1+\cos^2\left(\frac{\phi}{2}\right)\tan^2\theta}\left[\cos^2\left(\frac{\phi}{2}\right)\tan\theta,\sin\left(\frac{\phi}{2}\right)\cos\left(\frac{\phi}{2}\right)\tan\theta,\cos^2\left(\frac{\phi}{2}\right)\tan^2\theta\right]^{\mathrm{T}},
\end{equation}
denote
\begin{equation}
\cos\omega=\frac{\left(\boldsymbol{r_{\ket{h}}}-\boldsymbol{r_{\ket{p}}}\right)\cdot\left(\boldsymbol{r_{\ket{A'}}}-\boldsymbol{r_{\ket{p}}}\right)}{\left|\left|\boldsymbol{r_{\ket{h}}}-\boldsymbol{r_{\ket{p}}}\right|\right|\left|\left|\boldsymbol{r_{\ket{A'}}}-\boldsymbol{r_{\ket{p}}}\right|\right|},
\end{equation}
%Denote the angle between $\boldsymbol{r_{\ket{h}}}-\boldsymbol{r_{\ket{p}}}$ and $\boldsymbol{r_{\ket{A'}}}-\boldsymbol{r_{\ket{p}}}$ as $\omega$,
then 
\begin{align}\cos\omega%&\frac{(\boldsymbol{r_{\ket{h}}}-\boldsymbol{r_{\ket{p}}})\cdot(\boldsymbol{r_{\ket{A'}}}-\boldsymbol{r_{\ket{p}}})}{\left|\boldsymbol{r_{\ket{h}}}-\boldsymbol{r_{\ket{p}}}\right|\left|\boldsymbol{r_{\ket{A'}}}-\boldsymbol{r_{\ket{p}}}\right|}\label{cosomega1}\\
=\cos\left(2\arccos\left(\sin\left(\frac{\phi}{2}\right)\sin\theta\right)\right),
\end{align}
where
$
\theta=\arcsin(\sqrt{a/N}), \sin\beta=\sin\left(\frac{\phi}{2}\right)\sin\theta$.
\end{lemma}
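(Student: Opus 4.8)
The plan is to carry this out as a direct computation in $\mathbb{R}^{3}$; the only structural observation needed is that $\boldsymbol{r_{\ket{p}}}$ is simultaneously the orthogonal projection of $\boldsymbol{r_{\ket{h}}}$ and of $\boldsymbol{r_{\ket{A'}}}$ onto the line spanned by the rotation axis $\boldsymbol{\hat{n}}$. First I would compute, using $\sin(2\theta)=2\sin\theta\cos\theta$ and $\cos(2\theta)=1-2\sin^{2}\theta$, the two ordinary inner products $\boldsymbol{r_{\ket{h}}}\cdot\boldsymbol{\hat{n}}=\boldsymbol{r_{\ket{A'}}}\cdot\boldsymbol{\hat{n}}=\sin\theta\cos(\phi/2)/\cos\beta$ together with $\boldsymbol{\hat{n}}\cdot\boldsymbol{\hat{n}}=(\cos^{2}\theta/\cos^{2}\beta)\bigl(1+\cos^{2}(\phi/2)\tan^{2}\theta\bigr)$. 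Dividing the first quantity by the second and multiplying by $\boldsymbol{\hat{n}}$ reproduces exactly the vector $\boldsymbol{r_{\ket{p}}}$ of Eq.~(\ref{r_ketp}), so $\boldsymbol{r_{\ket{p}}}=\bigl((\boldsymbol{r_{\ket{h}}}\cdot\boldsymbol{\hat{n}})/(\boldsymbol{\hat{n}}\cdot\boldsymbol{\hat{n}})\bigr)\boldsymbol{\hat{n}}$ is indeed the common projection, and in particular a scalar multiple of $\boldsymbol{\hat{n}}$.

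Everything after that is bookkeeping. Since $\boldsymbol{r_{\ket{p}}}$ is the projection of each of the two unit vectors onto the line $\mathbb{R}\boldsymbol{\hat{n}}$, one has $\boldsymbol{r_{\ket{h}}}\cdot\boldsymbol{r_{\ket{p}}}=\boldsymbol{r_{\ket{A'}}}\cdot\boldsymbol{r_{\ket{p}}}=\boldsymbol{r_{\ket{p}}}\cdot\boldsymbol{r_{\ket{p}}}$; consequently $\boldsymbol{r_{\ket{h}}}-\boldsymbol{r_{\ket{p}}}$ and $\boldsymbol{r_{\ket{A'}}}-\boldsymbol{r_{\ket{p}}}$ are perpendicular to $\boldsymbol{\hat{n}}$ and, by the Pythagorean identity with $\boldsymbol{r_{\ket{h}}}\cdot\boldsymbol{r_{\ket{h}}}=\boldsymbol{r_{\ket{A'}}}\cdot\boldsymbol{r_{\ket{A'}}}=1$, both have squared length $1-\boldsymbol{r_{\ket{p}}}\cdot\boldsymbol{r_{\ket{p}}}$. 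Because these two lengths are equal, the denominator of $\cos\omega$ equals that common squared length, while expanding the numerator and applying the identities above collapses it to $\boldsymbol{r_{\ket{h}}}\cdot\boldsymbol{r_{\ket{A'}}}-\boldsymbol{r_{\ket{p}}}\cdot\boldsymbol{r_{\ket{p}}}$; thus $\cos\omega=(\boldsymbol{r_{\ket{h}}}\cdot\boldsymbol{r_{\ket{A'}}}-\boldsymbol{r_{\ket{p}}}\cdot\boldsymbol{r_{\ket{p}}})/(1-\boldsymbol{r_{\ket{p}}}\cdot\boldsymbol{r_{\ket{p}}})$. I would then substitute $\boldsymbol{r_{\ket{h}}}\cdot\boldsymbol{r_{\ket{A'}}}=-\cos(2\theta)$ and $\boldsymbol{r_{\ket{p}}}\cdot\boldsymbol{r_{\ket{p}}}=\cos^{2}(\phi/2)\tan^{2}\theta/\bigl(1+\cos^{2}(\phi/2)\tan^{2}\theta\bigr)$ (read off from the projection formula above), clear the common denominator $1+\cos^{2}(\phi/2)\tan^{2}\theta$, and use $1+\cos(2\theta)=2\cos^{2}\theta$ to reduce the whole expression to $\cos\omega=2\sin^{2}(\phi/2)\sin^{2}\theta-1$.

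Finally I would rewrite $2\sin^{2}(\phi/2)\sin^{2}\theta-1=2\bigl(\sin(\phi/2)\sin\theta\bigr)^{2}-1$ and invoke the double-angle identity $\cos(2\gamma)=2\cos^{2}\gamma-1$ with $\gamma=\arccos\bigl(\sin(\phi/2)\sin\theta\bigr)$, giving $\cos\omega=\cos\bigl(2\arccos(\sin(\phi/2)\sin\theta)\bigr)$ as claimed; note that the auxiliary angle $\beta$, defined by $\sin\beta=\sin(\phi/2)\sin\theta$, appears only in the intermediate expressions for $\boldsymbol{\hat{n}}$ and $\boldsymbol{r_{\ket{p}}}$ and cancels out entirely. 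The one place that deserves care is the first paragraph, namely checking that the prescribed $\boldsymbol{r_{\ket{p}}}$ really coincides with $\mathrm{proj}_{\boldsymbol{\hat{n}}}(\boldsymbol{r_{\ket{h}}})=\mathrm{proj}_{\boldsymbol{\hat{n}}}(\boldsymbol{r_{\ket{A'}}})$ and that the $1/\cos\beta$ and $\cos\theta$ factors cancel as needed; beyond that there is no genuine obstacle, the argument being elementary trigonometry of the sort already used in the $SO(3)$ analysis of Long's algorithm.
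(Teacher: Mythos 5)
Your proposal is correct and follows essentially the same route as the paper: identify $\boldsymbol{r_{\ket{p}}}$ as the common orthogonal projection of $\boldsymbol{r_{\ket{h}}}$ and $\boldsymbol{r_{\ket{A'}}}$ onto $\boldsymbol{\hat{n}}$ (the paper obtains it as a line--plane intersection, you via the formula $\bigl((\boldsymbol{r}\cdot\boldsymbol{\hat{n}})/(\boldsymbol{\hat{n}}\cdot\boldsymbol{\hat{n}})\bigr)\boldsymbol{\hat{n}}$, which is the same computation), then use $\boldsymbol{r_{\ket{h}}}\cdot\boldsymbol{r_{\ket{p}}}=\boldsymbol{r_{\ket{A'}}}\cdot\boldsymbol{r_{\ket{p}}}=\boldsymbol{r_{\ket{p}}}\cdot\boldsymbol{r_{\ket{p}}}$ and the unit-length of the two vectors to collapse $\cos\omega$ to $2\sin^{2}(\phi/2)\sin^{2}\theta-1$. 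The final double-angle step and the observation that $\beta$ cancels are likewise as in the paper, so no gaps.
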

\begin{proof}
%In the following, we compute the projection of $\boldsymbol{r_{\ket{A'}}}$ on $\boldsymbol{\hat{n}}$.
First, as we know, the equation of a line passing through the origin point and parallel to the vector $\boldsymbol{\hat{n}}=\frac{\cos\theta}{\cos\beta}\left[\cos\left(\frac{\phi}{2}\right),\sin\left(\frac{\phi}{2}\right),\cos\left(\frac{\phi}{2}\right)\tan\theta\right]^{\mathrm{T}}$ is
\begin{equation}\label{line}
\frac{x}{\cos\left(\frac{\phi}{2}\right)}=\frac{y}{\sin\left(\frac{\phi}{2}\right)}=\frac{z}{\cos\left(\frac{\phi}{2}\right)\tan\theta}.
\end{equation}
Also, the equation of the plane passing through point $\boldsymbol{r_{\ket{A'}}}=[0,0,1]^{\mathrm{T}}$ and
orthogonal  to the same vector  $\boldsymbol{\hat{n}}$ is
\begin{equation}\label{plane}
x\cos\left(\frac{\phi}{2}\right)+y\sin\left(\frac{\phi}{2}\right)+(z-1)\cos\left(\frac{\phi}{2}\right)\tan\theta=0.
\end{equation}
%Combining Eq. (\ref{line}) with Eq. (\ref{plane}) 
By
\begin{equation}
\begin{cases}
\frac{x}{\cos\left(\dfrac{\phi}{2}\right)}=\frac{y}{\sin\left(\dfrac{\phi}{2}\right)}=\frac{z}{\cos\left(\dfrac{\phi}{2}\right)\tan\theta},\\
x\cos\left(\frac{\phi}{2}\right)+y\sin\left(\frac{\phi}{2}\right)+(z-1)\cos\left(\frac{\phi}{2}\right)\tan\theta=0,
\end{cases}
\end{equation}
we can get
\begin{equation}
\begin{cases}
x=\frac{1}{1+\cos^2\left(\dfrac{\phi}{2}\right)\tan^2\theta}\cos^2\left(\frac{\phi}{2}\right)\tan\theta,\\
y=\frac{1}{1+\cos^2\left(\dfrac{\phi}{2}\right)\tan^2\theta}\sin\left(\frac{\phi}{2}\right)\cos\left(\frac{\phi}{2}\right)\tan\theta,\\
z=\frac{1}{1+\cos^2\left(\dfrac{\phi}{2}\right)\tan^2\theta}\cos^2\left(\frac{\phi}{2}\right)\tan^2\theta.
\end{cases}
\end{equation}
Therefore, the intersection point of the line represented by Eq. (\ref{line}) and the plane by Eq. (\ref{plane}) is:
\begin{equation}\label{inpoint}
\frac{1}{1+\cos^2\left(\frac{\phi}{2}\right)\tan^2\theta}\left[\cos^2\left(\frac{\phi}{2}\right)\tan\theta,\sin\left(\frac{\phi}{2}\right)\cos\left(\frac{\phi}{2}\right)\tan\theta,\cos^2\left(\frac{\phi}{2}\right)\tan^2\theta\right]^{\mathrm{T}}.
\end{equation}
%where
%\begin{equation}
%c=\frac{1}{1+\cos^2\left(\frac{\phi}{2}\right)\tan^2\theta}.
%\end{equation}
%Since
%\begin{equation}
%\boldsymbol{r_{\ket{p}}}=\frac{1}{1+\cos^2\left(\frac{\phi}{2}\right)\tan^2\theta}\left[\cos^2\left(\frac{\phi}{2}\right)\tan\theta,\sin\left(\frac{\phi}{2}\right)\cos\left(\frac{\phi}{2}\right)\tan\theta,\cos^2\left(\frac{\phi}{2}\right)\tan^2\theta\right]^{\mathrm{T}},
%\end{equation}
So, $\boldsymbol{r_{\ket{p}}}$ is the projection of $\boldsymbol{r_{\ket{A'}}}$ on $\boldsymbol{\hat{n}}$.

According to
\begin{equation}
\begin{split}
\boldsymbol{\hat{n}}\cdot \boldsymbol{r_{\ket{h}}}&=\frac{\cos\theta}{\cos\beta}\left[\cos\left(\frac{\phi}{2}\right),\sin\left(\frac{\phi}{2}\right),\cos\left(\frac{\phi}{2}\right)\tan\theta\right]^{\mathrm{T}}\cdot \left[\sin(2\theta),0,-\cos(2\theta)\right]^{\mathrm{T}}\\
&=\frac{\cos\theta}{\cos\beta}\cos\left(\frac{\phi}{2}\right)\tan\theta,
\end{split}
\end{equation}
\begin{equation}
\begin{split}
\boldsymbol{\hat{n}}\cdot \boldsymbol{r_{\ket{A'}}}&=\frac{\cos\theta}{\cos\beta}\left[\cos\left(\frac{\phi}{2}\right),\sin\left(\frac{\phi}{2}\right),\cos\left(\frac{\phi}{2}\right)\tan\theta\right]^{\mathrm{T}}\cdot \left[0,0,1\right]^{\mathrm{T}}\\
&=\frac{\cos\theta}{\cos\beta}\cos\left(\frac{\phi}{2}\right)\tan\theta,
\end{split}
\end{equation}
we have
$\boldsymbol{\hat{n}}\cdot \boldsymbol{r_{\ket{h}}}=\boldsymbol{\hat{n}}\cdot \boldsymbol{r_{\ket{A'}}}$, the projections of $\boldsymbol{r_{\ket{h}}}$ and $\boldsymbol{r_{\ket{A'}}}$ on $\boldsymbol{\hat{n}}$ are the same.

Therefore, there is:
\begin{equation}\label{rhdotrp}
\boldsymbol{r_{\ket{h}}}\cdot \boldsymbol{r_{\ket{p}}}= \boldsymbol{r_{\ket{A'}}}\cdot \boldsymbol{r_{\ket{p}}}=\boldsymbol{r_{\ket{p}}}\cdot \boldsymbol{r_{\ket{p}}}.
\end{equation}

Since
\begin{equation}
\begin{split}
\boldsymbol{r_{\ket{A'}}}\cdot \boldsymbol{r_{\ket{p}}}=&\left[0,0,1\right]^{\mathrm{T}}\cdot
\frac{1}{1+\cos^2\left(\frac{\phi}{2}\right)\tan^2\theta}\left[\cos^2\left(\frac{\phi}{2}\right)\tan\theta,\sin\left(\frac{\phi}{2}\right)\cos\left(\frac{\phi}{2}\right)\tan\theta,\cos^2\left(\frac{\phi}{2}\right)\tan^2\theta\right]^{\mathrm{T}}\\
=&\frac{\cos^2\left(\frac{\phi}{2}\right)\tan^2\theta}{1+\cos^2\left(\frac{\phi}{2}\right)\tan^2\theta},
\end{split}
\end{equation}
we have
\begin{equation}
\boldsymbol{r_{\ket{h}}}\cdot \boldsymbol{r_{\ket{p}}}= \boldsymbol{r_{\ket{A'}}}\cdot \boldsymbol{r_{\ket{p}}}=\boldsymbol{r_{\ket{p}}}\cdot \boldsymbol{r_{\ket{p}}}=\frac{\cos^2\left(\frac{\phi}{2}\right)\tan^2\theta}{1+\cos^2\left(\frac{\phi}{2}\right)\tan^2\theta}.
\end{equation}

Since $\boldsymbol{r_{\ket{A'}}}\cdot \boldsymbol{r_{\ket{p}}}=\boldsymbol{r_{\ket{p}}}\cdot \boldsymbol{r_{\ket{p}}}$, $\boldsymbol{r_{\ket{A'}}}\cdot \boldsymbol{r_{\ket{p}}}=\boldsymbol{r_{\ket{p}}}\cdot \boldsymbol{r_{\ket{A'}}}$, we have
$\boldsymbol{r_{\ket{p}}}\cdot \boldsymbol{r_{\ket{A'}}}=\boldsymbol{r_{\ket{p}}}\cdot \boldsymbol{r_{\ket{p}}}$.

Since $\left|\left|\boldsymbol{r_{\ket{h}}}\right|\right|=\left|\left|\boldsymbol{r_{\ket{A'}}}\right|\right|=1$, it follows:
\begin{equation}\label{rh-rp}
\begin{split}
\left|\left|\boldsymbol{r_{\ket{h}}}-\boldsymbol{r_{\ket{p}}}\right|\right|\left|\left|\boldsymbol{r_{\ket{A'}}}-\boldsymbol{r_{\ket{p}}}\right|\right|
=&\sqrt{1-\left|\left|\boldsymbol{r_{\ket{p}}}\right|\right|^2}\sqrt{1-\left|\left|\boldsymbol{r_{\ket{p}}}\right|\right|^2}\\
=&\frac{1}{1+\cos^2\left(\frac{\phi}{2}\right)\tan^2\theta}.
\end{split}
\end{equation}

Therefore, we have:
\begin{align}
\cos\omega=&\frac{\left(\boldsymbol{r_{\ket{h}}}-\boldsymbol{r_{\ket{p}}}\right)\cdot\left(\boldsymbol{r_{\ket{A'}}}-\boldsymbol{r_{\ket{p}}}\right)}{\left|\left|\boldsymbol{r_{\ket{h}}}-\boldsymbol{r_{\ket{p}}}\right|\right|\left|\left|\boldsymbol{r_{\ket{A'}}}-\boldsymbol{r_{\ket{p}}}\right|\right|}\\
=&\frac{\boldsymbol{r_{\ket{h}}}\cdot\boldsymbol{r_{\ket{A'}}}-\boldsymbol{r_{\ket{h}}}\cdot\boldsymbol{r_{\ket{p}}}-\boldsymbol{r_{\ket{p}}}\cdot\boldsymbol{r_{\ket{A'}}}+\boldsymbol{r_{\ket{p}}}\cdot\boldsymbol{r_{\ket{p}}}}{\left|\left|\boldsymbol{r_{\ket{h}}}-\boldsymbol{r_{\ket{p}}}\right|\right|\left|\left|\boldsymbol{r_{\ket{A'}}}-\boldsymbol{r_{\ket{p}}}\right|\right|}\\
=&\left(1+\cos^2\left(\frac{\phi}{2}\right)\tan^2\theta\right)\left(-\cos(2\theta)-\frac{\cos^2\left(\frac{\phi}{2}\right)\tan^2\theta}{1+\cos^2\left(\frac{\phi}{2}\right)\tan^2\theta}\right)\\
=&-\cos(2\theta)-(\cos(2\theta)+1)\cos^2\left(\frac{\phi}{2}\right)\tan^2\theta\\
=&-\cos(2\theta)-2\cos^2\theta\cos^2\left(\frac{\phi}{2}\right)\tan^2\theta\\
=&-\cos(2\theta)-2\cos^2\left(\frac{\phi}{2}\right)\sin^2\theta\\
=&-\cos(2\theta)-(1+\cos\phi)\sin^2\theta\\
=&-\cos(2\theta)-\sin^2\theta-\cos\phi\sin^2\theta\\
=&-(1-2\sin^2\theta)-\sin^2\theta-\cos\phi\sin^2\theta\\
=&\sin^2\theta-1-\cos\phi\sin^2\theta.
\end{align}
Since
\begin{align}
\cos\left(2\arccos\left(\sin\left(\frac{\phi}{2}\right)\sin\theta\right)\right)=&2\cos^2\left(\arccos\left(\sin\left(\frac{\phi}{2}\right)\sin\theta\right)\right)-1\\
=&2\left(\sin\left(\frac{\phi}{2}\right)\sin\theta\right)^2-1\\
=&2\sin^2\left(\frac{\phi}{2}\right)\sin^2\theta-1\\
=&(1-\cos\phi)\sin^2\theta-1\\
=&\sin^2\theta-1-\cos\phi\sin^2\theta,
\end{align}
we have
\begin{equation}\cos\omega
=\cos\left(2\arccos\left(\sin\left(\frac{\phi}{2}\right)\sin\theta\right)\right).
\end{equation}
\end{proof}

Now, we give the theorem proving the correctness of Algorithm \ref{DEMA.alg}.

\begin{theorem}\label{Thm_DEMA2} 
Let Boolean function $f:\{0,1\}^n \rightarrow\{0,1\}$  with $|\{x\in\{0,1\}^n| f(x)=1\}|=a\geq 1$, $N=2^n$. Then the string $x$ with $f(x)=1$ can be exactly attained by  Algorithm \ref{DEMA.alg} querying $K+1$ times, where $K=\lfloor(\pi/2-\theta)/(2\theta)\rfloor$, $\theta= \arcsin {\sqrt{a/N}}$.
\end{theorem}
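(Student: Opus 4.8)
The plan is to reduce the action of the iterated operator $Q$ to a two–dimensional problem on the search subspace $\mathcal{H}_2=\mathrm{span}\{\ket{A'},\ket{B'}\}$ and then transplant Long's $SO(3)$ argument through Lemmas~\ref{lem_MQ}--\ref{cos_omega}.

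\textbf{Step 1: reduction to $\mathcal{H}_2$.} First I would check that $Q=-S_{0,H}(\phi)S_f(\phi)$ leaves $\mathcal{H}_2$ invariant and determine its $2\times2$ matrix there. For $x=uw\in A$ one has $f_w(u)=f(x)=1$ while the $b$-register of $\ket{A'}$ is $0$, hence $S_f(\phi)\ket{A'}=e^{\mathrm{i}\phi}\ket{A'}$; likewise $S_f(\phi)\ket{B'}=\ket{B'}$, so $S_f(\phi)$ acts on $\mathcal{H}_2$ as $I+(e^{\mathrm{i}\phi}-1)\ket{A'}\bra{A'}$. Every vector appearing in the $\ket{h^{\perp}}\bra{h^{\perp}}$ term carries a nonzero ancilla string $bcd\neq 0^{2^t+1}$ and is therefore orthogonal to both $\ket{A'}$ and $\ket{B'}$, so $S_{0,H}(\phi)$ acts on $\mathcal{H}_2$ as $I+(e^{\mathrm{i}\phi}-1)\ket{h}\bra{h}$ with $\ket{h}=\sin\theta\ket{A'}+\cos\theta\ket{B'}$. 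Expanding the product in the ordered basis $\{\ket{A'},\ket{B'}\}$ then yields exactly the matrix $M_Q$ of Lemma~\ref{lem_MQ}.

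\textbf{Step 2: the $SO(3)$ picture.} Lemma~\ref{lem_MQ} rewrites $M_Q=-e^{\mathrm{i}\phi}\bigl[\cos(\alpha/2)I+\mathrm{i}\sin(\alpha/2)(n_xX+n_yY+n_zZ)\bigr]$ with $\alpha=4\beta$ and $\sin\beta=\sin(\phi/2)\sin\theta$, Lemma~\ref{lem_Q3DR} identifies this (up to the global phase $-e^{\mathrm{i}\phi}$) with the rotation $R_{\hat{n}}(-\alpha)$ about $\hat{n}=\frac{\cos\theta}{\cos\beta}[\cos(\phi/2),\sin(\phi/2),\cos(\phi/2)\tan\theta]^{\mathrm{T}}$, and Lemmas~\ref{lem_QRQHOMP} and~\ref{lem_rpsi} show that applying $Q$ to a state of $\mathcal{H}_2$ amounts, on Bloch vectors, to applying the associated $SO(3)$ matrix $R_{M_Q}$. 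Hence $Q^{k}\ket{h}$ has Bloch vector obtained from $\boldsymbol{r_{\ket{h}}}=[\sin2\theta,0,-\cos2\theta]^{\mathrm{T}}$ by a rotation of angle $k\alpha$ about $\hat{n}$, while the target $\ket{A'}$ has Bloch vector $\boldsymbol{r_{\ket{A'}}}=[0,0,1]^{\mathrm{T}}$. Since $\boldsymbol{r_{\ket{h}}}$ and $\boldsymbol{r_{\ket{A'}}}$ have the same projection $\boldsymbol{r_{\ket{p}}}$ onto $\hat{n}$, Lemma~\ref{cos_omega} gives that the angle $\omega$ between $\boldsymbol{r_{\ket{h}}}-\boldsymbol{r_{\ket{p}}}$ and $\boldsymbol{r_{\ket{A'}}}-\boldsymbol{r_{\ket{p}}}$ satisfies $\cos\omega=\cos\bigl(2\arccos(\sin(\phi/2)\sin\theta)\bigr)$, i.e.\ $\omega=2\arccos(\sin(\phi/2)\sin\theta)=\pi-2\arcsin(\sin(\phi/2)\sin\theta)$.

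\textbf{Step 3: fixing $\phi$ and concluding.} Using $K=\lfloor(\pi/2-\theta)/(2\theta)\rfloor$ one checks $\pi/(4K+6)<\theta$, so the initialized value $\phi=2\arcsin(\sin(\pi/(4K+6))/\sin\theta)$ is well defined and satisfies $\arcsin(\sin(\phi/2)\sin\theta)=\pi/(4K+6)$, whence $\alpha=4\arcsin(\sin(\phi/2)\sin\theta)$ obeys $(K+1)\alpha=\pi-2\arcsin(\sin(\phi/2)\sin\theta)=\omega$. Therefore rotating $\boldsymbol{r_{\ket{h}}}$ by $(K+1)\alpha$ about $\hat{n}$ lands exactly on $\boldsymbol{r_{\ket{A'}}}$, so $Q^{K+1}\ket{h}=e^{\mathrm{i}\gamma}\ket{A'}$ for some global phase $\gamma$. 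Since Step~1 of Algorithm~\ref{DEMA.alg} prepares $\ket{h}$ and Step~2 applies $Q$ precisely $K+1$ times, measuring the first $n$ qubits of $Q^{K+1}\ket{h}$ returns some $x$ with $f(x)=1$ with certainty, using $K+1$ queries. The main obstacle is Step~1 — carefully verifying that $Q$ restricts to $\mathcal{H}_2$ (in particular that the $\ket{h^{\perp}}$ block annihilates $\ket{A'}$ and $\ket{B'}$) and that the induced $2\times2$ matrix is exactly $M_Q$ — together with the elementary but delicate inequality $\pi/(4K+6)<\theta$ that makes $\phi$ admissible and forces $(K+1)\alpha=\omega$; once these are settled, everything else is a direct transcription of Long's computation through Lemmas~\ref{lem_MQ}--\ref{cos_omega}.
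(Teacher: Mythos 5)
Your proposal is correct and follows essentially the same route as the paper's proof: reduce $Q$ to the $2\times 2$ matrix $M_Q$ on $\mathrm{span}\{\ket{A'},\ket{B'}\}$ (the paper does this by directly computing $Q\ket{A'}$ and $Q\ket{B'}$ from the definitions of $S_f(\phi)$ and $S_{0,H}(\phi)$), then invoke Lemmas~\ref{lem_MQ}--\ref{cos_omega} for the $SO(3)$ rotation picture and solve $(K+1)\alpha=\omega$ for $\phi$. Your only addition is the explicit check that $\pi/(4K+6)<\theta$ so that $\phi$ is well defined, a point the paper leaves implicit.
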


\begin{proof}

\iffalse

Denote
\begin{align}
&|A'\rangle=\frac{1}{\sqrt{|A|}}\sum\limits_{x\in A}|x\rangle\Ket{0}^{\otimes (2^t+1)},\\
&|B'\rangle=\frac{1}{\sqrt{|B|}}\sum\limits_{x\in B}|x\rangle\Ket{0}^{\otimes (2^t+1)}.
\end{align}

Then after step 1 of  Algorithm \ref{DEMA.alg}, the quantum state is
\begin{equation}
\begin{split}
&\left(H^{\otimes n}\otimes I^{\otimes \left(2^t+1\right)}\right)|0\rangle^{\otimes (n+2^t+1)}\\
=&\sqrt{\frac{a}{N}}\Ket{A'}+\sqrt{\frac{b}{N}}\Ket{B'}\\
=&\sin\theta\Ket{A'}+\cos\theta\Ket{B'}\\
\triangleq&\Ket{h},
\end{split}
\end{equation}
where $a=|A|$, $b=|B|$, $N=2^n$.

Then $\Ket{h}$ lies in the two-dimensional subspace $\mathcal{P}$ spanned by orthonormal vectors $\Ket{A'}$ and $\Ket{B'}$, i.e.
\begin{equation}
\mathcal{P}={\rm span}\{\Ket{A'}, \Ket{B'}\}.
\end{equation}

\fi

By combining Eq. (\ref{S_{f}}),  Eq. (\ref{S_{0,H}}) and  Eq. (\ref{Q}),  we have
\begin{equation}
\begin{split}
Q\Ket{A'}=&-S_{0,H}(\phi)S_f(\phi)\Ket{A'}\\
=&-\left(I^{\otimes(n+2^t+1)}+(e^{\mathrm{i}\phi}-1)\Ket{h}\Bra{h}+(e^{\mathrm{i}\phi}-1)\Ket{h^{\perp}}\Bra{h^{\perp}}\right)e^{\mathrm{i}\phi}\Ket{A'}\\
=&-e^{\mathrm{i}\phi}\Ket{A'}-e^{\mathrm{i}\phi}(e^{\mathrm{i}\phi}-1)\left(\sin\theta\Ket{A'}+\cos\theta\Ket{B'}\right)\left(\sin\theta\Bra{A'}+\cos\theta\Bra{B'}\right)\Ket{A'}\\
=&-e^{\mathrm{i}\phi}(1+(e^{\mathrm{i}\phi}-1)\sin^2\theta)\Ket{A'}-e^{\mathrm{i}\phi}(e^{\mathrm{i}\phi}-1)\sin\theta\cos\theta\Ket{B'},
\end{split}
\end{equation}
\begin{equation}
\begin{split}
Q\Ket{B'}=&-S_{0,H}(\phi)S_f(\phi)\Ket{B'}\\
=&-\left(I^{\otimes(n+2^t+1)}+(e^{\mathrm{i}\phi}-1)\Ket{h}\Bra{h}+(e^{\mathrm{i}\phi}-1)\Ket{h^{\perp}}\Bra{h^{\perp}}\right)\Ket{B'}\\
=&-\Ket{B'}-(e^{\mathrm{i}\phi}-1)\left(\sin\theta\Ket{A'}+\cos\theta\Ket{B'}\right)\left(\sin\theta\Bra{A'}+\cos\theta\Bra{B'}\right)\Ket{B'}\\
=&-e^{\mathrm{i}\phi}(1-e^{-\mathrm{i}\phi})\sin\theta\cos\theta\Ket{A'}-e^{\mathrm{i}\phi}\left(1-(1-e^{-\mathrm{i}\phi})\sin^2\theta\right)\Ket{B'}.
\end{split}
\end{equation}

%Similarly, it can be obtained:
%\begin{equation}
%\begin{split}
%Q\Ket{B'}%=&-\left(H^{\otimes n}\otimes I^{\otimes \left(2^t+1\right)}\right)S_0(\phi)\left(H^{\otimes n}\otimes I^{\otimes \left(2^t+1\right)}\right)\Ket{B'}\\
%=&-\left(I^{\otimes(n+2^t+1)}+(e^{\mathrm{i}\phi}-1)\left(\sin\theta\Ket{A'}+\cos\theta\Ket{B'}\right)\left(\sin\theta\Bra{A'}+\cos\theta\Bra{B'}\right)+\right.
%\\&\left.\sum_{\substack{bcd\neq 0^{2^t+1}\\ g(u,w,b,c)=1 }}(e^{\mathrm{i}\phi}-1)H^{\otimes n}\Ket{u,w}\Ket{b,c,d}\Bra{u,w}H^{\otimes n}\Bra{b,c,d}\right)\Ket{B'}\\
%=&-e^{\mathrm{i}\phi}(1-e^{-\mathrm{i}\phi})\sin\theta\cos\theta\Ket{A'}-e^{\mathrm{i}\phi}\left(1-(1-e^{-\mathrm{i}\phi})\sin^2\theta\right)\Ket{B'}.
%\end{split}
%\end{equation}

%The subspace $\mathcal{P}={\rm span}\{\Ket{A'}, \Ket{B'}\}$ is invariant under the operator $Q$,  which 
 %The operator $Q$ has the following matrix representation in the basis $\{\Ket{A'},\Ket{B'}\}$: 
 Denote
\begin{align}
M_Q=-e^{\mathrm{i}\phi}&\begin{bmatrix}
1+(e^{\mathrm{i}\phi}-1)\sin^2\theta & (1-e^{-\mathrm{i}\phi})\sin\theta\cos\theta \\
(e^{\mathrm{i}\phi}-1)\sin\theta\cos\theta & 1-(1-e^{-\mathrm{i}\phi})\sin^2\theta
\end{bmatrix}.\label{Q1}
\end{align}
According to Lemma \ref{lem_MQ}, we have %$M_Q$ can be represented as the following Eq. (\ref{M_QERP2}):
\begin{equation}\label{M_QERP2}
M_Q=-e^{\mathrm{i}\phi}\left[\cos\left(\frac{\alpha}{2}\right)I+\mathrm{i}\sin\left(\frac{\alpha}{2}\right)\left(n_xX+n_yY+n_zZ\right)\right],
\end{equation}
where $\alpha=4\beta$, $\sin\beta=\sin\left(\frac{\phi}{2}\right)\sin\theta$, $\theta=\arcsin(\sqrt{a/N})$, 
\begin{equation}
\begin{split}
n_x=\frac{\cos\theta}{\cos\beta}\cos\left(\frac{\phi}{2}\right), n_y=\frac{\cos\theta}{\cos\beta}\sin\left(\frac{\phi}{2}\right), n_z=\frac{\cos\theta}{\cos\beta}\cos\left(\frac{\phi}{2}\right)\tan\theta,
\end{split}
\end{equation}
and
\begin{equation}
\begin{split}
I=\begin{bmatrix}
1 & 0 \\
0 & 1\end{bmatrix},
X=\begin{bmatrix}
0 & 1\\
1 & 0\end{bmatrix},
Y=\begin{bmatrix}
0 & -\mathrm{i}\\
\mathrm{i} & 0\end{bmatrix},
Z=\begin{bmatrix}
1 & 0\\
0 & -1\end{bmatrix}.
\end{split}
\end{equation}
%The equivalence of Eq. (\ref{Q1}) and Eq. (\ref{Q2}) will be detailed
%in  \ref{appendixA}. 
In addition, by Lemma \ref{lem_Q3DR},  we have %$M_Q$ can be represented as the following Eq. (\ref{MQRn}):
\begin{equation}\label{MQRn}
M_Q
=-e^{\mathrm{i}\phi}R_{\hat{n}}(-\alpha),
\end{equation}
 where $\boldsymbol{\hat{n}}=\left[n_x,n_y,n_z\right]^{\mathrm{T}}$,  $R_{\hat{n}}(-\alpha)$ is defined in Eq. (\ref{RnDE}). %which corresponds to the three-dimensional rotation matrix around the rotational axis $\boldsymbol{\hat{n}}$,  and the rotation angel is $-\alpha$ \cite{NC2000}.

%$\boldsymbol{\hat{n}}$ in the Bloch sphere,  and the rotation angel is $-\alpha$, where
%\begin{equation}\label{n}
%\begin{split}
%\boldsymbol{\hat{n}}=&\left[n_x,n_y,n_z\right]^{\mathrm{T}}\\
%=&\frac{\cos\theta}{\cos\beta}\left[\cos\left(\frac{\phi}{2}\right),\sin\left(\frac{\phi}{2}\right),\cos\left(\frac{\phi}{2}\right)\tan\theta\right]^{\mathrm{T}}.
%\end{split}
%\end{equation}

%The operator $Q$ can be viewed as a three-dimensional rotation about the rotational axis $\boldsymbol{\hat{n}}$ in the Bloch sphere spanned by the basis $\{\Ket{A'}, \Ket{B'}\}$, with the angle of rotation $-\alpha$ will be detailed in \ref{appendixB}.

 Given the quantum state $\Ket{\psi}=(a+b\mathrm{i})\Ket{A'}+(c+d\mathrm{i})\Ket{B'}$, by using  Lemma \ref{lem_rpsi}, we have  
\begin{equation}\label{r_psi}
\begin{split}
\boldsymbol{r_{\Ket{\psi}}}=&\left[\Bra{\psi}X\Ket{\psi},\Bra{\psi}Y\Ket{\psi},\Bra{\psi}Z\Ket{\psi}\right]^{\mathrm{T}}\\
=&\left[2(ac+bd),2(ad-bc),a^2+b^2-c^2-d^2\right]^{\mathrm{T}}.
\end{split}
\end{equation}
So, for the initial state $\Ket{h}=\sin\theta\Ket{A'}+\cos\theta\Ket{B'}$,  it holds that
\begin{align}
\boldsymbol{r_{\ket{h}}}=&\left[\sin(2\theta),0,-\cos(2\theta)\right]^{\mathrm{T}},
\end{align}
and for the target state $\Ket{A'}$, it corresponds to 
\begin{align}
\boldsymbol{r_{\ket{A'}}}=&\left[0,0,1\right]^{\mathrm{T}}.
\end{align}

%The reasonableness that the quantum state $\Ket{\psi}$ can be represented by the vector $\boldsymbol{r_{\Ket{\psi}}}$ in Eq. (\ref{r_psi}) will be detailed in \ref{appendixC}.
Denote $\boldsymbol{r_{\ket{p}}}$ as the projection of $\boldsymbol{r_{\ket{A'}}}$  on $\boldsymbol{\hat{n}}=\frac{\cos\theta}{\cos\beta}\left[\cos\left(\frac{\phi}{2}\right),\sin\left(\frac{\phi}{2}\right),\cos\left(\frac{\phi}{2}\right)\tan\theta\right]^{\mathrm{T}}$, and 
\begin{equation}
\cos\omega=\frac{(\boldsymbol{r_{\ket{h}}}-\boldsymbol{r_{\ket{p}}})\cdot(\boldsymbol{r_{\ket{A'}}}-\boldsymbol{r_{\ket{p}}})}{\left|\boldsymbol{r_{\ket{h}}}-\boldsymbol{r_{\ket{p}}}\right|\left|\boldsymbol{r_{\ket{A'}}}-\boldsymbol{r_{\ket{p}}}\right|},
\end{equation}
where $\omega$ is the angle we have to rotate within a given number of iterations.

By Lemma \ref{cos_omega}, we have
\begin{align}
\cos\omega%=&\frac{(\boldsymbol{r_{\ket{h}}}-\boldsymbol{r_{\ket{p}}})\cdot(\boldsymbol{r_{\ket{A'}}}-\boldsymbol{r_{\ket{p}}})}{\left|\boldsymbol{r_{\ket{h}}}-\boldsymbol{r_{\ket{p}}}\right|\left|\boldsymbol{r_{\ket{A'}}}-\boldsymbol{r_{\ket{p}}}\right|}\label{cosomega1}\\
=&\cos\left(2\arccos\left(\sin\left(\frac{\phi}{2}\right)\sin\theta\right)\right)\label{cosomega2},
\end{align}
%The equivalence of Eq. (\ref{cosomega1}) and Eq. (\ref{cosomega2}) will be detailed
%in  \ref{appendixD}.
and then 
\begin{equation}\label{omega}
\begin{split}
\omega=&2\arccos\left(\sin\left(\frac{\phi}{2}\right)\sin\theta\right)\\
=&2\left[\frac{\pi}{2}-\arcsin\left(\sin\left(\frac{\phi}{2}\right)\sin\theta\right)\right].
\end{split}
\end{equation}

Finding the target state is exactly achieved if angle $-\omega$ is $K+1$ times of %the basic rotation angle 
$-\alpha$, that is,
\begin{equation}\label{omegaK+1}
\begin{split}
\omega=&\left(K+1\right)\alpha\\
=&4\left(K+1\right)\arcsin\left(\sin\left(\frac{\phi}{2}\right)\sin\theta\right).
\end{split}
\end{equation}

By combining Eq. (\ref{omega}) and Eq. (\ref{omegaK+1}), it holds that
\begin{equation}\label{phi}
\begin{split}
\phi=2\arcsin\left(\sin\left(\frac{\pi}{4K+6}\right)/\sin\theta\right).
\end{split}
\end{equation}

%The Eq. (\ref{phi}) has real solutions for $K\geq \lfloor(\pi/2-\theta)/(2\theta)\rfloor$.
 In Algorithm  \ref{DEMA.alg}, we set $K=\lfloor(\pi/2-\theta)/(2\theta)\rfloor$. 
Since we have initialized  the angle $\phi$ in Algorithm  \ref{DEMA.alg} as $2\arcsin\left(\sin\left(\frac{\pi}{4K+6}\right)/\sin\theta\right)$, the string $x\in A$ can be obtained without error after  measuring the first $n$ qubits of  $Q^{K+1}\Ket{h}$.
\end{proof}

\begin{figure}[H]
  \centerline{\includegraphics[width=0.5\textwidth]{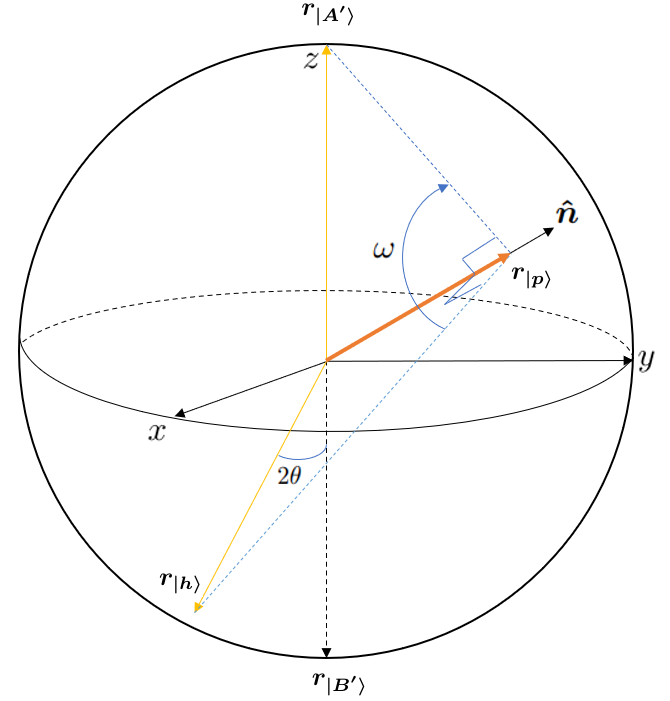}}
  \caption{Visualization of the distributed exact multi-objective quantum search algorithm.}
\end{figure}

%In the following, we  first give the  lemma about the equivalent representation of the operator $S_{f}(\phi)$ defined in Eq. (\ref{S_{f}}).

\begin{proposition}\label{pp_Sf}
The operator $S_{f}(\phi)$ defined in Eq. (\ref{S_{f}})  can  be represented as:
\begin{equation}
\begin{split}
S_{f}(\phi)=F^{\dagger}\left(I^{\otimes \left(2^t+n\right)}\otimes E(\phi)\right)F,
\end{split}
\end{equation}
where $F$ is defined in Eq. (\ref{F}),  and $E(\phi)$ is defined in Eq. (\ref{E_phi}).
\end{proposition}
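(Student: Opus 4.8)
The plan is to reuse the proof of Proposition \ref{pp_Zf} almost verbatim, replacing the phase flip $Z$ by the phase rotation $E(\phi)$. First I would invoke Eq. (\ref{F_uwab}), which was already derived from Eq. (\ref{O_{f_w}A_w}), Eq. (\ref{U}) and Eq. (\ref{F}) and gives the explicit action of $F$ on a computational basis state:
\begin{equation*}
F\Ket{u,w}\Ket{a}\Ket{b}=\Ket{u,w}\ket{a_{0^t}\oplus f_{0^t}(u)}\cdots\ket{a_{1^t}\oplus f_{1^t}(u)}\Ket{b\oplus f_w(u)}.
\end{equation*}
Since $F=\left(I^{\otimes(n-t)}\otimes U\right)O^*_f$ is a product of the permutation unitaries $O_{f_w}$ and $U$, it is unitary, so $F^{\dagger}F=I^{\otimes(n+2^t+1)}$.

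Next I would apply the right-hand side operator to an arbitrary basis state $\Ket{u,w}\Ket{a}\Ket{b}$, composing the three factors from right to left. Applying $F$ yields the state displayed above; applying $I^{\otimes(2^t+n)}\otimes E(\phi)$ leaves it unchanged when the last qubit is $\Ket{0}$ and multiplies it by $e^{\mathrm{i}\phi}$ when the last qubit is $\Ket{1}$, i.e. it multiplies by the scalar $e^{\mathrm{i}\phi\cdot(b\oplus f_w(u))}$ by the definition of $E(\phi)$ in Eq. (\ref{E_phi}); and applying $F^{\dagger}$ then returns $F\Ket{u,w}\Ket{a}\Ket{b}$ to $\Ket{u,w}\Ket{a}\Ket{b}$ because $F^{\dagger}F=I$. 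Hence
\begin{equation*}
F^{\dagger}\left(I^{\otimes(2^t+n)}\otimes E(\phi)\right)F\Ket{u,w}\Ket{a}\Ket{b}=e^{\mathrm{i}\phi\cdot(b\oplus f_w(u))}\Ket{u,w}\Ket{a}\Ket{b},
\end{equation*}
which is exactly the defining action of $S_f(\phi)$ in Eq. (\ref{S_{f}}). Agreement on a computational basis gives the claimed operator identity.

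I do not expect any genuine obstacle here; the argument is a direct transcription of Proposition \ref{pp_Zf} with $Z\mapsto E(\phi)$. The only two points that deserve an explicit line are (i) recording that $E(\phi)\Ket{x}=e^{\mathrm{i}\phi x}\Ket{x}$ for $x\in\{0,1\}$, so the accumulated phase depends only on $b\oplus f_w(u)$ and not on the auxiliary register $a$, and (ii) using unitarity of $F$ to cancel $F^{\dagger}F$. If desired, one could also make the intermediate identity $F^{\dagger}\Ket{u,w}\ket{a_{0^t}\oplus f_{0^t}(u)}\cdots\ket{a_{1^t}\oplus f_{1^t}(u)}\Ket{b\oplus f_w(u)}=\Ket{u,w,a,b}$ explicit, exactly as in the argument following Eq. (\ref{F_uwab}), but this is not strictly necessary.
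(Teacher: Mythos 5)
Your proposal is correct and follows essentially the same route as the paper's own proof: both invoke Eq. (\ref{F_uwab}) for the action of $F$ on a basis state, extract the scalar phase $e^{\mathrm{i}\phi\cdot(b\oplus f_w(u))}$ from $E(\phi)$ acting on the last qubit, and then undo $F$ with $F^{\dagger}$. No gaps.
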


\begin{proof}
With Eq. (\ref{F_uwab}), we have
\begin{equation}\label{FEF}
\begin{split}
&F^{\dagger}\left(I^{\otimes \left(2^t+n\right)}\otimes E(\phi)\right)F\Ket{u,w}\Ket{a}\Ket{b}\\
=&F^{\dagger}\left(I^{\otimes \left(2^t+n\right)}\otimes E(\phi)\right)\Ket{u,w}\ket{a_{0^t}\oplus f_{0^t}(u)}\Ket{a_{0^{t-1}1}\oplus f_{0^{t-1}1}(u)}\cdots\Ket{a_{1^t}\oplus f_{1^t}(u)}\Ket{b\oplus f_w(u)}\\
=&e^{\mathrm{i}\phi\cdot (b\oplus f_w(u))}F^{\dagger}\Ket{u,w}\ket{a_{0^t}\oplus f_{0^t}(u)}\ket{a_{0^{t-1}1}\oplus f_{0^{t-1}1}(u)}\cdots\ket{a_{1^t}\oplus f_{1^t}(u)}\ket{b\oplus f_w(u)}\\
=&e^{\mathrm{i}\phi\cdot (b\oplus f_w(u))}\Ket{u,w,a,b}\\
=&S_{f}(\phi)\Ket{u,w,a,b},
\end{split}
\end{equation}
where   $u\in\{0,1\}^{n-t}$, $w\in\{0,1\}^{t}$, $a=a_{0^t}a_{0^{t-1}1}\cdots a_{1^t}\in \{0,1\}^{2^t}$ and $b\in \{0,1\}$.

%Therefore, the operator $S_{f}(\phi)$   can  be represented as  $F^{\dagger}\left(I^{\otimes \left(2^t+n\right)}\otimes E(\phi)\right)F$.

\end{proof}

%Define the function $g: \{0,1\}^{n+2}\rightarrow \{0,1\}$ as: 
%\begin{equation}\label{g(u,w,b,c)}
%\begin{split}
%g(u,w,b,c)=c\oplus\left(\lnot {\rm OR}(w(b\oplus {\rm OR}(u)))\right),
%\end{split}
%\end{equation}
%where  $u\in\{0,1\}^{n-t}$, $w\in\{0,1\}^{t}$, $b\in \{0,1\}$ and $c\in \{0,1\}$.

%In the following, we  give the  lemma about the equivalent representation of the operator $S_{0,H}(\phi)$ defined in Eq. (\ref{S_{0,H}}).

\begin{proposition}\label{pp_S0H}
r $S_{0,H}(\phi)$ defined in Eq. (\ref{S_{0,H}})  
can  be represented as:
\begin{equation}
\begin{split}
S_{0,H}(\phi)
=&\left(H^{\otimes n}\otimes I^{\otimes \left(2^t+1\right)}\right)S_0(\phi)\left(H^{\otimes n}\otimes I^{\otimes \left(2^t+1\right)}\right),
\end{split}
\end{equation}
where $S_0(\phi)$ is defined in Eq. (\ref{S_0}).
\end{proposition}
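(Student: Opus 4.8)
The plan is to mirror the proof of Proposition \ref{pp_Z0H}, replacing the phase inversion $Z$ throughout by the phase rotation $E(\phi)$. First I would invoke the action of $D$ on a computational basis state already recorded in Eq. (\ref{D_uwbcd}), namely $D\Ket{u,w}\Ket{b,c}\Ket{d}=\Ket{u,w}\Ket{b\oplus {\rm OR}(u),\,c\oplus(\lnot {\rm OR}(w(b\oplus {\rm OR}(u))))}\Ket{d}$. Sandwiching $I^{\otimes(n+1)}\otimes E(\phi)\otimes I^{\otimes(2^t-1)}$ between $D$ and $D^{\dagger}$, the middle operator inspects the transformed $c$-register and contributes the factor $e^{\mathrm{i}\phi}$ precisely when $c\oplus(\lnot {\rm OR}(w(b\oplus {\rm OR}(u))))=1$, i.e., when $g(u,w,b,c)=1$ for $g$ as in Eq. (\ref{g(u,w,b,c)}); since $D^{\dagger}D=I$, this gives $S_0(\phi)\Ket{u,w,b,c,d}=e^{\mathrm{i}\phi\cdot g(u,w,b,c)}\Ket{u,w,b,c,d}$, where $S_0(\phi)$ is the operator defined in Eq. (\ref{S_0}).

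Next I would recast this eigenvalue action in projector form, exactly as in Eq. (\ref{Z_0'2}) but with the scalar $e^{\mathrm{i}\phi}-1$ in place of $-2$: that is, $S_0(\phi)=I^{\otimes(n+2^t+1)}+(e^{\mathrm{i}\phi}-1)\Ket{0^{n+2^t+1}}\Bra{0^{n+2^t+1}}+(e^{\mathrm{i}\phi}-1)\!\!\sum_{\substack{uwbcd\neq 0^{n+2^t+1}\\ g(u,w,b,c)=1}}\!\!\Ket{u,w,b,c,d}\Bra{u,w,b,c,d}$. The index-set reduction is then identical to the one already carried out in Proposition \ref{pp_Z0H}: from the definition of $g$, the conditions $uwbcd\neq 0^{n+2^t+1}$ and $g(u,w,b,c)=1$ jointly force $bcd\neq 0^{2^t+1}$, while conversely $bcd\neq 0^{2^t+1}$ already implies $uwbcd\neq 0^{n+2^t+1}$, so the last sum may be restricted to $bcd\neq 0^{2^t+1}$ with $g(u,w,b,c)=1$.

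Finally I would conjugate both sides by $H^{\otimes n}\otimes I^{\otimes(2^t+1)}$. The identity term is invariant; the rank-one term built from $\Ket{0^{n+2^t+1}}$ becomes $\Ket{h}\Bra{h}$ because $\left(H^{\otimes n}\otimes I^{\otimes(2^t+1)}\right)\Ket{0^{n+2^t+1}}=\Ket{h}$ by Eq. (\ref{h}); and the remaining sum becomes $\Ket{h^{\perp}}\Bra{h^{\perp}}$ by the very definition of $\Ket{h^{\perp}}$ in Eq. (\ref{h^perp}). Matching the outcome against the definition of $S_{0,H}(\phi)$ in Eq. (\ref{S_{0,H}}) then yields $S_{0,H}(\phi)=\left(H^{\otimes n}\otimes I^{\otimes(2^t+1)}\right)S_0(\phi)\left(H^{\otimes n}\otimes I^{\otimes(2^t+1)}\right)$, which is the assertion. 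I anticipate no genuine obstacle here: every step is the phase-rotation analogue of a computation already performed for $Z_{0,H}'$, and the one point that needs a little care — the index-set reduction — has itself already been validated in the proof of Proposition \ref{pp_Z0H}.
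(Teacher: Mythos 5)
Your proposal is correct and follows essentially the same route as the paper's own proof: compute the action of $S_0(\phi)$ on computational basis states via Eq. (\ref{D_uwbcd}), rewrite it in projector form with the scalar $e^{\mathrm{i}\phi}-1$, perform the same index-set reduction from $uwbcd\neq 0^{n+2^t+1}$ to $bcd\neq 0^{2^t+1}$, and conjugate by $H^{\otimes n}\otimes I^{\otimes(2^t+1)}$ to recover $\Ket{h}\Bra{h}$ and $\Ket{h^{\perp}}\Bra{h^{\perp}}$. No gaps; this is precisely the phase-rotation analogue of Proposition \ref{pp_Z0H} that the paper itself carries out.
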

\begin{proof}
From Eq. (\ref{D_uwbcd}) and Eq. (\ref{S_0}), it follows that
\begin{equation}
\begin{split}
&S_{0}(\phi)\Ket{u,w}\Ket{b,c}\Ket{d}\\
=&D^{\dagger}\left(I^{\otimes (n+1)}\otimes E(\phi)\otimes I^{\otimes\left(2^t-1\right)}\right)D\Ket{u,w}\Ket{b,c}\Ket{d}\\
=&D^{\dagger}\left(I^{\otimes (n+1)}\otimes E(\phi)\otimes I^{\otimes\left(2^t-1\right)}\right)\Ket{u,w}\Ket{b\oplus {\rm OR}(u),c\oplus\left(\lnot {\rm OR}(w(b\oplus {\rm OR}(u)))\right)}\Ket{d}\\
=&e^{\mathrm{i}\phi\cdot \left(c\oplus\left(\lnot {\rm OR}(w(b\oplus {\rm OR}(u)))\right)\right)}D^{\dagger}\Ket{u,w}\Ket{b\oplus {\rm OR}(u),c\oplus\left(\lnot {\rm OR}(w(b\oplus {\rm OR}(u)))\right)}\Ket{d}\\
=&e^{\mathrm{i}\phi\cdot g(u,w,b,c)}\Ket{u,w,b,c,d},
\end{split}
\end{equation}
\iffalse
\begin{equation}
\begin{split}
&S_{0}(\phi)\Ket{u,w,b,c,d}\\
=&\left(V_1\otimes I^{\otimes 2^t}\right)\left(I^{\otimes (n-t)}\otimes V_2\otimes I^{\otimes\left(2^t-1\right)}\right)\left(I^{\otimes (n+1)}\otimes E(\phi)\otimes I^{\otimes\left(2^t-1\right)}\right)\\
&\left(I^{\otimes (n-t)}\otimes V_2\otimes I^{\otimes\left(2^t-1\right)}\right)\left(V_1\otimes I^{\otimes 2^t}\right)\Ket{u,w,b,c,d}\\
=&\left(V_1\otimes I^{\otimes 2^t}\right)\left(I^{\otimes (n-t)}\otimes V_2\otimes I^{\otimes\left(2^t-1\right)}\right)\left(I^{\otimes (n+1)}\otimes E(\phi)\otimes I^{\otimes\left(2^t-1\right)}\right)\\
&\left(I^{\otimes (n-t)}\otimes V_2\otimes I^{\otimes\left(2^t-1\right)}\right)\Ket{u,w,b\oplus {\rm OR}(u),c,d}\\
=&\left(V_1\otimes I^{\otimes 2^t}\right)\left(I^{\otimes (n-t)}\otimes V_2\otimes I^{\otimes\left(2^t-1\right)}\right)\left(I^{\otimes (n+1)}\otimes E(\phi)\otimes I^{\otimes\left(2^t-1\right)}\right)\\&\Ket{u,w,b\oplus {\rm OR}(u),c\oplus\left(\lnot {\rm OR}(w(b\oplus {\rm OR}(u)))\right),d}
\\
=&e^{\mathrm{i}\phi\cdot \left(c\oplus\left(\lnot {\rm OR}(w(b\oplus {\rm OR}(u)))\right)\right)}\left(V_1\otimes I^{\otimes 2^t}\right)\left(I^{\otimes (n-t)}\otimes V_2\otimes I^{\otimes\left(2^t-1\right)}\right)\\&\Ket{u,w,b\oplus {\rm OR}(u),c\oplus\left(\lnot {\rm OR}(w(b\oplus {\rm OR}(u)))\right),d}\\
=&e^{\mathrm{i}\phi\cdot g(u,w,b,c)}\left(V_1\otimes I^{\otimes 2^t}\right)\Ket{u,w,b\oplus {\rm OR}(u),c,d}\\
=&e^{\mathrm{i}\phi\cdot g(u,w,b,c)}\Ket{u,w,b,c,d},
\end{split}
\end{equation}
\fi
and therefore
\begin{equation}\label{S_02}
\begin{split}
S_0(\phi)=&I^{\otimes(n+2^t+1)}+\sum_{g(u,w,b,c)=1}(e^{\mathrm{i}\phi}-1)\Ket{u,w,b,c,d}\Bra{u,w,b,c,d}\\
=&I^{\otimes(n+2^t+1)}+(e^{\mathrm{i}\phi}-1)\Ket{0^{n+2^t+1}}\Bra{0^{n+2^t+1}}+\sum_{\substack{uwbcd\neq 0^{n+2^t+1}\\g(u,w,b,c)=1}}(e^{\mathrm{i}\phi}-1)\Ket{u,w,b,c,d}\Bra{u,w,b,c,d},%\\
%=&I^{\otimes(n+2^t+1)}+(e^{\mathrm{i}\phi}-1)\Ket{0^{n+2^t+1}}\Bra{0^{n+2^t+1}}+\sum_{\substack{bcd\neq 0^{2^t+1}\\g(u,w,b,c)=1}}(e^{\mathrm{i}\phi}-1)\Ket{u,w,b,c,d}\Bra{u,w,b,c,d},
\end{split}
\end{equation}
where   $u\in\{0,1\}^{n-t}$, $w\in\{0,1\}^{t}$, $b\in \{0,1\}$, $c\in \{0,1\}$  and  $d\in \{0,1\}^{2^t-1}$.

According to Eq. (\ref{g(u,w,b,c)}), when $uwbcd\neq 0^{n+2^t+1}$ and $g(u,w,b,c)=1$, we have $bcd\neq 0^{n+2^t+1}$ and $g(u,w,b,c)=1$. On the other hand, from $bcd\neq 0^{n+2^t+1}$,  we get $uwbcd\neq 0^{n+2^t+1}$. Thus, Eq. (\ref{S_02}) can be equivalently expressed as follows:
\begin{equation}
S_0(\phi)=I^{\otimes(n+2^t+1)}+(e^{\mathrm{i}\phi}-1)\Ket{0^{n+2^t+1}}\Bra{0^{n+2^t+1}}+\sum_{\substack{bcd\neq 0^{2^t+1}\\g(u,w,b,c)=1}}(e^{\mathrm{i}\phi}-1)\Ket{u,w,b,c,d}\Bra{u,w,b,c,d},
\end{equation}
where   $u\in\{0,1\}^{n-t}$, $w\in\{0,1\}^{t}$, $b\in \{0,1\}$, $c\in \{0,1\}$  and  $d\in \{0,1\}^{2^t-1}$.

By the above equations we can further derive 
\begin{align}
&\left(H^{\otimes n}\otimes I^{\otimes \left(2^t+1\right)}\right)S_0(\phi)\left(H^{\otimes n}\otimes I^{\otimes \left(2^t+1\right)}\right)\notag\\
=&
\left(H^{\otimes n}\otimes I^{\otimes \left(2^t+1\right)}\right)\Bigg(I^{\otimes(n+2^t+1)}+(e^{\mathrm{i}\phi}-1)\Ket{0^{n+2^t+1}}\Bra{0^{n+2^t+1}}
\notag\\&\left.+\sum_{\substack{bcd\neq 0^{2^t+1}\\ g(u,w,b,c)=1 }}(e^{\mathrm{i}\phi}-1)\Ket{u,w,b,c,d}\Bra{u,w,b,c,d}\right)\left(H^{\otimes n}\otimes I^{\otimes \left(2^t+1\right)}\right)\notag\\
=&
I^{\otimes(n+2^t+1)}+(e^{\mathrm{i}\phi}-1)\left(H^{\otimes n}\otimes I^{\otimes \left(2^t+1\right)}\right)\Ket{0^{n+2^t+1}}\Bra{0^{n+2^t+1}}\left(H^{\otimes n}\otimes I^{\otimes \left(2^t+1\right)}\right)\notag
\\&+\sum_{\substack{bcd\neq 0^{2^t+1}\\ g(u,w,b,c)=1 }}(e^{\mathrm{i}\phi}-1)\left(H^{\otimes n}\otimes I^{\otimes \left(2^t+1\right)}\right)\Ket{u,w,b,c,d}\Bra{u,w,b,c,d}\left(H^{\otimes n}\otimes I^{\otimes \left(2^t+1\right)}\right)\\
=&I^{\otimes(n+2^t+1)}+(e^{\mathrm{i}\phi}-1)\Ket{h}\Bra{h}+(e^{\mathrm{i}\phi}-1)\Ket{h^{\perp}}\Bra{h^{\perp}}\notag\\
=&S_{0,H}(\phi)\notag.
%\\
%=&I^{\otimes(n+2^t+1)}+(e^{\mathrm{i}\phi}-1)\left(\sin\theta\Ket{A'}+\cos\theta\Ket{B'}\right)\left(\sin\theta\Bra{A'}+\cos\theta\Bra{B'}\right)\\
%&+\sum_{\substack{bcd\neq 0^{2^t+1}\\ g(u,w,b,c)=1 }}(e^{\mathrm{i}\phi}-1)H^{\otimes n}\Ket{u,w}\Ket{b,c,d}\Bra{u,w}H^{\otimes n}\Bra{b,c,d}.
\end{align}

%Therefore, the operator $S_{0,H}(\phi)$   can  be represented as  $\left(H^{\otimes n}\otimes I^{\otimes \left(2^t+1\right)}\right)S_0(\phi)\left(H^{\otimes n}\otimes I^{\otimes \left(2^t+1\right)}\right)$.

\end{proof}

\subsection{Space and communication complexity analyses of algorithm\ref{DEMA.alg}}

%In this subsection, we analyze the space complexity of  the largest single computing node in the iterated operator of Algorithm \ref{DEMA.alg}. 
The following two results and corresponding proofs are the same as those for Algorithm \ref{DMA.alg}. 

\begin{proposition}
The number of qubits required for the largest single computing node in  
 Algorithm \ref{DEMA.alg} is $\max\{n-t+1, 2^t+t+1\}$.
\end{proposition}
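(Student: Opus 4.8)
The plan is to mirror the argument already given for Algorithm~\ref{DMA.alg}, since the iterated operator $-Q = S_{0,H}(\phi)S_{f}(\phi)$ of Algorithm~\ref{DEMA.alg} has exactly the same gate-level structure as $-G'$, the only change being that the single-qubit phase flip $Z$ is everywhere replaced by the single-qubit phase rotation $E(\phi)$ of Eq.~(\ref{E_phi}). Concretely, by Proposition~\ref{pp_Sf} we have $S_{f}(\phi)=F^{\dagger}\left(I^{\otimes(2^t+n)}\otimes E(\phi)\right)F$ with $F=\left(I^{\otimes(n-t)}\otimes U\right)O^*_{f}$, and by Proposition~\ref{pp_S0H} we have $S_{0,H}(\phi)=\left(H^{\otimes n}\otimes I^{\otimes(2^t+1)}\right)S_0(\phi)\left(H^{\otimes n}\otimes I^{\otimes(2^t+1)}\right)$ with $S_0(\phi)=D^{\dagger}\left(I^{\otimes(n+1)}\otimes E(\phi)\otimes I^{\otimes(2^t-1)}\right)D$ and $D=\left(I^{\otimes(n-t)}\otimes V_2\otimes I^{\otimes(2^t-1)}\right)\left(V_1\otimes I^{\otimes 2^t}\right)$. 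Hence the decomposition circuit for $-Q$ (Fig.~\ref{FD operator -Q}) is built from precisely the same sub-oracles $O_{f_w}$ $(w\in\{0,1\}^t)$ and the same operators $U$, $V_1$, $V_2$ as the decomposition of $-G'$ (Fig.~\ref{FD operator -G'}), differing only by the substitution $Z\mapsto E(\phi)$ on a single qubit.

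First I would record the per-node qubit counts, which are unaffected by the substitution because $E(\phi)$ is a diagonal single-qubit unitary acting on the same wire that $Z$ acted on and introduces no ancilla: the node implementing each $O_{f_w}$ uses $n-t+1$ qubits (the $n-t$ input wires carrying $u$, plus one target wire), the node implementing $U$ uses $2^t+t+1$ qubits (the $t$ index wires, the $2^t$ data wires $a_j$, and one target), the node for $V_1$ uses $n-t+1$ qubits, and the node for $V_2$ uses $t+2$ qubits. Next I would note, exactly as in the $G'$ case, that $H^{\otimes n}$ splits as $H^{\otimes(n-t)}\otimes H^{\otimes t}$ and can be realized on two separate nodes (as is already displayed in Step~1 of Algorithm~\ref{DEMA.alg}), so no single node ever needs all $n$ qubits for the Hadamards. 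Taking the maximum over all nodes then yields $\max\{n-t+1,\,2^t+t+1\}$.

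The only point requiring care is to confirm that replacing $Z$ by $E(\phi)$ genuinely costs nothing in space — i.e., that $E(\phi)$ is a single-qubit gate requiring no auxiliary register — and that the node partition in Fig.~\ref{FD operator -Q} coincides with that in Fig.~\ref{FD operator -G'} up to this one substitution. Both facts are immediate from the definitions, so the proof reduces to invoking the corresponding proposition for Algorithm~\ref{DMA.alg} and remarking that the substitution $Z\mapsto E(\phi)$ leaves every node's qubit count invariant; I expect no real obstacle here.
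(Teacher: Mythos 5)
Your proposal is correct and matches the paper's treatment: the paper simply states that the proof is identical to the one given for Algorithm~\ref{DMA.alg}, which is exactly the argument you reproduce (per-node counts $n-t+1$ for each $O_{f_w}$ and for $V_1$, $2^t+t+1$ for $U$, $t+2$ for $V_2$, with $H^{\otimes n}$ split across two nodes). Your added observation that the substitution $Z\mapsto E(\phi)$ is a single-qubit gate costing no extra space is the only point the paper leaves implicit, and it is handled correctly.
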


%\begin{proof}
%Since in Algorithm \ref{DEMA.alg}, its dominant operator is the iterative operator $Q$, it is sufficient to analyse the number of quabits required for the largest single computing node in the decomposition circuit that implements the operator $Q$.

%In Fig. \ref{FD operator -Q},  %i.e.,  the further decomposition of the circuit diagram for the operator $-Q$, the number of qubits required to realise a single computing node corresponding to each sub-oracle $O_{f_w}$ $(w\in\{0,1\}^t)$ is $n-t+1$, and 
%the number of qubits required to realize a single computing node corresponding to the operator $U$ is $2^t+t+1$.  The number of qubits required to implement a single computing node corresponding to the operators $V_1$ and $V_2$ are $n-t+1$ and $t+2$, respectively. As a result, the number of qubits required to implement a largest single computing node corresponding to the operator $Q$ is $\max\{n-t+1, 2^t+t+1\}$.

%Since in Algorithm \ref{DEMA.alg}, its first applied operator $H^{\otimes n}$ can be decomposed into the tensor product of $H^{\otimes (n-t)}$ and $H^{\otimes t}$, which are implemented on two different computing nodes, and its subsequent applied operator $Q$ can be decomposed into circuits such as the one in Fig. \ref{FD operator -Q} to be implemented, the number of qubits required to implement a largest single computing node in Algorithm \ref{DEMA.alg} is $\max\{n-t+1, 2^t+t+1\}$.
%\end{proof}

%\subsection{Communication complexity analysis of algorithm}

%The quantum communication complexity of Algorithm \ref{DEMA.alg} is analyzed in the following. 

\begin{proposition}
Let Boolean function $f:\{0,1\}^n\rightarrow\{0,1\}$, $f$ is divided into $2^t$ subfunctions $f_w:\{0,1\}^{n-t}\rightarrow\{0,1\}$, where $f_w(u)=f(uw)$, $u \in \{0,1\}^{n-t}$, $w \in \{0,1\}^t$, $1\leq t<n$.  The quantum communication complexity of Algorithm \ref{DEMA.alg} is $O\left(\sqrt{2^n}\left(2^t(n-t+1)+t\right)\right)$.
\end{proposition}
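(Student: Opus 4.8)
The plan is to reuse, essentially verbatim, the communication-counting argument already carried out for Algorithm~\ref{DMA.alg}, since the iterated operator $Q=-S_{0,H}(\phi)S_f(\phi)$ admits the same node decomposition as $-G'$. Concretely, I would first observe that the only difference between the two circuits is that the phase-inversion gate $Z$ appearing inside the decompositions $Z_f'=F^{\dagger}(I^{\otimes(2^t+n)}\otimes Z)F$ (Eq.~(\ref{Z_{f}'DE})) and $Z_0'=D^{\dagger}(I^{\otimes(n+1)}\otimes Z\otimes I^{\otimes(2^t-1)})D$ (Eq.~(\ref{Z_0'})) is replaced by the single-qubit phase-rotation gate $E(\phi)$ in $S_f(\phi)$ (Eq.~(\ref{S_{f}DE})) and $S_0(\phi)$ (Eq.~(\ref{S_0})); this substitution is purely local and transmits no qubit. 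Hence the circuit for $-Q$ in Fig.~\ref{FD operator -Q} has exactly the same inter-node communication pattern as the circuit for $-G'$ in Fig.~\ref{FD operator -G'}, and the count transcribes directly.

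Second, I would tally the communication of one application of $Q$ in the same three blocks as before. (i) In $F$, the $n-t$ qubits of Node~1 are forwarded along the chain of sub-oracle nodes $O_{f_{0^t}},O_{f_{0^{t-1}1}},\dots,O_{f_{1^t}}$ as control qubits, costing $O(2^t(n-t))$; then the $t$ index qubits of Node~2 together with the $2^t$ target qubits are gathered at the node implementing $U$ (Eq.~(\ref{U})), costing $O(2^t+t)$; the uncomputation $F^{\dagger}$ repeats the pattern, so this block costs $O(2^t(n-t+1)+t)$. (ii) In $S_{0,H}(\phi)=(H^{\otimes n}\otimes I^{\otimes(2^t+1)})S_0(\phi)(H^{\otimes n}\otimes I^{\otimes(2^t+1)})$ with $S_0(\phi)=D^{\dagger}(\cdots)D$, the operator $V_1$ (Eq.~(\ref{V1})) needs the $n-t$ qubits of Node~1 and $V_2$ (Eq.~(\ref{V2})) needs $t+1$ qubits, and the uncomputation is symmetric, so this block costs $O(n)$. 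Adding the blocks, one run of $Q$ costs $O(2^t(n-t+1)+n+t)$ in quantum communication.

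Finally, I would note that by Theorem~\ref{Thm_DEMA2} Algorithm~\ref{DEMA.alg} applies $Q$ exactly $K+1$ times with $K=\lfloor(\pi/2-\theta)/(2\theta)\rfloor$ and $\theta=\arcsin\sqrt{a/N}\ge\arcsin\sqrt{1/N}$, so $K+1=O(\sqrt{2^n})$; multiplying gives total communication $O\!\left(\sqrt{2^n}\,(2^t(n-t+1)+n+t)\right)$. Since the initialization imposes $1<t<\log_2 n-1$, we have $2^t\ge 4$ and $n-t+1\ge n/2$, hence $n=O(2^t(n-t+1))$ and the $n$ term is absorbed, yielding the claimed bound $O\!\left(\sqrt{2^n}\,(2^t(n-t+1)+t)\right)$. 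I do not expect a genuine obstacle: the content is bookkeeping, and the only points needing care are checking that the replacement $Z\mapsto E(\phi)$ (with the rotation angle $\phi$ fixed as in Algorithm~\ref{DEMA.alg}) leaves the communication layout of Fig.~\ref{FD operator -Q} identical to that of Fig.~\ref{FD operator -G'}, and the elementary estimate $K+1=O(\sqrt{2^n})$.
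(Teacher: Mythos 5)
Your proposal is correct and follows essentially the same route as the paper, which simply states that the proof is identical to the communication-complexity analysis of Algorithm~\ref{DMA.alg} (the per-iteration tallies $O(2^t(n-t))$, $O(2^t+t)$, and $O(n)$, multiplied by $O(\sqrt{2^n})$ iterations). Your added observations --- that replacing $Z$ by $E(\phi)$ is purely local and transmits no qubits, and that the $+n$ term is absorbed because $n=O(2^t(n-t+1))$ --- are correct and in fact make the argument slightly more complete than the paper's, which derives $O\left(\sqrt{2^n}(2^t(n-t+1)+n+t)\right)$ without explicitly reconciling it with the stated bound.
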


\section{Comparisons of  algorithms}\label{Sec5}

%the iterated operato
 For an unstructured  search problem described by a Boolean function $f:\{0,1\}^n\rightarrow\{0,1\}$, 
 we compare  our algorithms with  Grover's algorithm \cite{grover_fast_1996} and the modified Grover’s  algorithm by Long \cite{Long2001} as well as the distributed Grover's algorithm by  Qiu et al \cite{Qiu24}.% and Zhou et al's distributed  exact Grover's algorithm \cite{Zhou2023DEGrover}.
 
%First, we compare our  algorithms with Grover's  algorithm \cite{grover_fast_1996}. 
In Grover's algorithm, the number of qubits required for  the largest single computing node is $n$, and in  our algorithms it is $\max\{n-t+1,2^t+t+1\}$.  %Since our algorithms set $n>4$, $1<t<\log_2n-1$, the largest single computing node in our algorithms requires fewer qubits than Grover's algorithm.  
In addition, Grover's algorithm is not exact, but one of our  algorithms is exact.  %Of course,  Grover's algorithm does not require quantum communication, but both our algorithms require quantum communication and their quantum communication complexity 
 is $O\left(n^2\sqrt{2^n}\right)$.   

%Second, we compare our  algorithms  with the modified Grover’s algorithm by Long \cite{Long2001}. 

In the modified Grover’s algorithm by Long, the number of qubits required for  the largest single computing node is also $n$, but that  in our algorithms is $\max\{n-t+1,2^t+t+1\}$.  %Since our algorithms set $n>4$, $1<t<\log_2n-1$, the largest single computing node in our algorithms requires fewer qubits than the modified Grover’s algorithm by Long.  %Moreover, the modified Grover’s algorithm by Long can only solve single-objective search problem, while our algorithms can solve multi-objective search problem. 
%The  modified Grover’s algorithm by Long does not require quantum communication, but both our algorithms require quantum communication and their quantum communication complexity  is $O\left(n^2\sqrt{2^n}\right)$.   

\begin{table}[h] 
	\centering
	\caption{Comparisons of our algorithms with other  algorithms in \cite{grover_fast_1996, Long2001,Qiu24}.}
	\begin{tabular}{*{4}{c}}
		\toprule
		                 Algorithms      &  \makecell[c]{Number of qubits} & \makecell[c]{Probability of\\ success}%& Generalisability 
		                 &  \makecell[c]{Quantum communication \\ complexity}\\
		\hline
		   Grover's algorithm  \cite{grover_fast_1996}       &        $n$ & \makecell[c]{High but \\ smaller than 1} %&  Multi-objective 
		  & 0\\
		  \hline
		  \makecell[c]{Modified Grover's algorithm \\ by Long \cite{Long2001}}         &      $n$ & 1 %&  Single-objective      
		  &    0  \\	
		  \hline	
		  
 \makecell[c]{Distributed Grover's algorithm \\  by Qiu et al  \cite{Qiu24} }      &        $n-t$ & \makecell[c]{High but \\ smaller than 1}  %&  Multi-objective 
		  & 0\\
		  \hline
		  % The algorithm in \cite{Zhou2023DEGrover}       &       $n-t$ & Exact  &  Single-objective           \\
  Algorithm  \ref{DMA.alg}     &   $\max\{n-t+1,2^t+t+1\}$ & \makecell[c]{High but \\ smaller than 1}%&  Multi-objective         
		  &    $O\left(n^2\sqrt{2^n}\right)$          \\
		  \hline
 Algorithm    \ref{DEMA.alg}     &   $\max\{n-t+1,2^t+t+1\}$ & 1  %&  Multi-objective         
&       $O\left(n^2\sqrt{2^n}\right)$        \\
		\toprule
	\end{tabular}
\end{table}

% Third, we compare our  algorithms with 

In the distributed Grover's algorithm  proposed by Qiu et al \cite{Qiu24},  the number of qubits required for  the largest single computing node is  $n-t$, but it may have some tiny error, while one of our algorithms \ref{DEMA.alg}  is exact.

Of course, 
both our algorithms require quantum communication and their quantum communication complexity is $O\left(n^2\sqrt{2^n}\right)$, but  Grover's algorithm and the modified Grover’s algorithm by Long as well as  the  distributed Grover's algorithm  proposed by Qiu et al.  does not require quantum communication. 

\section{Conclusion}\label{Sec6}

In this paper, motivated and inspired by the methods of distributed Grover's algorithm  and distributed Simon's algorithm of Qiu and Tan et al. \cite{Qiu24,Tan2022DQCSimon}, and by using the basic ideas and technical processes of Grover's algorithm \cite{grover_fast_1996} and the modified Grover's algorithm of Long \cite{grover_fast_1996, Long2001}, we have proposed a distributed multi-objective quantum search algorithm and a distributed exact multi-objective quantum search algorithm. %Compared to Grover's algorithm, our algorithms have the advantage of requiring fewer qubits per computing node, and our distributed exact multi-objective quantum search algorithm is exact. Compared with the modified Grover's algorithm  by Long, our algorithms have the advantage that the single computing node requires fewer qubits. Compared with the distributed Grover's algorithm proposed by Qiu et al., our algorithms do not require the quantum counting algorithm, and our distributed exact multi-objective quantum search algorithm is exact. Compared with the distributed exact Grover's algorithm proposed by Zhou et al., our algorithms can solve the search problem in multi-objective case. 

Given an unstructured search problem described by Boolean function $f:\{0,1\}^n \rightarrow\{0,1\}$ with $|\{x\in\{0,1\}^n| f(x)=1\}|=a\geq 1$, for designing a distributed multi-objective quantum search algorithm,  we actually designed a distributed iterated operator $G'$.  More concretely, we decomposed  $f$ into $2^t$ sub-functions $f_w:\{0,1\}^{n-t}\rightarrow\{0,1\}$, where $w\in\{0,1\}^t$. Then, we united  $2^t$ query operators $O_{f_w}$ by employing  a unitary operator $U$, and combining with Pauli operator $Z$ to achieve the effect of the query operator $Z_f$ in Grover's algorithm. After that, 
we combined a unitary operator $V_1$ with another unitary operator $V_2$, and the Pauli operator $Z$ to achieve the effect of   diffusion operator  $H^{\otimes n}Z_0H^{\otimes n}$ in Grover's algorithm. 
By means of the modified Grover's algorithm of Long \cite{grover_fast_1996, Long2001}, we have further designed a distributed exact multi-objective quantum search algorithm and its method and process  is similar to the above one.

Compared to the previous related algorithms, our algorithms  have showed certain advantages. More specifically,  %for an unstructured  search problem described by a Boolean function $f:\{0,1\}^n\rightarrow\{0,1\}$, 
Grover's algorithm \cite{grover_fast_1996}   and the modified Grover’s algorithm by Long  \cite{Long2001}   require that the number of qubits  for  computing node is $n$, and  both both our algorithms require  $\max\{n-t+1,2^t+t+1\}$;  the distributed Grover's algorithm  proposed by Qiu et al \cite{Qiu24} requires  the number of qubits  for  the largest single computing node is  $n-t$, but as Grover's algorithm it is not completely exact, while one of  our algorithms   is exact.
Of course,  none of Grover's algorithm \cite{grover_fast_1996}   and the modified Grover’s algorithm by Long  \cite{Long2001}  as well as the distributed Grover's algorithm  by Qiu et al \cite{Qiu24} need  quantum communication, but both our algorithms require quantum communication and their quantum communication complexity  is $O\left(n^2\sqrt{2^n}\right)$.   However, our algorithms still may  have  certain advantage of  physical realizability in the  Noisy Intermediate-Scale Quantum (NISQ) era.  Also, in a way, we have further exerted and developed  the technical methods of designing distributed quantum algorithms by Qiu et al. \cite{Qiu24,Tan2022DQCSimon}.

%our distributed algorithms require fewer qubits for the largest single computing node than Grover's algorithm, and   our distributed exact algorithm  requires fewer qubits for the largest single computing node than the modified Grover’s algorithm by Long. In addition, our distributed exact  algorithm  is exact, but the distributed Grover's algorithm proposed by Qiu et al. may have tiny error as Grover's algorithm.  Of course,   our distributed algorithms require quite quantum communication,  but it still may  have  certain advantage of  physical realizability in the  Noisy Intermediate-Scale Quantum (NISQ) era.

Finally, some  problems worthy of further consideration are to design  distributed quantum amplitude amplification algorithm and  distributed quantum random walks algorithm, by appropriately using these technical methods of distributed quantum algorithms in \cite{Qiu24,Tan2022DQCSimon} and this paper.

\section*{Acknowledgments}
This work is supported in part by the National Natural Science Foundation of China (Grant No. 61876195), the Natural Science Foundation of Guangdong Province of China (Grant No. 2017B030311011), the Shenzhen Science and Technology Program (Grant No. JCYJ20220818102003006), and the Guangdong Provincial Quantum Science Strategic Initiative (No. GDZX2200001).

\bibliographystyle{unsrt}

\end{document}